\documentclass[letterpaper,twocolumn,10pt]{article}
\usepackage{usenix}
\usepackage{amsfonts,amssymb,amsmath,amsthm, booktabs}
\usepackage{algorithmicx,xcolor}
\usepackage{colortbl}
\usepackage{algorithm}
\usepackage{algpseudocode}
\usepackage{mdframed}
\usepackage{comment}
\usepackage{float}
\usepackage{amsthm}
\usepackage{amsmath,amssymb}
\usepackage{enumitem}
\setlistdepth{10}
\usepackage{booktabs}
\usepackage{graphicx} 
\usepackage{upgreek}
\usepackage{tikz}
\usetikzlibrary{
    arrows.meta,
    positioning
}
\usepackage{booktabs}
\usepackage{tabularx}
\usepackage{array}
\usepackage{multirow}
\usepackage{graphicx}
\usepackage{pgfplots}
\usepgfplotslibrary{colormaps}
\pgfplotsset{compat=1.18}

\usepackage{tikz}
\usepackage{amsmath}

\usepackage{enumitem} 
\newcommand\setItemnumber[1]{\setcounter{enum\romannumeral\@enumdepth}{\numexpr#1-1\relax}}

\usepackage{enumitem, amsmath, amssymb}
\usepackage{stmaryrd}
\usepackage{makecell}
\usepackage{tikz}
\usepackage{xurl}
\usepackage{hyperref}
\usepackage{todonotes}

\usepackage{varwidth}
\usepackage[operators,sets]{cryptocode}
\usepackage[breakable]{tcolorbox}
\usepackage{cuted}
\usepackage{subcaption}
\newcommand{\sysname}{\texorpdfstring{\textsc{Pim\ensuremath{\epsilon}nto}}{PIMεNTO}}
\newcommand{\Del}[1]{\ensuremath{\textsc{Del}_{#1}}}
\newcommand{\Ins}[1]{\ensuremath{\textsc{Ins}_{#1}}}
\newcommand{\Chg}[1]{\ensuremath{\textsc{Chg}_{#1}}}

\newcommand{\boundary}[1]{\ensuremath{\mathsf{B}_{#1}}}
\newcommand{\multiplicity}[2]{\ensuremath{\mathrm{m}_{#1}(#2)}}
\newcommand{\witness}[1]{\ensuremath{T_{E_{#1}}}}
\newcommand{\mufree}[1]{\ensuremath{\mu^{\mathrm{free}}_{i,#1}}}
\newcommand{\mufreemin}{\ensuremath{\mu^{\mathrm{free}}_{i,\min}}}
\newcommand{\Munocci}{\ensuremath{M^{\mathrm{unocc}}_i}}
\newcommand{\Unique}{\ensuremath{\mathsf{Unique}}}
\newcommand{\select}[2]{\ensuremath{\sigma_{#1}\!\left(#2\right)}}
\newcommand{\project}[2]{\ensuremath{\pi_{#1}\!\left(#2\right)}}
\newcommand{\projectselect}[3]{%
  \ensuremath{\pi_{#1}\!\left(\sigma_{#2}\!\left(#3\right)\right)}%
}

\newcommand{\GS}[1]{\ensuremath{\mathrm{GS}_{#1}}}
\newcommand{\LS}[1]{\ensuremath{\mathrm{LS}_{#1}}}
\newcommand{\RS}[1]{\ensuremath{\mathrm{RS}_{#1}}}

\newcommand{\NLQ}{\ensuremath{x}}                 
\newcommand{\PrivDB}{\ensuremath{D}}              
\newcommand{\SynDB}{\ensuremath{\widetilde D}}    
\newcommand{\SynMech}{\ensuremath{\mathcal{M}_{\mathrm{syn}}}} 
\newcommand{\CandSet}{\ensuremath{\mathcal{Q}(\NLQ,\Sch)}}      
\newcommand{\OptQ}{\ensuremath{q^\star}}          
\newcommand{\ResRS}[1]{\ensuremath{\RS{#1}(\SynDB)}} 
\newcommand{\PrivRels}{\ensuremath{\mathcal{R}_{\mathrm{priv}}}}
\newcommand{\PruneFrac}{\ensuremath{\alpha_{\mathrm{prune}}}}
\newcommand{\DistSet}[1]{\ensuremath{\mathfrak{S}_{#1}}}

\newcommand{\ActSetCount}{\textsc{Set\_Count}}
\newcommand{\ActAddRelation}{\textsc{Add\_Relation}}
\newcommand{\ActStop}{\textsc{Stop}}

\newcommand{\Incumbent}[1]{\ensuremath{\iota_{#1}}} 

\newcommand{\RewCorr}[1]{\ensuremath{R_{\mathrm{corr}}(#1)}}

\newcommand{\TSQL}{\texttt{T2SQL}}

\algnewcommand{\Input}{\State \textbf{Input:} }
\algnewcommand{\Output}{\State \textbf{Output:} }

\newenvironment{packeditemize}{
\begin{itemize}[
  leftmargin=0.4em,
  labelwidth=0em,
  labelsep=0.2em
]
  \setlength{\itemsep}{0pt}
  \setlength{\parskip}{0pt}
  \setlength{\parsep}{0pt}
}{\end{itemize}}

\newenvironment{packeditemize2}{
\begin{itemize}[
  leftmargin=1.8em,
  labelwidth=1.1em,
  labelsep=0.4em
]
  \setlength{\itemsep}{0pt}
  \setlength{\parskip}{0pt}
  \setlength{\parsep}{0pt}
}{\end{itemize}}

\newtheorem{definition}{Definition}
\newtheorem{corollary}{Corollary}
\newtheorem{lemma}{Lemma}

\newenvironment{mybox2}[1]{%
    \begin{tcolorbox}[title={#1}, 
    colback=white, breakable]%
    }{
    \end{tcolorbox}
}

\newcounter{algo}
\renewcommand{\thealgo}{\arabic{algo}}

\newcommand{\algotitle}[2]{\refstepcounter{algo}\label{#1}Alg.~\thealgo: #2}
\definecolor{cSel}{RGB}{212,237,218}
\definecolor{cDer}{RGB}{255,242,204}
\definecolor{cFix}{RGB}{204,229,255}
\definecolor{cDen}{RGB}{248,215,218}

\newcommand{\Sch}{\ensuremath{\mathcal S}}          
\newcommand{\Pol}{\ensuremath{\mathcal P}}          
\newcommand{\Vset}{\ensuremath{\mathcal V}}         

\newcommand{\Vsnd}[1]{\ensuremath{V_{#1}^{\mathrm{snd}}}}
\newcommand{\Vrec}[1]{\ensuremath{V_{#1}^{\mathrm{rec}}}}
\newcommand{\Vsub}[1]{\ensuremath{V_{#1}^{\mathrm{sub}}}}
\newcommand{\Vdat}[1]{\ensuremath{V_{#1}^{\mathrm{data}}}}
\newcommand{\Vth}[1]{\ensuremath{V_{#1}^{\theta}}}

\newcommand{\Vex}{\ensuremath{V_\rho^{\mathrm{exact}}}}

\newcommand{\Req}{\ensuremath{\rho}}
\newcommand{\ReqQuery}{\ensuremath{q_\rho}}

\newcommand{\Pos}{\ensuremath{\mathsf{Pos}}}
\newcommand{\Viol}{\ensuremath{\mathsf{Viol}}}

\newcommand{\dataset}[1]{\textsc{Context-D#1}}
\newcommand{\dpdataset}{\textsc{DP-D1}}

\newcommand{\public}{\ensuremath{\mathcal{A}_c(D)}}

\newcommand{\PrivRelsC}{\ensuremath{\PrivRels^{c}}}        
\newcommand{\CLS}[1]{\ensuremath{\mathrm{CLS}_{#1}}}       
\newcommand{\CLShat}{\ensuremath{\widehat{\mathrm{CLS}}}}  
\newcommand{\CRS}[1]{\ensuremath{\mathrm{CRS}_{#1}}}       

\newtheorem{theorem}{Theorem}
\hypersetup{
  pdftitle={\sysname: A Privacy Framework for Querying Text},
  pdfauthor={Mushtari Sadia, Ang Chen, Amrita Roy Chowdhury}
}
\begin{document}

\date{}


\title{\Large \bf \sysname: A Privacy Framework for Querying Text}

\author{
  {\normalfont Mushtari Sadia\textsuperscript{*} \qquad Ang Chen\textsuperscript{*} \qquad Amrita Roy Chowdhury\textsuperscript{*}}\\
  {\normalfont\textsuperscript{*}University of Michigan, Ann Arbor}
}

\maketitle

\begin{abstract}
Currently, there are two state-of-the-art, complementary privacy guarantees: contextual integrity (CI) for what may flow, and differential privacy (DP) for what may be inferred. Yet neither maps cleanly onto natural language, leaving existing approaches unable to provide these guarantees for analytics over unstructured text. We address this gap with \sysname{}, a framework that takes three forms of natural language: text corpus, queries, and privacy policies; and grounds them into a relational database, creating a common substrate on which both guarantees can be enforced formally. With this design, we not only provide end to end privacy guarantees, but also improvement to utility through three key contributions: \emph{DP aware Text-to-SQL}, which searches for correct queries requiring the least DP noise; \emph{CI aware Text-to-SQL}, which compiles natural language policies into executable CI rules over the database; and a new privacy definition we call \emph{contextual differential privacy}, which redefines the traditional DP neighborhood under CI, and yields a tighter smooth sensitivity bound. Across new benchmarks, \sysname{} selects the best query in 75.3\% of cases (upto +45 points over baselines) and achieves zero leakage under correct policy grounding. To our knowledge, \sysname{} is the first framework to provide formal privacy guarantees for natural language analytics under CI, DP, and their composition.

\end{abstract}

\section{Introduction}\label{sec:intro} 

Organizations increasingly sit on vast amounts of rich information that is trapped in unstructured text -- as much as 80\% of enterprise data~\cite{unstructured-data}. Hospitals store physician notes, admission reports, medication records, and discharge summaries; law firms maintain case files; companies retain internal communications; public agencies collect similarly sensitive textual data. Across these domains, many important analyses reduce naturally to statistical queries over entities and relationships described in text. 
Consider a hospital researcher asking: ``How many patients reported worsening symptoms after starting a new HIV medication?'' Answering this query requires reasoning and linking evidence scattered across a corpus of documents; the answer cannot simply be read from a single record. Such analyses could be enormously valuable for research, operations, and policy. 
Yet much of this value remains locked away; the underlying text is highly sensitive, its use is constrained by stringent legal and institutional requirements, such as HIPAA~\cite{wikipedia_hipaa} and GDPR~\cite{wikipedia_gdpr}, and meaningful analysis is only viable with strong privacy guarantees.

Formalizing privacy directly over free-form text, however, is fundamentally difficult. Sensitive information has no fixed representation: it may span sentences, appear through paraphrases, be implied indirectly, or emerge only by combining innocuous facts. There is no obvious ``unit'' that captures what must be protected. More fundamentally, privacy has two dimensions. First, it is contextual: whether information may be revealed depends on who is asking, what is queried, and under what relationship or policy. Second, even an appropriate aggregate query can reveal information about individuals~\cite{dwork2006differential}. No single privacy notion naturally captures both risks. Returning to our hospital example, the query about worsening symptoms may be appropriate for a medical researcher conducting an approved HIV study, but not for a budget analyst unauthorized to access patients' HIV information. Even the authorized researcher should not receive an exact answer if it could reveal whether a patient is in the queried population.

These risks correspond to two foundational privacy frameworks studied extensively, but independently. Contextual integrity (CI)~\cite{barth2006privacy} is normative and governs the appropriateness of an information flow based on the actors, information being queried, and relevant relationships and conditions. In our example, CI may permit the query for a researcher associated with an approved HIV study while denying the same query from a budget analyst. Differential privacy (DP)~\cite{dwork2006differential}, in contrast, limits information revealed by adding noise calibrated to each individual's influence on the query answer. Thus, the two guarantees answer different questions about the \emph{same query}: CI determines \textit{what} information may flow; DP determines \textit{how much} that flow reveals.

Unfortunately, existing approaches to enforcing either CI or DP are fundamentally \textit{incompatible} with natural language text. For CI, the formal framework is precise~\cite{barth2006privacy}: policies can be expressed in first-order logic over actors, attributes, relationships, and transmission principles. The difficulty is enforcing them over raw text, where these objects are implicit and must be inferred from context. Consequently, prior work has largely relied on LLMs or heuristic contextual classifiers to judge whether information flows are appropriate~\cite{mireshghallah2024can,ghalebikesabi2025privacy,tsai2025contextual,yi2025privacy,10.1145/3173574.3173842,lan2025contextual}, or on rewriting, minimization, and structured communication layers~\cite{bagdasarian2024airgapagent,siyan2024papillon,ngong2025protecting,abdelnabi1822firewalls,ghalebikesabi2025privacy}. These approaches are empirical and provide no end-to-end formal guarantee that the resulting disclosure satisfies the CI policy.

Differential privacy faces a similar barrier, as it requires a well-defined notion of an individual's contribution. These concepts are natural for structured data such as relational databases, but have no direct analogue in raw text, where information about an individual may be scattered across tokens, sentences, and documents. Existing approaches therefore resort to a weaker relaxation of DP, treating a token or embedding as the protected unit~\cite{meisenbacher-etal-2024-comparative,feyisetan2020privacy,xu-etal-2020-differentially,feyisetan-kasiviswanathan-2021-private}, which no longer corresponds to an individual. Even setting this mismatch aside, these mechanisms are impractical: DP in high-dimensional embedding spaces can require enough noise to destroy semantic utility~\cite{yue2021differentialprivacytextanalytics}, while local-DP sanitization yields coherent text only at privacy parameters too weak for meaningful protection~\cite{meisenbacher2025leveragingsemantictriplesprivate}. LLM-based rewriting is also heuristic and provides no formal privacy guarantee~\cite{klymenko-etal-2022-differential,hu-etal-2024-differentially,igamberdiev2023dp}.


\noindent \textbf{A new privacy substrate for text.}
Our key insight is to change the representation on which privacy is enforced. Rather than defining CI and DP directly over text, we introduce a structured representation, \textit{relational databases}, between natural language and privacy enforcement. This builds on a broader shift in academia~\cite{liang2026beyond,lin2026structure,deng-etal-2024-text,arora2025languagemodelsenablesimple} and industry~\cite{google_document_ai,microsoft_content_understanding,microsoft_document_intelligence,unstructured_extract,talonic,unstract_aws,snowflake_unstructured_data}, where unstructured text is mapped into a database through \emph{Text-to-DB}~\cite{squid}, and natural-language questions are compiled into executable SQL through \emph{Text-to-SQL}~\cite{zhu2024large,hong2025next,shi2025survey,mohammadjafari2024natural}. This paradigm offers the best of both worlds: users retain the flexibility of natural language, while decades of mature database machinery can be brought to bear. The need for structure has also been recognized at the policy level, as recent amendments to CCPA~\cite{reuters2022ccpa,congruity360ccpa} call for bringing ``structure to unstructured data'' to enable privacy analysis.

Our observation is that this same representation unlocks a qualitatively new benefit: \textit{formal privacy guarantees over text}. Relational databases provide the explicit objects that both CI and DP lack over raw language. For DP, the connection is immediate. Once a natural-language question is translated into SQL, its privacy cost becomes a mathematical property of an explicit query over an explicit database, allowing us to leverage mature DP theory for relational queries~\cite{johnson2018towards,wilson2019differentially,kotsogiannis2019privatesql,bater2018shrinkwrap}. For CI, we make a complementary observation. A CI norm specifies predicates over contextual parameters such as sender, recipient, information type, purpose, and relevant relationships, together with logical conditions governing when a flow is permitted. These predicates can be represented as database relations, while their conjunctions, disjunctions, existential conditions, and negations can be expressed using relational operators and SQL. We can therefore compile a CI norm into an \emph{executable SQL view} whose result is exactly the database tuples satisfying the norm. Unlike asking an LLM whether a disclosure ``looks appropriate,'' the policy becomes an explicit, mechanically enforceable authorization boundary. The LLM is used only to \textit{translate} natural language into the SQL view, which can be inspected and verified at compile time, rather than making privacy decisions at runtime.


\noindent\textbf{Our contributions.} We propose \sysname{}, to the best of our knowledge, the first framework to provide formal privacy guarantees over natural-language text under \emph{either} CI or DP, enabling a new class of privacy-preserving text analytics. \sysname{} is \emph{flexible}: applications can request DP alone, CI alone, or their composition, each formally operationalized. 
\begin{packeditemize}
\item \textbf{Differential privacy-aware Text-to-SQL.}
We formulate a new Text-to-SQL task for finding equivalent lower-sensitivity SQL and develop a new search procedure. We construct the first benchmark for this task, where \sysname{} finds a correct, minimum-sensitivity formulation in \textbf{75.3\%} of cases, outperforming baselines by up to \textbf{45 percentage points}.

\item \textbf{Contextual integrity-aware Text-to-SQL.}
We introduce \textsc{Text-to-CQL}, a new task that grounds natural-language CI norms into SQL views. To the best of our knowledge, this is the first approach to operationalize CI as database views for natural-language analytics. On our new benchmark, current LLMs achieve up to \textbf{89\% F1}.

\item \textbf{A formal integration of CI and DP.}
We introduce \emph{contextual differential privacy}, a new privacy definition integrating CI and DP by \textbf{reshaping the neighborhood relation} of standard DP, and characterize when it yields strictly tighter sensitivity bounds. While prior work has argued for such an integration~\cite{rachel-cummings-dp}, to the best of our knowledge, we are the first to formalize it as a concrete privacy definition.

\item \textbf{End-to-end privacy over text.}
Together, these components enable a new paradigm for privacy-preserving text analytics: natural language remains the interface, relational databases provide the formal substrate, and privacy enforcement is explicit and auditable rather than delegated to LLM judgments.
\end{packeditemize}

\section{Background}\label{sec:background}



\noindent \textbf{Contextual Integrity (CI).} CI defines privacy by whether information flows appropriately within a particular context. For example, a hospital policy may state that ``a doctor may disclose a patient's diagnosis to a researcher for an approved study,'' while prohibiting the same disclosure to a budget analyst. We use the formalization of Barth et al.~\cite{barth2006privacy}, representing a norm as $N_i=(\lambda_i,r_i^{\mathrm{snd}},r_i^{\mathrm{rec}}, r_i^{\mathrm{sub}},\tau_i,\pi_i,\theta_i)$, where $\lambda_i\in\{+,-\}$ indicates whether the flow is permitted or prohibited, $r_i^{\mathrm{snd}}$, $r_i^{\mathrm{rec}}$, and $r_i^{\mathrm{sub}}$ are the sender, recipient, and subject roles (doctor, researcher, and patient), $\tau_i$ is the data type (diagnosis), $\pi_i$ is the purpose (approved study), and $\theta_i$ specifies conditions on the flow. A flow is permitted if it matches a positive norm and no applicable negative norm prohibits it.

\noindent \textbf{Differential Privacy (DP).}
DP is the de facto standard for protecting individual privacy in aggregate queries.

\begin{definition}
Let $I\sim I'$ denote databases differing by one tuple in a private relation.
A randomized mechanism $\mathcal{M}$ is $(\varepsilon,\delta)$-differentially
private if, for every output set $O$ and $I\sim I'$,
\[
\Pr[\mathcal{M}(I)\in O]
\le e^\varepsilon\Pr[\mathcal{M}(I')\in O]+\delta.
\]
\end{definition}

A standard way to achieve DP is to add noise calibrated to query \emph{sensitivity}. For example, the Laplace mechanism releases $q(I)+\eta$, where $\eta\sim Lap(\Delta q/\varepsilon)$ and $\Delta q$ bounds the change in $q$ between neighboring databases. Global sensitivity, $\GS{q}=\sup_{I\sim I'}|q(I)-q(I')|$, takes the worst case over all database instances and can be prohibitively large for queries with joins. In contrast, local sensitivity $\LS{q}(I)=\sup_{I'\sim I}|q(I)-q(I')|$ considers only neighbors of the current instance, yielding a tighter bound, but cannot be used directly because its data-dependent noise scale may itself leak information. Smooth sensitivity~\cite{nissim2007smooth} addresses this with an instance-dependent bound that varies smoothly across neighbors, but is intractable to compute exactly for general multiway joins. We therefore use \emph{residual sensitivity} (RS)~\cite{dong2021residual}, an efficiently computable, order-optimal smooth upper bound for this query class, to calibrate noise throughout the paper.


\noindent\textbf{Residual sensitivity (RS).} RS is the tightest known efficiently computable smooth sensitivity bound for multiway join counting queries. It captures how a tuple change propagates through joins using data-dependent join multiplicities. Selections can be pushed to their corresponding base relations before computing RS, while projections do not affect the bound. Fig.~\ref{fig:rs} shows the intuition behind RS. Starting from the full join
$q$, we remove one relation, $Z$, to form the \emph{residual query} $q_E$.
The \emph{boundary} is where this residual query connects back to the removed
relation. Here, the boundary is $B$. The value $b_1$ appears three times in
$q_E$, so a tuple added to $Z$ with $B=b_1$ can join with three tuples.
This multiplicity of $3$ captures how much one tuple change can affect the
query answer.

Formally, for a counting query $q=\bowtie_{i=1}^{n}R_i$, RS considers a
residual query $q_E$ which joins a subset of relations
$E\subseteq[n]$. Its boundary $\boundary{E}$ contains the attributes
connecting $q_E$ to the remaining relations. For each private relation
$R_i$, let $E=[n]\setminus\{i\}$. For each boundary value $b$,
$\multiplicity{i}{b}$ is the number of tuples in $q_{E}$ matching $b$.
RS takes the largest such multiplicity:
\begin{equation}
T_{E_i}(I)
=
\max_{b\in\operatorname{dom}(\boundary{E})}
\multiplicity{i}{b}.
\label{eq:te}
\end{equation}

We denote this maximum by $\witness{i}$, the largest number of output tuples
one tuple in $R_i$ can affect. Thus, $\LS{i}=\witness{i}$ and
$\LS{q}(I)=\max_{i\in\PrivRels}\LS{i}$. Because local sensitivity can vary
sharply across neighbors, RS also bounds how these multiplicities grow with
database distance. Let $\mathbf{s}=(s_1,\ldots,s_{|\PrivRels|})$, where $s_i$
counts tuple changes to private relation $R_i$, and let $\DistSet{k}$ contain
all such vectors summing to $k$. RS computes
\begin{gather}
T_{E,\mathbf{s}}(I)
=
\sum_{A\subseteq E\cap\PrivRels}
T_{E\setminus A}(I)\prod_{j\in A}s_j,
\label{eq:te-distance-vector}\\
{\mathrm{LS}}_{q,\mathbf{s}}(I)
=
\max_{i\in\PrivRels}T_{E_i,\mathbf{s}}(I),
\label{eq:ls-distance-vector}\\
{\mathrm{LS}}_q^{(k)}(I)
=
\max_{\mathbf{s}\in\DistSet{k}}
{\mathrm{LS}}_{q,\mathbf{s}}(I),
\label{eq:ls-distance}\\
\text{Finally,} \;\RS{q}(I)
=
\max_{k\ge0}e^{-\beta k}
\min\{{\mathrm{GS}}_q,{\mathrm{LS}}_q^{(k)}(I)\},
\label{eq:residual-sensitivity}
\end{gather}
where $\beta$ is the smoothing parameter. We denote the maximizing distance
and vector by $k^\star$ and $\mathbf{s}^\star$.

\begin{figure}
    \centering
    \includegraphics[width=1\linewidth]{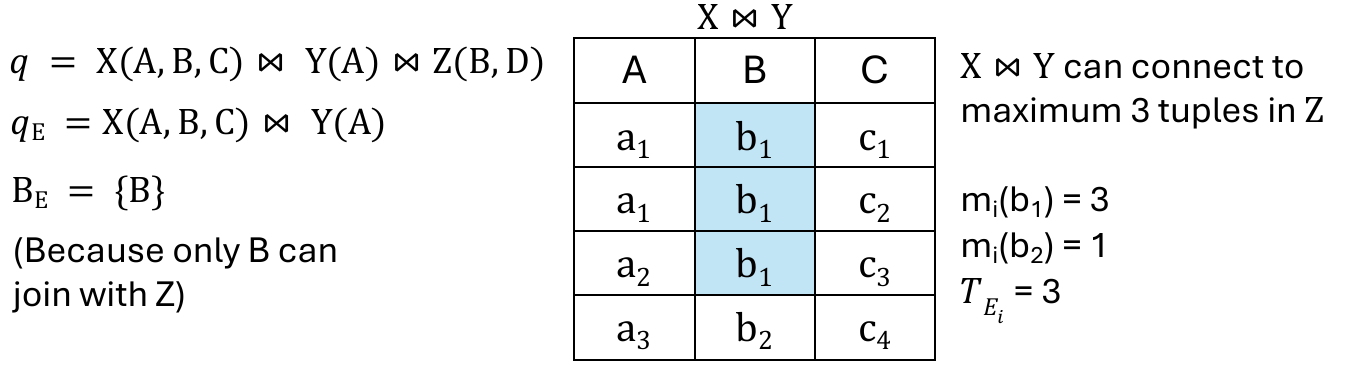}
    \caption{\textbf{RS.} Residual query, boundary, and multiplicity.}
    \label{fig:rs}
\end{figure}

\noindent\textbf{Key properties.}
We rely on three properties of RS:
\begin{packeditemize2}
\item[\textbf{(P1)}] \textbf{Query dependence.}
RS is defined over the complete query; adding a relation can increase or
decrease its sensitivity.

\item[\textbf{(P2)}] \textbf{Maximizer dependence.}
The final RS depends on the maximizing private relation, distance $k^\star$,
and distance vector $\mathbf{s}^\star$.

\item[\textbf{(P3)}] \textbf{Reusable intermediates.}
Computing RS produces boundary multiplicities $T_E$ for subsets of relations. We later reuse these in our search procedure.
\end{packeditemize2}
\begin{figure*}[t]
    \centering
    \includegraphics[width=1\linewidth]{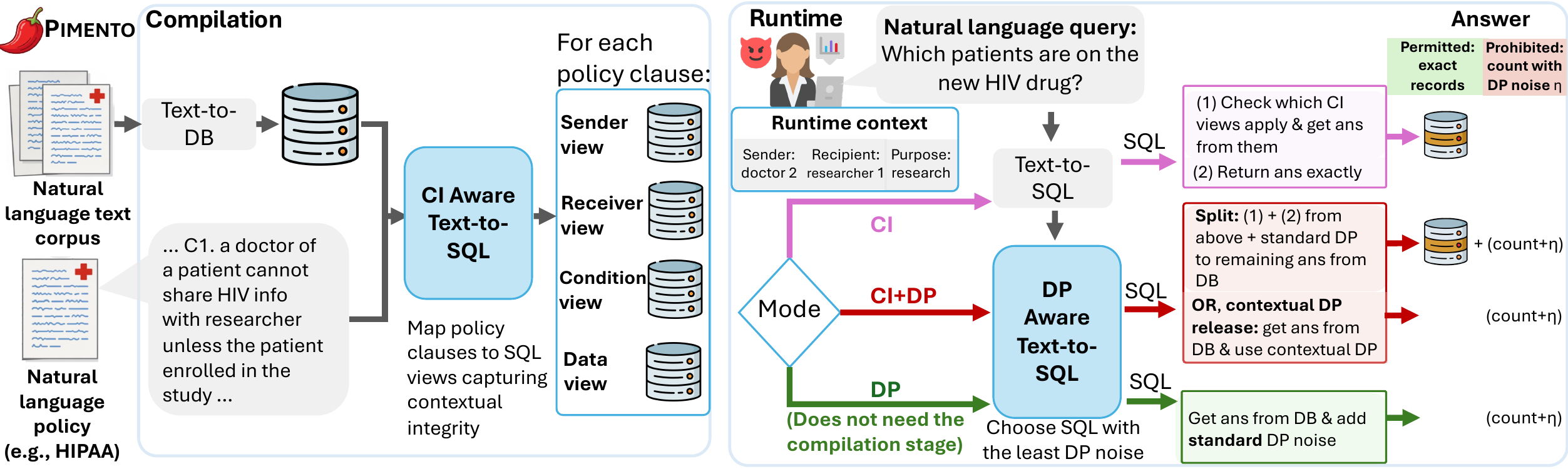}
   \caption{\footnotesize\textbf{\sysname{} overview}. \sysname{} takes three natural-language inputs: a text corpus, privacy policy, and query. During policy compilation, it (i) converts the text into a relational database and (ii) compiles the policy into SQL views representing contextual integrity norms using our \textbf{CI-aware Text-to-SQL} method. At runtime, given the query and context (sender, recipient, and purpose), \sysname{} supports three privacy-preserving release modes: (i) \emph{CI}, which releases authorized data exactly; (ii) \emph{CI+DP}, which either releases authorized data exactly and unauthorized data as a DP noisy count (\textit{split release}), or returns a noisy count over the full database using \textbf{contextual DP}, treating CI-authorized data as public; and (iii) \emph{DP}, which returns a noisy count with standard DP. CI modes use \textbf{CI-aware Text-to-SQL}, and DP modes use \textbf{DP-aware Text-to-SQL}. Our main contributions are the \textbf{CI-aware} and \textbf{DP-aware Text-to-SQL} mechanisms and the \textbf{definition of contextual DP}; Text-to-DB and standard Text-to-SQL are orthogonal components from prior work.}
    \label{fig:system}
\end{figure*}
\noindent \textbf{Text-to-SQL.} Recent state-of-the-art tree-based approaches to Text-to-SQL use Monte Carlo Tree Search (MCTS) to explore alternative SQL formulations during generation~\cite{li2025alpha,yuan2026mctssql}. MCTS balances \emph{exploitation}, i.e., revisiting promising regions of the search space, with \emph{exploration}, i.e., trying less-explored alternatives. In the tree, a node ($v$) corresponds to a partial or completed SQL query, and an edge ($a$) is an LLM reasoning action that produces a child node. Starting from a root node, MCTS tests whether the current node is fully explored (i.e., cannot have more children). If so, a child is chosen using a \textbf{selection rule}. Otherwise, an LLM performs \textbf{expansion}, proposing a batch of candidate children, from which one is selected randomly. This continues until a terminal node (a complete SQL query) is reached. The query is then evaluated, and a reward $Q(v,a)$ is assigned to the terminal node, typically based on SQL execution agreement. The reward is also backpropagated to every node along that query's branch, while updating the corresponding node and action visit counts $N(v)$ and $N(v,a)$. This constitutes one rollout, and rollouts continue until the budget is exhausted, after which we select a final query with the highest reward.

The selection rule is typically UCT~\cite{kocsis2006bandit}:

{\centering
$a^*=\arg\max_a\left[
\underbrace{\frac{Q(v,a)}{N(v,a)}}_{\text{exploitation}}
+c\underbrace{\sqrt{\frac{\log N(v)}{N(v,a)}}}_{\text{exploration}}
\right].$\par}

\noindent \textbf{Shapley Attribution.}
A game theoretic method~\cite{wikipedia_shapley_value} that rewards each decision by its contribution to the outcome.

\section{Problem Statement}\label{sec:setting}

\noindent \textbf{Motivating Scenario.} Our target use case is an organization holding volumes of sensitive unstructured text about individuals. The organization is the data custodian: it controls the corpus, defines its privacy policy, and mediates every query. Analysts may be internal or external, with different roles and access levels. The goal is to answer their analytical questions while enforcing policy and protecting the individuals described in the text.

We use a hospital as our running example. Much of its clinically valuable information is stored in free-form text, including physician notes, admission reports, medication records, and discharge summaries, and queried by diverse stakeholders. Natural language appears in three forms: (i) organizational \emph{data}; (ii) users' analytical \emph{questions}; and (iii) the \emph{privacy policy} governing access, such as HIPAA. To enable querying with formal privacy enforcement, we retain natural language at the interface but compute over a relational database. Specifically, we assume a Text-to-DB system has transformed the raw corpus into a database $D$ with schema $S$~\cite{squid,arora2025languagemodelsenablesimple,jiao-etal-2024-text2db}. We use the \textbf{schema below} throughout.

This setting raises three novel tasks, which we define next:
\emph{(i) Differential-privacy-aware Text-to-SQL}, generating SQL queries
optimized for low DP noise; \emph{(ii) Contextual-integrity-aware Text-to-SQL},
grounding CI policies into the database for enforcement; and
\emph{(iii) Integrating CI and DP}, composing both guarantees into end-to-end
release mechanisms.
\begin{center}
\begin{mybox2}{}
\scriptsize
\setlength{\tabcolsep}{0pt}
\renewcommand{\arraystretch}{1.35}
\begin{tabularx}{\linewidth}{@{}>{\itshape}l X@{}}
Clinical &
$\mathsf{Patient}(\mathsf{pat\_id},\ldots)$ \quad
$\mathsf{Visit}(\mathsf{pat\_id},\mathsf{diagnosis},\ldots)$ \newline
$\mathsf{MedicationOrders}(\mathsf{pat\_id},\mathsf{medication},\ldots)$ \quad
$\mathsf{Prescription}(\mathsf{pat\_id},\mathsf{medication},\ldots)$ \newline
$\mathsf{Discharge}(\mathsf{pat\_id},\mathsf{finaldiagnosis})$ \quad
$\mathsf{DxCategory}(\mathsf{diagnosis},\mathsf{hivstatus},\ldots)$
\\[4pt]
Roles &
$\mathsf{Employee}(\mathsf{employee\_id},\ldots)$ \newline
$\mathsf{TreatingRelationship}(\mathsf{employee\_id},\mathsf{pat\_id},\mathsf{role})$ \newline
$\mathsf{Supervises}(\mathsf{supervisor\_id},\mathsf{supervisee\_id})$ \newline
$\mathsf{CareTeamMember}(\mathsf{employee\_id},\mathsf{team\_id})$ \newline
$\mathsf{TeamAssignment}(\mathsf{team\_id},\mathsf{pat\_id})$
\\[4pt]
Research &
$\mathsf{Researcher}(\mathsf{researcher\_id},\ldots)$ \quad
$\mathsf{Study}(\mathsf{study\_id},\mathsf{researcher\_id})$ \newline
$\mathsf{StudyPatient}(\mathsf{study\_id},\mathsf{pat\_id})$
\\[4pt]
Constraint &
$\;\Gamma:\mathsf{DxCategory.diagnosis}\rightarrow
\mathsf{DxCategory.hivstatus}$
\end{tabularx}
\end{mybox2}
\captionof{table}{\footnotesize Schema for the running healthcare example. We assume standard primary and foreign key constraints.}
\label{tab:e2e-schema}
\end{center}

\subsection{Differential Privacy Aware Text-to-SQL}
Since user queries are in natural language, answering them over the relational database requires a Text-to-SQL model that maps the question to a SQL query. Standard Text-to-SQL maps a natural-language question $\NLQ$ and schema $S$ to a SQL query $q=\TSQL(S,\NLQ)$, whose answer $q(D)$ is released directly. We consider a different setting: the generated query is executed over a private database and its answer is released under differential privacy,
\begin{center}
   $q=\TSQL(S,\NLQ), \qquad
\hat a=q(D)+\eta,\quad
\eta\sim Lap\!\left(\frac{\RS{q}(D)}{\epsilon}\right).$ 
\end{center}

This seemingly simple change introduces two challenges. First, query generation
becomes part of the privacy mechanism: if the choice of $q$ depends on the
private database, it must itself satisfy DP. Second, utility depends on the
chosen query's sensitivity $\RS{q}(D)$, which determines the noise added to
its answer. Existing DP query-processing techniques sidestep the first
challenge by assuming $q$ is fixed and address the second \textit{post-hoc},
by reducing its sensitivity or release
noise~\cite{johnson2018towards,wilson2019differentially,kotsogiannis2019privatesql,bater2018shrinkwrap,dong2021residual}.
Text-to-SQL, however, presents a new opportunity: the query itself is
\textit{not} fixed.

\noindent\textbf{Ambiguity as Opportunity.}
We turn an inherent challenge of Text-to-SQL, ambiguity, into an opportunity
for improving privacy utility. While ambiguity about the user's intent is a
flaw, having multiple correct ways to express the same intent is a feature.
A natural-language question can admit multiple semantically correct SQL
formulations because schemas often represent the same relationship through
different tables, foreign keys, or join paths~\cite{amb1,amb2,amb3,amb4}.
Conventional Text-to-SQL has little reason to prefer among correct
formulations. Under DP, however, their differences can yield
substantially different sensitivities, and thus different noise.

Consider the query ``How many patients were prescribed drug $X$ by
cardiologists?'' The database supports two correct paths:
$\texttt{Prescription} \Join \texttt{Employee}$ and
$\texttt{MedicationOrder} \Join \texttt{Visit} \Join \texttt{Employee}$.
Both return the same answer, but can have very different sensitivities:
the first may associate one cardiologist with hundreds of prescriptions,
whereas the second may associate each visit with only a few medication orders.
Thus, different join paths induce different multiplicities and, consequently,
different sensitivity. Note that choosing the lowest-sensitivity formulation is \textbf{database-dependent}: neither path here is inherently better, and the question and schema alone cannot determine which has lower sensitivity.

These observations motivate a new problem that we call \textbf{DP-aware Text-to-SQL}: given a natural-language question $\NLQ$, schema $S$, and private database $D$, the goal is to select, among semantically correct SQL formulations, a query with low sensitivity while ensuring that the selection itself preserves differential privacy. In other words, we ask: \begin{quote} \vspace{-2mm}\textit{Can Text-to-SQL privately select the correct query that is cheapest to release under DP?}
\end{quote}

\subsection{Contextual Integrity Aware Text-to-SQL}\label{sec:setting:ci}
For CI, we are given a natural-language policy $\Pol$, the database constructed
from the raw text, and a user's query translated into SQL. The goal is to
answer the query while respecting the information flows permitted by $\Pol$.
Prior work commonly hands CI norms to an LLM, and asks it to decide what may be
revealed. This is fundamentally limited. First, authorization remains a model
judgment with no formal guarantee~\cite{pmlr-v267-shvartzshnaider25a,bagdasarian2024airgapagent}.
Second, LLMs make substantial contextual privacy errors: ConfAIde reports
private information disclosure in 57\% of cases~\cite{mireshghallah2024can}.
Third, placing the privacy decision inside the LLM exposes enforcement to
prompt-injection attacks~\cite{bagdasarian2024airgapagent,zhan2024injecagent}.
Our key observation is that CI norms are logical rules and should therefore
define the authorization boundary, rather than merely guide an LLM. We use the
LLM only to \textit{translate} natural-language norms into executable rules
over the schema; the final privacy decision is deterministic.

\noindent\textbf{Shared Representation as Opportunity.}
To make this translation possible, we make a key observation: CI norms and SQL share a common logical foundation. A CI
norm specifies who may access what information under
what conditions, which can be expressed as predicates over database
relations and attributes. Thus, once grounded in the schema, a CI norm can be
compiled directly into executable SQL.

\noindent\textbf{Proposed Workflow.}
Consider a hospital policy: an attending physician may disclose a patient's
diagnosis to a resident for treatment, only if the attending treats the patient
and supervises the resident. We can compile this clause into a CI norm, represented
by SQL views: a sender view
identifies attendings from \textsf{TreatingRelationship}, a recipient view
identifies residents from \textsf{Employee}, a data view locates diagnosis
information in \textsf{Visit}, and condition views encode the required
treatment and supervision relationships using join queries between these tables. Hereafter, a \emph{CI norm} refers
to this collection of SQL views representing its roles, data, purpose, and
conditions.

At runtime, these views determine which norms apply and the data they
authorize. Suppose resident
$20$ queries diagnosis data from physician $10$. We can identify norms whose data views cover the
queried information, bind physician $10$ and resident $20$ to their views, and execute them over the database. The
result contains only diagnoses satisfying the norm's conditions. Repeating
this for all applicable norms, and subtracting prohibited
data from permitted data, enables releasing the resulting authorized
data exactly.

This motivates the problem we call \textbf{contextual integrity aware Text to SQL}, or \textsc{Text-to-CQL} in short. \vspace{-2mm}
\begin{quote}
\textit{\textsc{Text-to-CQL}: Given a natural-language privacy policy $\Pol$, relational schema $\Sch$ and
a user query, how can we compile the policy into executable SQL views that,
instantiated with the query context at runtime, enforce exactly the information the
query is authorized to release under contextual integrity?} 
\end{quote}

\subsection{Integrating Contextual Integrity and Differential Privacy}


So far, we have defined tasks for enforcing CI and DP independently in our setting. Yet integrating the two can provide benefits that neither offers alone. Contextual integrity is \textit{normative}: it determines whether an information flow is appropriate. Differential privacy is \textit{descriptive}: it limits what can be learned about individuals from that flow through calibrated noise. both \emph{what} information may flow and \emph{how much} that flow may reveal. There has been growing interest in bringing these two perspectives together. Benthall and Cummings~\cite{rachel-cummings-dp} explicitly advocate integrating CI and DP, while related work on policy-aware DP shows that public or policy-specified information can be incorporated into sensitivity analysis~\cite{dp4sql,blowfish,kifer2014pufferfish}. What has been missing is an operational bridge: CI norms have traditionally remained logical specifications rather than executable objects that a DP mechanism can directly consume. \sysname{} provides this bridge. By compiling natural-language CI norms into executable SQL views (Sec.~\ref{sec:setting:ci}), we obtain a concrete, database-grounded authorization boundary that can be composed directly with DP.

Beyond making CI and DP independently composable within our end to end setting, we identify a further opportunity from their integration: improving DP utility. Standard DP protects all records uniformly, even when CI already authorizes some information to flow to the querier. But if that information is already appropriate to reveal, should it still be used for noise calibration? Our key insight here is to \textbf{revisit the notion of DP neighborhood} under contextual integrity. Since we can identify the records authorized by CI at runtime, we can treat those records as public and hold them fixed in the neighborhood relation. This can reduce local sensitivity by excluding changes to already authorized information. The remaining question is whether this reduction also yields a tighter smooth upper bound through residual sensitivity.

This leads to our third problem:
\begin{quote}\vspace{-2mm}
\textit{Can contextual integrity enable tighter smooth sensitivity bounds for differentially private query answering?}
\end{quote}

\section{\sysname{}: System Description}
\label{sec:name}

\sysname{} is a privacy framework for natural-language analytics over
unstructured text. To the best of our knowledge, \sysname{} is the first
framework to provide formal privacy guarantees for natural-language text under
either CI or DP, and to support their combination within a unified framework.

As motivated in Sec.~\ref{sec:setting}, we envision \sysname{} being deployed
by the data custodian, e.g., the hospital. It takes three natural-language
inputs: (i) the text corpus, (ii) a user query, and (iii) a privacy policy,
and enforces the requested privacy semantics before releasing an answer. It is
\emph{flexible}: applications may request DP alone, CI alone, or their
composition, and \sysname{} formally operationalizes each choice.

\sysname{} realizes these guarantees by grounding all three inputs in a common
relational database representation. The text corpus is converted into a
relational database using existing Text-to-Database techniques, while the
user's question is translated into SQL using Text-to-SQL. Our DP modes support
single and multiway join counting queries with selections and projections,
while CI alone supports arbitrary SQL queries. Database construction is
orthogonal to \sysname{}: it is a well-studied problem, and \sysname{} can
build on any existing Text-to-Database
system~\cite{squid,arora2025languagemodelsenablesimple,jiao-etal-2024-text2db,liang2026beyond,lin2026structure}.
Recent work already achieves over 95\% performance on this task~\cite{squid},
so we focus on the privacy layer over the resulting representation.
We develop DP-aware Text-to-SQL to optimize query generation for private
release, CI-aware Text-to-SQL to compile natural-language CI policies into
enforceable database views, and mechanisms for integrating the resulting
authorization boundary with DP. We describe each component next, followed by
guidelines for choosing among these guarantees. Fig.~\ref{fig:system} provides
an overview.

\subsection{Differential Privacy Aware Text-to-SQL}
\label{sec:dp-text2sql}

We start with two strawman solutions to build intuition.
\begin{packeditemize}
\item \textbf{Strawman Solution 1.} A natural approach is to fine-tune
models to favor low-sensitivity queries. However, no SQL
is universally low-sensitivity: residual sensitivity is \emph{data-dependent}. Thus, a model trained
on one set of databases cannot reliably identify the lowest-sensitivity
formulation on a new database, as our experiments confirm
(Sec.~\ref{sec:eval-rq1}).

\item \textbf{Strawman Solution 2.} Another approach is to generate the
top-$k$ candidates, privately evaluate their sensitivities, and select
the minimum. This has two limitations. First, evaluating more candidates
consumes privacy budget because residual sensitivity is
data-dependent. Second, the lowest-sensitivity formulation may lie outside the
top-$k$, using alternative relations, join paths, or predicates that receive
low model probability. Our experiments confirm that top-$k$ candidates often
lack this structural diversity (Sec.~\ref{sec:eval-rq1}). At the other extreme, exhaustive
search covering all structures is intractable.
\end{packeditemize}
These limitations require sensitivity optimization at \textit{runtime}, using
database-dependent feedback to guide search. This in turn creates two challenges:
 \textbf{(1) Use of private data-dependent
feedback}, since the search uses private database-dependent feedback, and \textbf{(2) Correctness-preserving sensitivity optimization}, since reducing
sensitivity must not come at the cost of semantic correctness.
Tree search is a natural fit: unlike top-$k$ decoding, MCTS uses feedback from
explored queries to search beyond the model's initial distribution. \sysname{}
extends MCTS by making \emph{sensitivity a first-class search objective},
steering search toward semantically correct, low-sensitivity formulations
while preserving privacy. We realize this in four stages, illustrated in Fig.~\ref{fig:mcts}.
\begin{figure*}[t]
    \centering
    \includegraphics[width=0.9\linewidth]{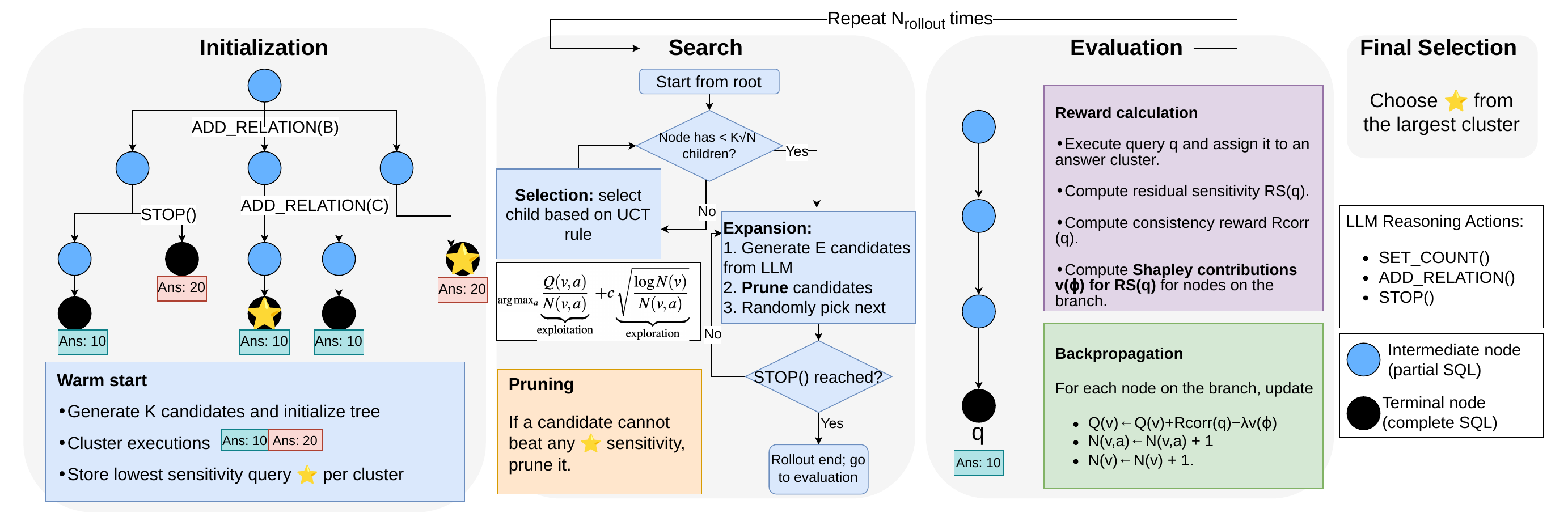}
    \caption{The four-stage Monte Carlo Tree Search procedure for \textbf{DP-aware Text-to-SQL} in \sysname{}.}
    \label{fig:mcts}
\end{figure*}

\subsubsection{Initialization}
\label{sec:dp-text2sql:notation}
At initialization, \sysname{} uses three techniques:

\noindent\textit{Private Synthetic Dataset.}
MCTS evaluates many candidates using two data-dependent signals:
(i) residual sensitivity for DP utility and (ii) result self-consistency
for semantic correctness. Repeatedly computing these on $\PrivDB$ would
accumulate privacy loss. Instead, \sysname{} pays for data dependence once by
generating a private synthetic database
$\SynDB \leftarrow
\SynMech(\PrivDB;\varepsilon_{\mathrm{syn}},\delta_{\mathrm{syn}})$
before search and using $\SynDB$ for all computations. This incurs
$(\varepsilon_{\mathrm{syn}},\delta_{\mathrm{syn}})$ once; all subsequent use
of $\SynDB$ is post-processing. We use PrivPetal~\cite{cai2025privpetal},
which preserves cross-table correlations in relational data.

\noindent\textit{DP-Compatible Search Tree.}
DP also constrains what MCTS may generate. Generic Text-to-SQL can introduce
arbitrary SQL operators, while our DP mechanism supports counting queries over
multiway joins with selections and projections (Sec.~\ref{sec:background}).
We therefore make the search DP-compatible by construction: each node is a
partial query, and each edge is restricted to one of three actions:
\begin{packeditemize}
\item $\ActSetCount(x)$ sets the aggregation to
$\gamma_{\operatorname{COUNT}(x)}$, where $x$ is $*$\footnote{
$\operatorname{COUNT}(*)$ returns the total number of result rows.} or an
attribute, with $*$ as the default.
\item $\ActAddRelation(R,J,F)$ adds relation $R$ with join predicate $J$ and
filter $F$. For the first relation, $E\leftarrow\sigma_F(R)$; otherwise,
$E\leftarrow\sigma_F(E\bowtie_JR)$.
\item $\ActStop()$ terminates construction and returns
\scalebox{0.8}{$q=\gamma_{\operatorname{COUNT}(x)}(E)$}.
\end{packeditemize}
\noindent\textit{Warm-start.}
Since zero shot Text-to-SQL is optimized for accuracy, we use \(K\) zero shot samples to warm start the tree and simplify the problem, decomposing each sample into our actions and inserting it as a tree path. We execute these queries on $\SynDB$ and group those with the same answer into \emph{execution clusters} $C_y$. Queries within a cluster agree on their answer but may differ structurally and in sensitivity. For each cluster, we maintain its lowest-sensitivity \emph{incumbent} $q_y$. Clusters and incumbents are updated as the search progresses.

\subsubsection{Search}
\label{sec:dp-text2sql:pruning}
As described in Sec.~\ref{sec:background}, MCTS alternates between selection and expansion to explore new query formulations. However, broad exploration for low-sensitivity formulations is computationally expensive, while naively expanding every branch quickly becomes intractable. \sysname{} therefore uses two complementary techniques to better focus the search:
\\\noindent\textit{Sensitivity-Guided Pruning.}
\sysname{} uses a greedy pruning strategy to eliminate branches whose next
action is unlikely to improve any formulation found so far. Consider a partial
query $q_v$ and a proposed equijoin to relation $R$ on $q_v.A=R.B$. We
consider the extension only when it is \emph{containment preserving} on
$\SynDB$, meaning every join value in $q_v$ has a match in $R$: $\pi_A(q_v(\SynDB))
    \subseteq
    \pi_B(R(\SynDB)).$
Although sensitivity can generally increase or decrease as relations are added
\textbf{(P1, Sec.~\ref{sec:background})}, containment preservation identifies
a special case: if $R$ retains every join value from $q_v$, adding $R$ cannot
lower sensitivity because it preserves the maximum join multiplicity. This is
a greedy condition, as the guarantee need not hold after subsequent actions.
Let $q_{v'}$ denote the extended query (with the proposed action) and
$\Incumbent{\max}=\max_y\Incumbent{y}$ the largest sensitivity incumbent among
the execution clusters found so far. We prune
when $\ResRS{q_{v'}} > \Incumbent{\max}.$
The proposed action has already raised sensitivity above every known cluster
incumbent, making the branch less promising. We delay pruning for a few
rollouts to allow the clusters and incumbents to stabilize.

\noindent\textit{Disallowing Duplicate Branches.}
Sibling nodes using the same relation key create redundant subtrees. We pass existing sibling keys as an exclusion set when generating children and discard any that reuse them.


\subsubsection{Evaluation}
\label{sec:dp-text2sql:correctness}
When a rollout reaches $\ActStop()$, it produces a completed query $q$, which
\sysname{} evaluates for correctness and sensitivity. We design two rewards:
\\\noindent\textbf{Correctness Reward.}
We estimate correctness using result self-consistency. After observing $M$ valid queries, we define
\begin{equation}
\scalebox{0.99}{$
    \RewCorr{q}
    =
    \frac{|C_{y(q)}|-1}{M-1},
    \qquad
    \RewCorr{q}=0 \text{ if } M\leq 1.
$}
\label{eq:consistency}
\end{equation}
Thus, queries supported by larger execution clusters receive higher reward.
\\\noindent\textbf{Sensitivity Reward.}
\label{sec:dp-text2sql:credit}
By \textbf{(P1)}, RS is defined only for a \emph{complete} query, whereas MCTS
constructs queries one relation at a time. Scoring partial queries is
insufficient because adding a relation can raise or lower sensitivity. A
terminal reward such as $1/(1+\log\ResRS{q})$ is also too coarse: it assigns every action the same
signal despite their potentially different effects. We also confirm that this performs poorly (Sec.~\ref{sec:eval-rq1}).

We instead treat sensitivity feedback as a relation-level credit-assignment
problem. Viewing the relations of $q$ as players in a cooperative game with
value $\ResRS{q}$, we use Shapley attribution~\cite{wikipedia_shapley_value} to
assign each relation a contribution $\phi_i$. These contributions sum to
$\ResRS{q}$ and are independent of construction order. We (i) define a value
function over relation subsets using quantities already computed by RS,
(ii) compute each relation's Shapley contribution, and (iii) normalize and
negate these contributions so sensitivity-increasing relations are penalized
and sensitivity-reducing ones rewarded. This converts query-level RS into an action-level signal for MCTS.
\smallskip
\\\noindent\textit{Sensitivity Value Function.}
Consider a completed query \scalebox{0.9}{$q=\bowtie_{i=1}^{n}R_i$} with relation indices
\scalebox{0.9}{$V=[n]$}. For distance vector $\mathbf{s}$,
\begin{equation*}
\scalebox{0.9}{$\ResRS{q}
=
\max_{k\geq 0}
e^{-\beta k}
\min\left\{
    {\mathrm{GS}}_q,\,
    \max_{\mathbf{s}\in\DistSet{k}}
    \max_{i\in\PrivRels}
    T_{V\setminus\{i\},\mathbf{s}}(\SynDB)
\right\}.$}
\label{eq:rs-full}
\end{equation*}
(Sec.~\ref{sec:background}). For each residual query $q_E$, RS computes
\begin{equation*}
\scalebox{1}{$
T_{E,\mathbf{s}}(\SynDB)
=
\sum_{A\subseteq E\cap\PrivRels}
T_{E\setminus A}(\SynDB)
\prod_{j\in A}s_j.
$}
\label{eq:rs-te-expansion}
\end{equation*}
Thus, terminal RS already computes maximum boundary multiplicities $T_E$ for
relation subsets \textbf{(P3)}. \sysname{} caches and reuses them
to attribute to individual relation choices.

By \textbf{(P2)}, the terminal RS depends on its maximizing distance $k^\star$ and distance vector $\mathbf{s}^\star$. We therefore retain these maximizing values from the terminal computation and, holding them fixed, define for each $U\subseteq V$
\begin{equation}
\scalebox{0.9}{$
W(U)
=
e^{-\beta k^\star}
\min\left\{
    {\mathrm{GS}}_q,\,
    \max_{i\in U\cap\PrivRels}
    T_{U\setminus\{i\},\mathbf{s}^\star}(\SynDB)
\right\}.
$}
\label{eq:rs-vector}
\end{equation}
where the maximum over an empty set is defined as zero. Hence
$W(\emptyset)=0$ and, because the terminal maximizing values are retained,
$W(V)=\ResRS{q}$.

Importantly, $W(U)$ is not the residual sensitivity of the subquery
containing only $U$. Instead, it measures the contribution of relations in
$U$ to the fixed terminal residual-sensitivity computation: we retain
$k^\star$, $\mathbf{s}^\star$, and ${\mathrm{GS}}_q$ rather than
re-optimizing residual sensitivity for every subset. We refer to
$\{W(U):U\subseteq V\}$ as the sensitivity vector.

\noindent\textit{Shapley attribution.} With the value function $W$, we divide the total sensitivity $W(V)$ among the relations. A relation can affect residual sensitivity by changing either the maximum boundary multiplicities, or which private relation attains the maximum in Eq.~\ref{eq:rs-vector}. Shapley attribution captures both \emph{independently of the order} in which MCTS constructs the query, so each relation's contribution reflects its intrinsic effect rather than the search trajectory. For relation $R_i$, its Shapley contribution is
\begin{equation}
\scalebox{0.9}{$
\phi_i
=
\sum_{U\subseteq V\setminus\{i\}}
\frac{|U|!(n-|U|-1)!}{n!}
\left[
    W(U\cup\{i\})-W(U)
\right].
$}
\label{eq:shapley-rs}
\end{equation}
Here $U$ is a subset of relations that could already be present before $R_i$ is
added, and $W(U\cup\{i\})-W(U)$ measures how much the terminal computation changes
when $R_i$ joins $U$. Because this effect depends on which relations are already
present, we average it over every $U\subseteq V\setminus\{i\}$. The weight
$\frac{|U|!(n-|U|-1)!}{n!}$ is the fraction of the $n!$ relation orderings in which
exactly $U$ precedes $R_i$ ($|U|!$ orders before, $(n-|U|-1)!$ after), so $\phi_i$
is $R_i$'s order-independent contribution to residual sensitivity.

Since $W(\emptyset)=0,$ $W(V)=\ResRS{q}$, Shapley efficiency gives
\begin{equation*}
\scalebox{0.9}{$
\sum_{i\in V}\phi_i
=
W(V)-W(\emptyset)
=
\ResRS{q}.
$}
\label{eq:shapley-efficiency}
\end{equation*}
Thus, the Shapley values exactly decompose the residual sensitivity of the
full query into relation-level contributions.

\smallskip
\noindent\textit{Example.}
Consider $q=x\Join y\Join z$, where all three relations are private. If a
tuple in $x$ changes, the remaining relations $\{y,z\}$ determine how much
that change can be amplified. At the distance vector which resulted in the maximum residual sensitivity of the final query
$\mathbf{s}^\star$, this bound is
\begin{equation*}
\scalebox{0.9}{$
T_{\{y,z\},\mathbf{s}^\star}(\SynDB)
=
T_{\{y,z\}}(\SynDB)
+s_y^\star T_{\{z\}}(\SynDB)
+s_z^\star T_{\{y\}}(\SynDB)
+s_y^\star s_z^\star T_{\emptyset}(\SynDB).
$}
\label{eq:rs-te-xyz}
\end{equation*}
Changing a tuple in $y$ or $z$ analogously gives
$T_{\{x,z\},\mathbf{s}^\star}(\SynDB)$ and
$T_{\{x,y\},\mathbf{s}^\star}(\SynDB)$, respectively. The maximum over
these cases determines the local-sensitivity term used by
$W(\{x,y,z\})$.

The Shapley contribution of $z$ then averages its marginal effect across
all possible contexts:
\begin{equation*}
\scalebox{0.9}{$
\begin{aligned}
\phi_z
={}&
\tfrac13[W(\{z\})-W(\emptyset)]
+\tfrac16[W(\{x,z\})-W(\{x\})] \\
&+\tfrac16[W(\{y,z\})-W(\{y\})]
+\tfrac13[W(\{x,y,z\})-W(\{x,y\})].
\end{aligned}
$}
\label{eq:shapley-z}
\end{equation*}
For example, $W(\{x,z\})-W(\{x\})$ measures the effect of adding $z$
when $x$ is already present. The coefficients account for how often each
term occurs across all possible relation orderings. Thus, $\phi_z<0$
means that adding $z$ lowers sensitivity on average, whereas $\phi_z>0$
means that it raises sensitivity.

\noindent\textit{Sensitivity credit.}
We normalize each Shapley contribution as
$\nu(x)=\operatorname{sgn}(x)\log(1+|x|)/\log(1+S_{\mathrm{ref}})$, where
$S_{\mathrm{ref}}$ is the median warm-start RS, compressing its magnitude
while preserving its sign. If action $a_t$ adds relation $R_i$, it receives
\begin{equation}
\scalebox{0.9}{$
R_{\mathrm{sens}}(i,q)
=
-\lambda_s\nu(\phi_i).
$}
\label{eq:sensitivity-credit}
\end{equation}
where $\lambda_s$ controls its weight.

\noindent\textbf{Backpropagation.} The two rewards are backpropagated at different granularities. Correctness is a query-level signal and is propagated along the entire trajectory. Sensitivity is relation-specific and is assigned only to the edge that introduced the corresponding relation.
Thus, for an action $a_t$ we update:
\begin{equation}
\scalebox{0.9}{$
\begin{aligned}
Q(v_{t-1},a_t)
&\leftarrow
Q(v_{t-1},a_t)
+
\RewCorr{q}
-
\lambda_s\nu(\phi_i),\\
N(v_{t-1},a_t)
&\leftarrow
N(v_{t-1},a_t)+1,\\
N(v_{t-1})
&\leftarrow
N(v_{t-1})+1.
\end{aligned}
$}
\label{eq:structured-backup}
\end{equation}
For actions that do not add a relation, only $\RewCorr$ is
applied. Thus, every action receives credit for correctness, while relation-adding actions also get credit for sensitivity reduction, letting MCTS learn which choices yield lower sensitivity.
\subsubsection{Final Selection}
\label{sec:dp-text2sql:finalselect}

At the end of search, we select the lowest sensitivity incumbent query from the largest cluster $
    y^\star=\arg\max_y |C_y|,
    \qquad
    \OptQ=q_{y^\star}.$ We then compute $\RS{\OptQ}(\PrivDB)$ on the private database to calibrate
noise and release the answer. Thus, by choosing the largest cluster first, correctness takes priority over
low sensitivity. Algorithm~\ref{alg:dp-mcts} in App.~\ref{app:alg} summarizes
the complete search.



\begin{theorem}[DP of Text-to-SQL]
\label{thm:end-to-end-dp}
If the synthetic release is \scalebox{0.8}{$(\varepsilon_{\mathrm{syn}},\delta_{\mathrm{syn}})$}-DP and the final query release is \scalebox{0.8}{$(\varepsilon_{\mathrm{rel}},\delta_{\mathrm{rel}})$}-DP, then \sysname{} is
\scalebox{0.8}{$(\varepsilon_{\mathrm{syn}}+\varepsilon_{\mathrm{rel}},
\delta_{\mathrm{syn}}+\delta_{\mathrm{rel}})$}-DP (proof in App.~\ref{app:proofs}).
\end{theorem}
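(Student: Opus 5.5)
The argument is a two-stage adaptive composition: view \sysname{} as a first mechanism that touches $\PrivDB$ to produce $\SynDB$, then a second mechanism that takes $\SynDB$ as an auxiliary public input and touches $\PrivDB$ once more to release the answer. Everything between the two steps is post-processing of $\SynDB$ and consumes no additional budget.

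\textbf{Step 1 (synthetic release).} The first stage is $\SynDB\leftarrow\SynMech(\PrivDB;\varepsilon_{\mathrm{syn}},\delta_{\mathrm{syn}})$, which is $(\varepsilon_{\mathrm{syn}},\delta_{\mathrm{syn}})$-DP by hypothesis (PrivPetal).

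\textbf{Step 2 (search as post-processing).} Define $f$ to be the entire search of Sec.~\ref{sec:dp-text2sql}, mapping $\SynDB$ to $\OptQ$. This includes warm-start sampling, the execution clusters $C_y$, the incumbents, sensitivity-guided pruning via $\ResRS{\cdot}$, the Shapley credits from Eq.~\ref{eq:shapley-rs}, the UCT updates, and the final selection $\OptQ=q_{y^\star}$. I will check that every data-dependent quantity used in these steps is evaluated on $\SynDB$ only. The other inputs are the LLM, $\NLQ$ and $\Sch$, all of which are independent of $\PrivDB$. The LLM's internal randomness is independent of $\PrivDB$ as well. Hence $f$ is a (possibly randomized) function of $\SynDB$ alone, and the pair $(\SynDB,\OptQ)$ is still $(\varepsilon_{\mathrm{syn}},\delta_{\mathrm{syn}})$-DP by the post-processing property. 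I expect this step to carry the real content and to be the main obstacle. The task is an audit of Algorithm~\ref{alg:dp-mcts} confirming that no step, in particular the execution-agreement correctness reward of Eq.~\ref{eq:consistency}, ever queries $\PrivDB$. I will state this explicitly as an invariant of the algorithm.

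\textbf{Step 3 (release).} For each fixed query $q$, the second stage releases $q(\PrivDB)+\eta$ with noise calibrated to $\RS{q}(\PrivDB)$; more generally, it is the residual-sensitivity release mechanism of~\cite{dong2021residual}. By hypothesis this is $(\varepsilon_{\mathrm{rel}},\delta_{\mathrm{rel}})$-DP for every fixed $q$. I will make precise that the hypothesis must hold uniformly over all queries $q$ in the DP-compatible search space. This uniformity is what licenses adaptive composition, since $q=\OptQ$ is chosen based on the first output.

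\textbf{Step 4 (composition).} Apply the basic adaptive composition theorem for approximate DP. For a fixed output $\SynDB$ of the first stage, the conditional privacy loss of the second stage is bounded. To combine the two stages, fix neighbors $I\sim I'$ and an output set $O$ of pairs. Split the probability over the first output and use the standard argument that bounds the $\delta$ terms separately (e.g., via the privacy-loss/coupling characterization of $(\varepsilon,\delta)$-DP). This yields
\[
\Pr[\mathcal{M}(I)\in O]\le e^{\varepsilon_{\mathrm{syn}}+\varepsilon_{\mathrm{rel}}}\Pr[\mathcal{M}(I')\in O]+\delta_{\mathrm{syn}}+\delta_{\mathrm{rel}}.
\]
Rather than reprove this bound, I will cite the standard composition theorem. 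Finally, since the neighborhood relation $I\sim I'$ (one tuple in a private relation) is the same for both mechanisms, the composition is valid.
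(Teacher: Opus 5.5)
Your proposal is correct and matches the paper's proof. The paper likewise notes that $\PrivDB$ is accessed only by the synthetic release and the final release, treats the whole search (including the selection of $\OptQ$) as post-processing of $\SynDB$, observes that $\OptQ$ is fixed conditioned on $\SynDB$, and concludes by adaptive composition. Your explicit requirement that the release guarantee hold uniformly over every query in the DP-compatible class is a useful precision; the paper leaves it implicit here and states it only in the proof of Corollary~\ref{cor:end-to-end}.
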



\subsection{Contextual Integrity Aware Text-to-SQL}
\label{sec:citosql}
\sysname{} takes a natural-language privacy policy $\Pol$ and database schema
$\Sch$ and compiles the policy into SQL views encoding what information may be
disclosed, to whom, and under what conditions. We call this task
\textsc{Text-to-CQL}. At runtime, \sysname{}
instantiates these views with the query context to derive $\Vex$, the exact
information authorized for disclosure. This design removes the LLM from the privacy decision: the LLM only translates policy text into SQL, while authorization is enforced symbolically over the
compiled views. This avoids relying on LLM judgment for privacy decisions and
enables systematic validation of its output, including detecting compilation
errors when a policy clause cannot be represented in the database.

This introduces three challenges: \textbf{(1) Grounding symbolic norms},
mapping policy concepts such as roles, information, and conditions to
concrete database operations; \textbf{(2) Resolving interactions across norms}, ensuring every policy clause is represented (coverage), norms do not conflict or duplicate one another
(satisfiability and non-redundancy), and permitted information does not imply
prohibited information (information implication);
and \textbf{(3) Enforcing context at runtime}, determining authorization for
each query based on its context and database state. \sysname{} addresses these
through three steps: \emph{policy-to-view translation} compiles
CI norms into SQL views, \emph{view canonicalization} validates and resolves
interactions among them, and \emph{runtime enforcement} instantiates them to
derive the authorized information $\public$. The first two run once per policy
and schema, while enforcement runs per query. We consider the following running policy:
\vspace{-1.5mm}
\begin{quote}
An attending physician may disclose a patient's diagnosis to a resident
for treatment if the resident is supervised by the attending and is on the
patient's care team.
\end{quote}

\subsubsection{Policy-to-View Translation.}
The core of \textsc{Text-to-CQL} is translating each natural-language policy
clause into executable database views. \sysname{} first extracts a structured CI norm $N_i$ from each policy clause,
following prior work~\cite{shvartzshanider2023beyond,fan2024goldcoin,li2025privacy},
and then grounds each component in $\Sch$ using an LLM. We provide the norm,
source clause, schema, and examples for three types of views. \emph{Role views}
($\Vsnd{},\Vrec{},\Vsub{}$) identify valid sender, recipient, and subject
bindings; in our example, they map the attending and resident roles to
\textsf{Employee}. \emph{Condition views} ($\Vth{}$) encode the relationships
required by the policy. These may require a lookup, such as the attending
relationship in \textsf{TreatingRelationship}; a join, such as care-team
membership through \textsf{CareTeamMember} and \textsf{TeamAssignment}; or
recursion, such as indirect supervision through \textsf{Supervises}. Finally, \emph{data views} ($\Vdat{}$)
identify all locations containing the governed information. For example,
$\mathsf{Diagnosis}$ may appear in both \textsf{DxCategory} and
\textsf{Discharge}; \sysname{} unions these locations and tags each value with
its provenance $o=(R,\mathit{tid},A)$, identifying the source relation, tuple,
and attribute. This provenance is later used for the neighborhood relation in
Sec.~\ref{sec:ci-dp}. Therefore, for each $N_i$, translation produces
$\Vset_i=\{\Vsnd{i},\Vrec{i},\Vsub{i},\Vth{i},\Vdat{i}\}$, which jointly
captures the CI constraints.

\subsubsection{View Canonicalization and Verification}
\label{sec:stage-policy}
Compiling each norm independently can violate \emph{coverage},
\emph{satisfiability}, and \emph{non-redundancy}, while missing
\emph{information implications} across norms. Our key insight is that grounding
the norms in the database enables symbolic checks for these violations. We
perform four checks over the compiled views.

\textbf{(1) Coverage:} we assign each policy clause an identifier and ensure
that it produces at least one compiled view. If not, \sysname{} reports the
clause as uncompilable and returns a policy compilation error.
\textbf{(2) Redundancy:} \sysname{} executes the compiled views over applicable
database records and compares their outputs. Views governing the same
information are merged, yielding a canonical policy representation.
\textbf{(3) Satisfiability:} individually valid norms may collectively produce
contradictory permissions and prohibitions. \sysname{} therefore evaluates
them together and checks that the policy admits at least one valid information
flow; otherwise, it reports the policy as unsatisfiable.
\textbf{(4) Information implication:} \sysname{} identifies when information
governed by one view reveals information governed by another and materializes
this dependency into the policy. For example, if
$\mathsf{Diagnosis}\rightarrow\mathsf{HIVStatus}$ and HIV status is prohibited,
\sysname{} adds a prohibition over diagnosis records,
ensuring they are also excluded at runtime. Thus, the result is a consistent set
of policy views ready for runtime enforcement.

\subsubsection{Runtime Policy Enforcement}

The compiled views still contain unknown parameters (e.g., who exactly is the querier?), and must be contextualized for each query. We therefore require each query to be also be supplied  with the runtime context:
the sender $s_\rho$, querier $r_\rho$, declared purpose $\pi_\rho$, and SQL
query $\ReqQuery$, representing a query as
$\Req=(s_\rho,r_\rho,\pi_\rho,\ReqQuery)$. Enforcement proceeds in three steps.

\noindent\textbf{(1) Identify relevant norms.}
\sysname{} first identifies the attributes returned by $\ReqQuery$. For
example, if the query queries patient IDs and diagnoses from
\textsf{Visit}, the returned attributes are $\mathsf{Visit.pat\_id}$ and
$\mathsf{Visit.Diagnosis}$. It then selects norms whose data views $\Vdat{i}$
govern any of these attributes, restricting enforcement to relevant norms.

\noindent\textbf{(2) Instantiate and evaluate norms.}
For each relevant norm, \sysname{} substitutes the query context
$(s_\rho,r_\rho,\pi_\rho)$ into its views and evaluates them over the current
database. Suppose attending physician $10$ queries diagnosis information for
resident $20$ for \texttt{treatment}. \sysname{} binds physician $10$ as the
sender and resident $20$ as the recipient, then evaluates the role and
condition views to identify patients for whom the policy conditions hold.
The data view then identifies the diagnosis values governed by the norm.
Repeating this for all relevant positive and negative norms produces the
permitted information $\Pos_\rho$ and prohibited information $\Viol_\rho$,
giving the final authorized set
$\Vex=\Pos_\rho\setminus\Viol_\rho$.

\noindent\textbf{(3) Construct the authorization boundary.}
$\Vex$ gives exactly the data authorized for release and can directly
answer the query under CI. For integration with DP in the next section, we
also identify the corresponding database cells. Since each data view retains
the origin $o=(R,\mathit{tid},A)$ of every value, \sysname{} maps $\Vex$ back
to its source cells to form $\public$.

\begin{theorem}[Equivalence to contextual integrity]
\label{thm:ci-equivalence}
Assuming the CI norms are faithfully compiled by the LLM, the resulting SQL
views are equivalent to contextual integrity: for any database and query,
they authorize exactly the information flows permitted by CI (proof is in App.~\ref{app:proofs}).
\end{theorem}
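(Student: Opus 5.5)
The plan is to prove the equivalence by fixing an arbitrary database $D$ and request $\Req=(s_\rho,r_\rho,\pi_\rho,\ReqQuery)$, and showing mutual inclusion: every data item in $\Vex$ is a flow permitted by the Barth et al.\ semantics, and every CI-permitted flow of data touched by $\ReqQuery$ lies in $\Vex$. I will make ``faithfully compiled'' precise first: for each norm $N_i=(\lambda_i,r_i^{\mathrm{snd}},r_i^{\mathrm{rec}},r_i^{\mathrm{sub}},\tau_i,\pi_i,\theta_i)$, the instantiated views satisfy $\Vsnd{i}(D)=\{a: a \text{ plays } r_i^{\mathrm{snd}}\}$, similarly for $\Vrec{i},\Vsub{i}$; $\Vth{i}(D)$ is the set of bindings satisfying $\theta_i$ in $D$ (viewing $D$ as the first-order structure interpreting the predicates of $\theta_i$); and $\Vdat{i}(D)$ is exactly the set of cells (with provenance $o$) carrying information of type $\tau_i$ about the subject, closed under the information-implication relation produced in canonicalization check (4). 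With this, the SQL view semantics of each norm coincides, tuple by tuple, with the first-order formula of $N_i$, by structural induction on $\theta_i$: conjunction corresponds to join, disjunction to union, existential quantification to projection, negation to difference, and recursive conditions (e.g., transitive supervision) to least fixpoints, matching the standard relational-calculus/Datalog equivalence (Codd's theorem).

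Next, I will show per-norm correctness: a flow $(s_\rho,r_\rho,\mathit{subj},d,\pi_\rho)$ matches $N_i$ in the CI sense iff, after binding $s_\rho,r_\rho,\pi_\rho$, the cell $d$ is returned by the instantiated norm views. This follows directly from the previous paragraph, since binding context constants is selection. Then I aggregate: $\Pos_\rho=\bigcup_{\lambda_i=+}(\text{norm } i \text{ output})$ and $\Viol_\rho=\bigcup_{\lambda_i=-}(\cdots)$, so $\Vex=\Pos_\rho\setminus\Viol_\rho$ is literally the CI rule ``matches some positive norm and no applicable negative norm''.

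Two remaining gaps must be closed. First, the relevance filter in step (1) restricts to norms whose $\Vdat{i}$ covers attributes returned by $\ReqQuery$; I will argue that a norm not covering any returned attribute cannot match any flow induced by $\ReqQuery$, so dropping it changes neither $\Pos_\rho$ nor $\Viol_\rho$ restricted to the query's data. Second, canonicalization (merging redundant views, coverage, satisfiability) must preserve semantics: merging only identifies views with identical outputs, coverage guarantees every clause has a view, and satisfiability only rejects policies, so none alter the permitted set. Implication closure is needed for the direction ``$\Vex$ contains no prohibited information'', which I handle by showing that any cell that reveals prohibited type $\tau$ was added to $\Viol_\rho$.

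I expect the main obstacle to be exactly the information-implication step together with the precise definition of ``information of type $\tau$'': CI's semantics speaks of information flows abstractly, including inferred information, while views speak of cells. I would handle it by defining the CI flow semantics relative to the implication closure over schema dependencies (like $\Gamma$) and folding that into the faithful-compilation assumption, so the theorem reduces to the syntactic correspondence above.
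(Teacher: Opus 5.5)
Your proposal is correct and follows essentially the same route as the paper. Faithful compilation gives per-norm correctness (binding the context is selection, and the natural join of role, condition, and data views is conjunction), the relevance filter drops only norms governing no queried cell, and unioning over positive and negative norms makes $\Vex=\Pos_\rho\setminus\Viol_\rho$ coincide with the ``some positive norm, no negative norm'' rule.

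The paper is leaner in three ways. It absorbs your structural induction on $\theta_i$ and your canonicalization arguments into the hypothesis that the implication-completed policy $\Pol^{+}$ is faithfully compiled. It then proves exact equality with $\llbracket\Pol^{+}\rrbracket_{D,\rho}$. Finally, it concludes only compliance with $\Pol$, because $\Pol^{+}$ adds nothing but prohibitions. This is also why your implication closure should be applied only to the data views of negative norms: closing a positive norm's $\Vdat{i}$ would over-permit.
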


\subsection{Integrating Contextual Integrity and Differential Privacy}
\label{sec:ci-dp}
The semantics of integrating CI and DP is as follows. We wish to release a query's
answer under DP, but we now have additional contextual information about the query:
the CI policy specifies which information flows are authorized and any record
authorized to flow is effectively public to the querier. Intuitively, this should reduce the
sensitivity that DP must protect against. The key lies in the neighborhood:
standard DP protects against all neighboring databases; CI rules out neighbors
that differ only in information already authorized to the querier. This yields
a finer-grained neighborhood that holds contextually public information fixed
while continuing to protect what remains private. Unlike prior approaches that
assume coarse public information such as relation cardinalities~\cite{dp4sql}, CI can
expose richer structure, including individual tuples, attributes, and
relationships. We formalize this as \emph{contextual differential privacy} and
characterize exactly when it yields tighter sensitivity bounds than standard DP.

\begin{definition}[Contextual Differential Privacy]
\label{def:contextual-dp} Fix a query context
$c=(s_\rho,r_\rho,\pi_\rho,q_\rho)$.
Let $\mathcal{A}_c(D)$ denote the information in $D$ that the
CI policy authorizes to flow under context $c$.
Define the contextual neighboring relation
\[
D \sim_c D'
\iff
D \sim D'
\;\land\;
\public=\mathcal{A}_c(D'),
\]
where $D\sim D'$ denotes the standard neighboring relation.

A randomized mechanism $\mathcal{M}$ satisfies
$(\epsilon,\delta)$-\emph{contextual differential privacy} for context $c$
if, for all $D\sim_c D'$ and all measurable sets of outputs $O$,
\[
\Pr[\mathcal{M}(D,c)\in O]
\le
e^\epsilon
\Pr[\mathcal{M}(D',c)\in O]
+\delta.
\]
\end{definition}

\noindent\textbf{Example.}
Suppose a resident queries diagnosis
data for treatment. The database contains patients
$\texttt{P1}$, $\texttt{P2}$, and $\texttt{P3}$, whose removal changes the
query answer by $10$, $6$, and $2$, respectively. Under add/remove
DP, the worst-case change is $10$. Now suppose CI authorizes the
diagnoses of $\texttt{P1}$ and $\texttt{P2}$, placing them in $\public$.
Contextual neighborhood must preserve $\public$, leaving only $\texttt{P3}$ as a
valid removal and reducing the worst-case change to $2$. However, insertion of
a new unauthorized record remains possible: if a new $\texttt{P4}$ changes the
answer by $12$, the insertion still preserves $\public$. Thus, CI may restrict
removals without restricting insertions, unless schema constraints limit insertions. These refinements can tighten residual sensitivity.

Next, we characterize when CI tightens local sensitivity (Eq.~\ref{eq:ls-distance}). Recall
\textbf{P2}: since RS is determined by the maximizing private relation, these
reductions matter only when they affect that relation.
\\\noindent\textbf{Rule language.}
For a relation $R_i$, we consider the following grammar of public
information\footnote{We follow the inference system of DP4SQL~\cite{dp4sql}}:
\[
P ::=
\bot
\mid R_i
\mid \select{\varphi}{R_i}
\mid \project{\boundary{i}}{R_i}
\mid \projectselect{\boundary{i}}{\varphi}{R_i}.
\]
Here, $\bot$ means no information about $R_i$ is public, while $R_i$ makes
the entire relation public. $\select{\varphi}{R_i}$ makes public only tuples
satisfying predicate $\varphi$, and $\project{\boundary{i}}{R_i}$ makes public
only the boundary attributes $\boundary{i}$ relevant to residual sensitivity.
Finally, $\projectselect{\boundary{i}}{\varphi}{R_i}$ combines both, making
public the boundary values of tuples satisfying $\varphi$.

\noindent\textbf{Neighboring edits.}
We consider three elementary edits to $R_i$: insertion $\Ins{i}$, deletion $\Del{i}$, and value change $\Chg{i}$. Under add/remove DP, $\tau\in\{\Ins{i},\Del{i}\}$; under change DP, $\tau=\Chg{i}$. We also allow combinations of these edits to capture more general neighborhood models.

\noindent\textbf{Base rule.}
We write $\Gamma,\public \vdash_I R_i:\tau\Rightarrow \ell_i$ to mean that,
on database instance $I$, under schema constraints $\Gamma$ and the
information $\public$ authorized for the current query, an allowed
neighboring edit $\tau$ to $R_i$ can change the query output by at most
$\ell_i$. When $\public=\bot$, no information about $R_i$ is public, so no refinement is
possible and we recover its ordinary local-sensitivity contribution:
\[
\frac{\public=\bot}
{\Gamma,\public \vdash_I R_i:\tau\Rightarrow\LS{i}}
\tag{\textsc{Base}}
\]
The \textsc{Base} rule therefore captures the standard DP case. The
remaining rules tighten $\LS{i}$ when the public view $\public$ restricts the
neighboring changes that are possible for $R_i$.
\\\noindent\textbf{Inference rules.}\label{sec:rules}
The box below shows the complete set of inference rules. Each rule derives
the remaining sensitivity contribution $\ell_i$ of relation $R_i$ under a
particular public view and neighboring edit; \textsc{FK} additionally
applies when $R_i$ references the excluded relation $R_E$ through a foreign
key, so referential integrity eliminates the non-occurring contribution.

\begin{mybox2}{}
\footnotesize
\begin{itemize}[nosep, leftmargin=1.1em, itemsep=7pt, label=\textbullet]
\item $\dfrac{\public=\bot}
       {\Gamma,\public\vdash_I R_i:\tau\Rightarrow\LS{i}}$
\hfill\textsc{(Base)}
\item $\dfrac{\public=\select{\varphi}{R_i}\quad\tau=\Del{i}}
       {\Gamma,\public\vdash_I R_i:\tau\Rightarrow\mufree{1}}$
\hfill\textsc{(Sel-Del)}
\item $\dfrac{\public=\project{\boundary{i}}{R_i}\quad\tau=\Chg{i}}
       {\Gamma,\public\vdash_I R_i:\tau\Rightarrow 0}$
\hfill\textsc{(Prj-Chg)}
\item $\dfrac{\public=\projectselect{\boundary{i}}{\varphi}{R_i}\quad\tau=\Chg{i}}
       {\Gamma,\public\vdash_I R_i:\tau\Rightarrow\max\{\witness{i}-\mufreemin,\mufree{1}\}}$
\hfill\textsc{(PrjSel-Chg)}
\item $\dfrac{\tau=\Ins{i}}
       {\Gamma,\public\vdash_I R_i:\tau\Rightarrow\LS{i}}$
\hfill\textsc{(Ins)}
\item $\dfrac{\public=\select{\varphi}{R_i}\quad\tau=\{\Del{i},\Chg{i}\}}
       {\Gamma,\public\vdash_I R_i:\tau\Rightarrow\max\{\witness{i}-\mufreemin,\mufree{1}\}}$
\hfill\textsc{(Sel)}
\item $\dfrac{\substack{\public=\select{\varphi}{R_i}\quad\Gamma\models\Unique(\boundary{i})\\[2pt]\tau\in\{\Ins{i},\Del{i},\Chg{i}\}}}
       {\Gamma,\public\vdash_I R_i:\tau\Rightarrow\max\{\mufree{1},\Munocci\}}$
\hfill\textsc{(Sel-Unique)}
\item $\dfrac{\public=R_i\quad\tau\in\{\Ins{i},\Del{i},\Chg{i}\}}
       {\Gamma,\public\vdash_I R_i:\tau\Rightarrow 0}$
\hfill\textsc{(Pub)}
\item $\dfrac{R_i\xrightarrow{\mathrm{FK}}R_E}
       {\Gamma\models\Munocci=0}$
\hfill\textsc{(FK)}
\end{itemize}
\end{mybox2}

\noindent \textbf{Interpreting the Rules.}
Recall that $\witness{i}=\LS{i}$ is the largest boundary multiplicity for
$R_i$. Among boundary values that remain private, let $\mufree{1}$ and
$\mufreemin$ denote the largest and smallest occupied multiplicities, and let
$\Munocci$ denote the largest multiplicity at an unoccupied boundary value.
The rules characterize which of these multiplicities remain reachable under
contextual neighborhood. \textsc{Sel-Del} excludes public tuples from deletion,
leaving $\mufree{1}$ as the largest removable contribution.
\textsc{Prj-Chg} fixes all boundary values and therefore gives zero sensitivity,
while \textsc{PrjSel-Chg} and \textsc{Sel} allow changes among the remaining
private values, yielding
$\max\{\witness{i}-\mufreemin,\mufree{1}\}$.
\textsc{Ins} retains ordinary sensitivity because a new tuple may introduce an
unrestricted boundary value; under uniqueness, \textsc{Sel-Unique} instead
limits this to $\max\{\mufree{1},\Munocci\}$.
\textsc{FK} sets $\Munocci=0$ when referential integrity prevents new
boundary values. We give a detailed explanation of these rules in
App.~\ref{app:cidp}. Using these rules, we define \emph{contextual local sensitivity} (CLS) and
derive \emph{contextual residual sensitivity} (CRS), a smooth upper bound on
CLS that is no larger than standard residual sensitivity.

\begin{theorem}[Contextual residual sensitivity]
\label{thm:crs}
For every multiway-join counting query $q$, database $D$, authorization view
$\public$, and smoothing parameter $\beta>0$,
$\CRS{q,\public}(D)$ is a $\beta$-smooth upper bound on
$\CLS{q,\public}(D)$ and satisfies
\[
    \CRS{q,\public}(D)\leq \RS{q}(D).
\]
Consequently, calibrating the release mechanism to
$\CRS{q,\public}(D)$ yields $(\epsilon,\delta)$-contextual DP under
$\public$. The construction and proof are given in
App.~\ref{app:proofs}.
\end{theorem}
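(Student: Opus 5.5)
The plan is to build $\CRS{q,\public}$ by replacing, inside the residual-sensitivity formula, every occurrence of the per-relation local quantity $\LS{i}=\witness{i}$ with the contextual contribution $\ell_i$ derived by the inference rules. We then verify three things: (a) $\CLS{q,\public}(D)$ is bounded at distance $0$; (b) the result is $\beta$-smooth; and (c) it is dominated by $\RS{q}(D)$.

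Concretely, define $\CLS{q,\public}(D)=\max_{D'\sim_c D}|q(D)-q(D')|$. Define the contextual distance-$k$ bound $\mathrm{CLS}^{(k)}_q(D)$ as the analogue of Eq.~\ref{eq:ls-distance}, with $\max_{i\in\PrivRels}T_{E_i,\mathbf{s}}$ replaced by $\max_i \ell_{i,\mathbf{s}}$. Here $\ell_{i,\mathbf{s}}$ keeps the leading term $\ell_i$ (in place of $T_{E_i}$) and keeps the $s_j$-weighted residual terms $T_{E_i\setminus A}$ for $A\neq\emptyset$ unchanged. Then set
\[
\CRS{q,\public}(D)=\max_{k\ge0}e^{-\beta k}\min\{\GS{q},\mathrm{CLS}^{(k)}_q(D)\}.
\]

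The first step is \emph{soundness of each rule}. For each rule we show that any contextual neighbor obtained by the allowed edit $\tau$ changes the count by at most $\ell_i$. The argument is the residual-sensitivity one: a single tuple of $R_i$ with boundary value $b$ contributes exactly $\multiplicity{i}{b}$ output tuples. So we only need to identify which boundary values an edit can touch while keeping $\public$ fixed.
\begin{itemize}
\item Deletions can only remove private tuples, which gives $\mufree{1}$.
\item A change moves a tuple from one boundary value to another, so the count changes by $\multiplicity{i}{b'}-\multiplicity{i}{b}$. Maximizing over private $b$ and arbitrary $b'$ gives $\max\{\witness{i}-\mufreemin,\mufree{1}\}$.
\item Public projections freeze the boundary values, which gives $0$.
\item Under $\Unique(\boundary{i})$, an insertion can only land on an unoccupied value, which gives $\Munocci$; by \textsc{FK}, this is $0$ under referential integrity.
\end{itemize}
Each of these yields $\ell_i\le\witness{i}=\LS{i}$ immediately, since every multiplicity is at most the maximum. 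Taking the max over $i$ gives $\CLS{q,\public}(D)\le \mathrm{CLS}^{(0)}_q(D)\le\CRS{q,\public}(D)$.

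The second step is \emph{smoothness}. We must show $\CRS{q,\public}(D)\le e^{\beta}\CRS{q,\public}(D')$ for all $D\sim_c D'$. Following the proof of Dong and Yi, it suffices to show $\mathrm{CLS}^{(k)}_q(D)\le \mathrm{CLS}^{(k+1)}_q(D')$. The route is to show that $\mathrm{CLS}^{(k)}_q(D)$ upper bounds the contextual local sensitivity of every database at contextual distance $k$. This uses the RS multilinear expansion: after $s_j$ edits to $R_j$, the residual multiplicity $T_{E\setminus\ldots}$ grows by at most the stated polynomial. Because contextual neighbors fix $\public$, the private/public partition of boundary values is preserved along the path. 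Hence the refined leading term is still valid at every intermediate database, and the shift argument $k\mapsto k+1$ goes through.

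This is the main obstacle. Quantities such as $\mufree{1}$ and $\mufreemin$ can themselves move under edits. I would handle it by bounding them by the corresponding full multiplicities plus the RS growth terms. If that fails, I would conservatively replace $\ell_i$ by $\witness{i}$ at distances $k\ge1$, which keeps the gain at $k=0$ and trivially preserves the RS smoothness proof.

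The third step is the \emph{comparison}. Termwise, $\ell_{i,\mathbf{s}}\le T_{E_i,\mathbf{s}}$, so $\mathrm{CLS}^{(k)}\le \mathrm{LS}^{(k)}$ for every $k$, and monotonicity of $\min$ and $\max$ gives $\CRS{q,\public}(D)\le\RS{q}(D)$.

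Finally, to obtain $(\epsilon,\delta)$-contextual DP, we instantiate the Nissim–Raskhodnikova–Smith smooth-sensitivity framework with the neighborhood $\sim_c$ in place of $\sim$. That framework only uses the neighbor relation abstractly, so Laplace or Cauchy noise scaled to a $\beta$-smooth upper bound on $\CLS{q,\public}$ yields the guarantee of Definition~\ref{def:contextual-dp}.
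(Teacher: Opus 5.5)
Your soundness step, your comparison with $\RS{q}$, and your transfer of the Nissim--Raskhodnikova--Smith mechanism to $\sim_c$ all match the paper. The gap is in your primary construction, which keeps $\ell_i$ as the leading term of the $i$-th envelope at every distance $k$. That envelope is not $\beta$-smooth, and your justification for it does not hold.

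Fixing $\public$ fixes only the authorized tuples. Private tuples of $R_i$ can still be deleted, inserted, or moved. So the set of privately occupied boundary values changes under a single edit to $R_i$ itself, and $\mufree{1}$ and $\mufreemin$ change with it. Because $i\notin E_i$, the $i$-th envelope has no $s_i$-weighted term that could absorb this jump. The same fact defeats your proposed rescue: the only growth terms available in the $i$-th envelope are indexed by $E_i$, which excludes $R_i$.

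Here is a concrete failure.
\begin{itemize}
\item Take a two-way join $R_i\bowtie R_j$ with \textsc{Sel} governing $R_i$, and join values distinct in $R_i$.
\item Place two private tuples of $R_i$ at boundary values of multiplicity $\witness{i}/2$ each, so $\ell_i=\witness{i}/2$.
\item Change one of them to a value of multiplicity $1$. This gives a contextual neighbor $D'$ with $\mufreemin=1$ and $\ell_i=\witness{i}-1$.
\item Your distance-$1$ envelope at $D$, however, is only about $\max\{\witness{i}/2+1,\,2\}$.
\end{itemize}
So the shift inequality $\mathrm{CLS}^{(0)}_q(D')\le\mathrm{CLS}^{(1)}_q(D)$ fails. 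Whenever the maximizing distance is $0$, your bound at $D$ and $D'$ differs by a factor near $2$, not $e^{\beta}$.

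Separately, showing that the distance-$k$ formula dominates the contextual local sensitivity of every instance within contextual distance $k$ would not suffice. Smoothness needs the one-step shift inequality for the formula itself.

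Your fallback is exactly the paper's definition of $\CLShat_q^{(k)}$: use $\ell_i$ only at $k=0$ and the unmodified $T_{E_i}$ as the leading term for $k\ge1$. With it, the proof is the paper's.
\begin{itemize}
\item At $k=0$, rule tightness gives $\ell_i(D)\le T_{E_i}(D)$.
\item Dong--Yi's facts then give $T_{E_i}(D)\le T_{E_i}(D')+T_{E_i\setminus\{j\}}(D')$, with plain equality $T_{E_i}(D)=T_{E_i}(D')$ when the edit is to $R_i$. This is a sub-sum of the distance-$1$ envelope at $D'$ with $\mathbf{s}=\mathbf{e}_j$.
\item For $k\ge1$, the Dong--Yi shift argument applies verbatim.
\end{itemize}

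The paper additionally restricts both the outer index $i$ and the distance vectors to $\PrivRelsC$, the relations not governed by \textsc{Pub}. This is sound because no contextual neighbor can edit such a relation. It is also what lets the paper's bound improve on $\RS{q}$ at $k\ge1$. Without it, your fallback agrees with $\RS{q}$ in every $k\ge1$ term. That still proves the theorem as stated, but the gain then appears only when $\RS{q}$ is attained at $k=0$.

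So drop the primary plan and commit to the fallback, preferably with the $\PrivRelsC$ restriction.
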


\begin{corollary}[End-to-end privacy]
\label{cor:end-to-end}
Since $(\epsilon,\delta)$-DP implies $(\epsilon,\delta)$-contextual DP under
any authorization view $\public$, if the synthetic release is
$(\varepsilon_{\mathrm{syn}},\delta_{\mathrm{syn}})$-DP and the final query
release is $(\varepsilon_{\mathrm{rel}},\delta_{\mathrm{rel}})$-contextual DP
under $\public$, then \sysname{} is
$(\varepsilon_{\mathrm{syn}}+\varepsilon_{\mathrm{rel}},
\delta_{\mathrm{syn}}+\delta_{\mathrm{rel}})$-contextual DP under $\public$.
The proof is given in App.~\ref{app:proofs}.
\end{corollary}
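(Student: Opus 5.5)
The plan has three steps: show that standard DP implies contextual DP, show that basic sequential composition still holds under the restricted neighborhood $\sim_c$, and then combine the two for the \sysname{} pipeline as in Theorem~\ref{thm:end-to-end-dp}.

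\emph{Step 1 (DP implies contextual DP).} By Definition~\ref{def:contextual-dp}, $D\sim_c D'$ requires $D\sim D'$. So $\sim_c$ is a subset of $\sim$. Any mechanism satisfying the $(\epsilon,\delta)$ inequality for every pair $D\sim D'$ therefore satisfies it for every pair $D\sim_c D'$. This gives the first clause of the corollary. In particular, the PrivPetal synthetic release $\SynDB\leftarrow\SynMech(\PrivDB;\varepsilon_{\mathrm{syn}},\delta_{\mathrm{syn}})$ is $(\varepsilon_{\mathrm{syn}},\delta_{\mathrm{syn}})$-contextual DP for the fixed context $c$.

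\emph{Step 2 (composition under $\sim_c$).} I will model the pipeline as an adaptive two-stage mechanism. The first stage releases $\SynDB$. Then MCTS search and final selection run on $\SynDB$ alone, producing $\OptQ=f(\SynDB,\NLQ,\Sch)$; this is post-processing. The second stage is the release mechanism $\mathcal{M}_{\mathrm{rel}}(D,c;\OptQ)$, which adds noise calibrated to $\CRS{\OptQ,\public}(D)$. By Theorem~\ref{thm:crs}, this stage is $(\varepsilon_{\mathrm{rel}},\delta_{\mathrm{rel}})$-contextual DP for each fixed query $\OptQ$.

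The standard proof of basic adaptive composition, i.e. the same argument used for Theorem~\ref{thm:end-to-end-dp}, only quantifies over one fixed neighboring pair $(D,D')$. It never uses any structural property of the neighborhood relation. So it goes through verbatim with $\sim$ replaced by $\sim_c$. Concretely, for $D\sim_c D'$ and an output set $O$ over pairs $(\SynDB,\hat a)$, I would write
\[
\Pr[(\SynDB,\hat a)\in O\mid D]=\int\Pr[\hat a\in O_{\tilde d}\mid D,\tilde d]\,d\mu_D(\tilde d).
\]
I bound the inner probability using the second stage's guarantee with the conditioning $\tilde d$ fixed, which gives $e^{\varepsilon_{\mathrm{rel}}}\Pr[\cdot\mid D',\tilde d]+\delta_{\mathrm{rel}}$. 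I then bound the outer measure using the first stage's guarantee applied to the function $\min\{1,\cdot\}$, via the usual $(\epsilon,\delta)$ integration lemma. Together these yield $(\varepsilon_{\mathrm{syn}}+\varepsilon_{\mathrm{rel}},\delta_{\mathrm{syn}}+\delta_{\mathrm{rel}})$. Finally, post-processing invariance under $\sim_c$, again immediate since it is pairwise, covers dropping $\SynDB$ or the intermediate search artifacts from the output.

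\emph{Main obstacle.} The delicate point is that $\public$ and the context $c$ must be the same on both sides of each neighboring pair. Otherwise the noise scale $\CRS{q,\public}(D)$ could differ in its public-view argument across $D$ and $D'$. This is exactly what the constraint $\mathcal{A}_c(D)=\mathcal{A}_c(D')$ in $\sim_c$ ensures. I also need to check that the selected $\OptQ$ does not depend on $D$ except through $\SynDB$, so that conditioning on $\tilde d$ fixes the query and the per-query guarantee of Theorem~\ref{thm:crs} applies. I would verify this from the algorithm: only the final RS/CRS computation touches $\PrivDB$, and that computation is part of the release mechanism whose guarantee Theorem~\ref{thm:crs} already accounts for.
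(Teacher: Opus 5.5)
Your proposal is correct and follows essentially the same route as the paper: query selection is post-processing of $\SynDB$, contextual DP is ordinary DP with respect to the fixed relation $\sim_c\subseteq\sim$, and adaptive sequential composition is applied with $\sim_c$ in place of $\sim$, using the invariance of $\public$ across contextual neighbors. The only difference is presentation: you spell out the integration step of adaptive composition explicitly, whereas the paper simply cites the standard composition theorem.
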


\begin{table}[t]
\centering
\scriptsize
\setlength{\tabcolsep}{2pt}
\renewcommand{\arraystretch}{1.12}
\begin{tabular}{@{}llll@{}}
\toprule
\textbf{Mode} & \textbf{Data / Release} & \textbf{Pipeline} & \textbf{Guarantee} \\
\midrule

CI
& $\Vex$ / exact
& \makecell[l]{\textsc{Text-to-CQL}\\+ T2SQL}
& \makecell[l]{CI\\(Thm.~\ref{thm:ci-equivalence})} \\

\midrule

DP
& $D$ / noisy
& DP-aware T2SQL
& \makecell[l]{$(\epsilon,\delta)$-DP\\
(Thm.~\ref{thm:end-to-end-dp})} \\

\midrule

CI+DP: Split
& \makecell[l]{$\Vex$ exact\\+ rest noisy}
& \makecell[l]{\textsc{Text-to-CQL}\\+ DP-aware T2SQL}
& \makecell[l]{CI + $(\epsilon,\delta)$-DP\\
(Thms.~\ref{thm:end-to-end-dp}, ~\ref{thm:ci-equivalence})} \\

\midrule

CI+DP: Context.
& $D$ / noisy
& \makecell[l]{\textsc{Text-to-CQL} + \\DP-aware T2SQL + CDP}
& \makecell[l]{
$(\epsilon_{\rm syn}{+}\epsilon_{\rm rel},$\\
$\delta_{\rm syn}{+}\delta_{\rm rel})$-CDP\\
(Cor.~\ref{cor:end-to-end})
} \\

\bottomrule
\end{tabular}
\caption{End-to-end privacy modes supported by \sysname{}}.
\label{tab:privacy-modes}
\end{table}
\subsection{End-to-End Privacy}
\label{sec:e2e}

\begin{table}[t]
\centering
\caption{\small{Text-to-SQL performance on \dpdataset{}. T1/T10 denote top-1/top-10 candidates; Joint is $\mathrm{Is\mbox{-}Min}\wedge\mathrm{Acc.}$; and $\Delta$ is the Joint improvement over zero-shot T1. \textbf{Bold} marks the best result and \underline{underline} the best baseline.}}
\label{tab:text2sql-performance}

\small
\renewcommand{\arraystretch}{0.95}

\scalebox{0.7}{\begin{tabular*}{\columnwidth}{@{\extracolsep{\fill}}lrrrr@{}}
\toprule
Method & Acc. & Is-Min & Joint & $\Delta$ \\
\midrule

\multicolumn{5}{l}{\textit{XiYanSQL-32B}} \\
Zero-shot
& 77.7/70.3
& 50.5/55.5
& 49.1/52.8
& 0.0/$+3.7$ \\

Sens. prompt
& 64.0/68.2
& 38.7/50.5
& 37.2/47.9
& $-11.9$/$-1.2$ \\

SFT
& 78.9/\underline{81.5}
& 50.1/\underline{62.7}
& 47.5/\underline{59.9}
& $-1.6$/$+10.8$ \\

DPO
& 67.3/65.1
& 37.9/47.0
& 36.9/44.1
& $-12.2$/$-5.0$ \\

DPO on SFT
& 73.2/73.9
& 45.7/49.2
& 45.2/48.5
& $-3.9$/$-0.6$ \\

AlphaSQL
& 80.4 & 39.8 & 35.6 & $-13.5$ \\

\textbf{\sysname{}}
& \textbf{81.8}
& \textbf{76.6}
& \textbf{75.3}
& \textbf{+26.2} \\

\midrule
\multicolumn{5}{l}{\textit{GPT-5.6-Luna}} \\
Zero-shot
& 79.7/79.7
& 32.1/39.8
& 28.7/35.6
& 0.0/$+6.9$ \\

Sens. prompt
& 76.0/76.6
& 30.1/39.5
& 30.1/38.5
& $+1.4$/$+9.8$ \\

AlphaSQL
& \underline{\textbf{84.2}}
& \underline{47.9}
& \underline{43.9}
& $+15.2$ \\

\textbf{\sysname{}}
& 83.0
& \textbf{77.2}
& \textbf{74.6}
& \textbf{+45.9} \\

\midrule
\multicolumn{5}{l}{\textit{Phi-4}} \\
Zero-shot
& 71.3/72.9
& 46.4/\underline{53.3}
& 42.1/\underline{47.9}
& 0.0/$+5.8$ \\

Sens. prompt
& 63.4/65.9
& 30.6/43.3
& 28.6/38.6
& $-13.5$/$-3.5$ \\

AlphaSQL
& \underline{77.4}
& 41.2
& 34.9
& $-7.2$ \\

\textbf{\sysname{}}
& \textbf{79.6}
& \textbf{74.1}
& \textbf{71.2}
& \textbf{+29.1} \\

\bottomrule
\end{tabular*}}
\end{table}

We have developed three components of \sysname{}: DP-aware Text-to-SQL for
query generation, CI-aware Text-to-SQL for policy compilation, and contextual
DP (CDP). Together, they provide three privacy-preserving release modes, exposed as
a choice to the application. All modes receive the same natural-language
inputs: the text corpus, privacy policy $\Pol$, and query.
\\\noindent\textbf{CI Only.}
When the application needs contextual access control but permits exact
disclosure, \sysname{} instantiates the CI views for the query context and
computes the authorized view $\Vex$. The query is evaluated against $\Vex$, so
only authorized data is released exactly, satisfying CI by
Thm.~\ref{thm:ci-equivalence}.
\\\noindent\textbf{DP Only.}
When the full population should be protected uniformly, \sysname{} uses
DP-aware Text-to-SQL to select a correct, low-sensitivity
formulation $q$, and releases a noisy answer
satisfying standard DP by Thm.~\ref{thm:end-to-end-dp}.
\\\noindent\textbf{CI+DP.}
When a query spans both CI-authorized and private information, \sysname{} first
derives the authorized view $\Vex$ and supports two release semantics.
(i) \emph{Split release} treats the portions separately: CI-authorized
information is released exactly, while the remainder is released under
standard DP. This applies when the application can meaningfully return the two
parts separately. (ii) \emph{Contextual DP} instead releases a single
statistic over the full query. It holds the authorized information $\public$
fixed in the neighboring relation and calibrates noise using contextual
residual sensitivity, potentially requiring less noise than standard DP. By
Cor.~\ref{cor:end-to-end}, if synthetic-data generation is
$(\epsilon_{\mathrm{syn}},\delta_{\mathrm{syn}})$-DP and the final release is
$(\epsilon_{\mathrm{rel}},\delta_{\mathrm{rel}})$-contextual DP, the complete
pipeline is $(\epsilon_{\mathrm{syn}}+\epsilon_{\mathrm{rel}},
\delta_{\mathrm{syn}}+\delta_{\mathrm{rel}})$-contextual DP. Tab.~\ref{tab:privacy-modes} summarizes the guarantees.
\\\noindent\textbf{When to use each mode.}
The choice depends on the intended release. In our hospital example,
\emph{CI only} mode lets a resident receive authorized diagnoses for their care-team
patients exactly. \emph{DP only} mode instead supports aggregates e.g.,``How many
patients have HIV?'', protecting every patient with noise.
\emph{CI+DP with split release} applies when the portions can be separated:
a doctor may receive exact statistics for their own patients and a
DP-protected aggregate for the rest. \emph{Contextual DP} applies when a single aggregate is needed: a researcher authorized to access their study patients can receive one hospital-wide noisy count, with noise calibrated only to records they are not authorized to see.

\begin{table}[t]
\centering
\small
\footnotesize
\caption{\textsc{Text-to-CQL} performance across datasets.}
\scalebox{0.7}{\begin{tabular}{llccc}
\toprule
Dataset & Dimension & Precision & Recall & F1 \\
\midrule
\multirow{4}{*}{\dataset{1}}
    & Role     & 62.1\%  & 61.1\%  & 61.5\% \\
    & Data     & 83.3\%  & 97.3\%  & 86.5\% \\
    & Conditions & 95.6\%  & 76.6\%  & 76.3\% \\
    & \textbf{All} & \textbf{68.2\%} & \textbf{70.3\%} & \textbf{68.0\%} \\
\midrule
\multirow{4}{*}{\dataset{2}}
    & Role     & 96.1\%  & 68.0\%  & 71.7\% \\
    & Data     & 77.4\%  & 71.8\%  & 70.0\% \\
    & Conditions & 83.7\%  & 67.2\%  & 68.2\% \\
    & \textbf{All} & \textbf{91.1\%} & \textbf{68.9\%} & \textbf{71.2\%} \\
\midrule
\multirow{4}{*}{\dataset{3}}
    & Role     & 100.0\% & 81.7\%  & 88.6\% \\
    & Data     & 95.1\%  & 100.0\% & 97.3\% \\
    & Conditions & 97.9\%  & 82.1\%  & 85.9\% \\
    & \textbf{All} & \textbf{98.6\%} & \textbf{85.4\%} & \textbf{89.7\%} \\
\bottomrule
\end{tabular}}
\label{tab:policy-grounding}
\end{table}

\section{Evaluation}
\label{sec:evaluation}
Our evaluation is organized around the following questions:
\begin{packeditemize}
\item \textbf{RQ1.} Can \sysname{} find correct SQL with
lower sensitivity than
existing methods?
\item \textbf{RQ2.} Can \sysname{} perform \textsc{Text-to-CQL}, faithfully
translating CI policies into accurate SQL views?
\item \textbf{RQ3.} Does contextual DP reduce noise on contextually
authorized information compared to standard DP? 
\item \textbf{RQ4.} When composed end to end, does \sysname{} adhere more faithfully to the policy than baselines?
\end{packeditemize}
\noindent\textbf{Benchmarks.}
No existing benchmark jointly provides natural language (i) text, (ii) queries, and (iii) policies together with ground truth databases, SQL queries annotated for sensitivity, query contexts, and policy grounded SQL views. We therefore construct two benchmarks by combining existing datasets with new annotations. \dpdataset{} contains natural language questions paired with multiple correct SQL formulations, their Residual Sensitivity values, and the minimum sensitivity formulation $q^\star$, with 3,673 training and 886 test questions. Our CI benchmark contains 450 query context instances across healthcare (\dataset{1}), consumer (\dataset{2}), and children's privacy (\dataset{3}), with 150 instances each. Each instance includes the natural language data, privacy policy, query, runtime context, and ground truth authorized output. We use \dpdataset{} for RQ1 and \dataset{1}--\dataset{3} for RQ2--RQ4. See App.~\ref{app:benchmark} for details.
\subsection{RQ1: DP-Aware Text2SQL}
\label{sec:eval-rq1}

\noindent\textbf{Setup.}
We use $(\epsilon_{\mathrm{syn}},\delta_{\mathrm{syn}})=(1.6,10^{-6})$ for
synthetic data and $(\epsilon,\delta)=(1.0,10^{-6})$ per query, with 10
rollouts and expansion size 5. We evaluate \textsc{XiYanSQL-QwenCoder-32B}, \textsc{Phi-4}, and
\textsc{GPT-5.6-Luna} against \textbf{six} baselines: \textit{Zero-shot}; \textit{Sensitivity
Prompt}, which prompts for low sensitivity; \textit{Fine-tuning} on
XiYanSQL-32B, including SFT~\cite{sft}, DPO~\cite{dpo}, and DPO on SFT; \textit{Top-$k$ Selection}
($k=10$), which selects the lowest-sensitivity candidate; and
\textit{AlphaSQL}~\cite{li2025alpha}, which uses MCTS without our
sensitivity-guided search. We report execution accuracy (\textit{Acc.}),
minimum-sensitivity rate (\textit{Is-Min}), and their intersection
(\textit{Joint}) in Tab.~\ref{tab:text2sql-performance}. DP noise reduction, inference costs, and ablations are
reported in App.~\ref{app:eval}.

\noindent\textbf{Results.}
Table~\ref{tab:text2sql-performance} shows that \sysname{} substantially improves the joint objective across all three model families, achieving 75.3\%, 74.6\%, and 71.2\% Joint accuracy with XiYanSQL-32B, GPT-5.6-Luna, and Phi-4, respectively. This improves over zero-shot by 26.2, 45.9, and 29.1 points, and over the strongest baseline by 15.4, 30.7, and 23.3 points. In contrast, AlphaSQL achieves high accuracy but achieves only 39.8--47.9\% Is-Min. Thus, search alone is insufficient; explicitly steering it toward low-sensitivity formulations is necessary to jointly preserve correctness and DP utility.

\subsection{RQ2: CI-Aware Text2SQL}
\label{sec:eval-rq2}

\noindent\textbf{Setup.}
We evaluate whether \sysname{} can compile natural-language CI norms into SQL views, i.e., \textsc{Text-to-CQL}, by comparing the compiled views against manually annotated ground truth. We report precision, recall, and F1 scores.

\noindent\textbf{Results.}
Tab.~\ref{tab:policy-grounding} shows high precision across all mapping types (62.1--100\%), with lower recall indicating that \sysname{} more often misses valid mappings than introduces incorrect ones. Performance is strongest on \dataset{3}, reaching F1 scores of 88.6\%, 97.3\%, and 85.9\% for role, data, and condition mappings, respectively.

\subsection{RQ3: Contextual DP}
\label{sec:eval-rq3}

\noindent\textbf{Setup.}
In Sec.~\ref{sec:ci-dp}, we characterize the conditions under which
contextual DP reduces noise. Since these patterns do not naturally arise under
our benchmark policies, we validate the derived cases by progressively
authorizing records in each dataset's primary subject relation and comparing
the average required noise across all queries under contextual and standard DP
with the same privacy budget, $\epsilon=1$.

\noindent\textbf{Results.}
Fig.~\ref{fig:rq3} validates our characterization: contextual DP never requires
more noise than standard DP. At 90\% authorization, mean noise falls to
$0.44\times$ for \dataset{2}, $0.60\times$ for \dataset{1}, and $0.89\times$
for \dataset{3}, corresponding to reductions of 56\%, 40\%, and 11\%,
respectively.

\begin{figure}
\centering
\includegraphics[width=0.6\linewidth]{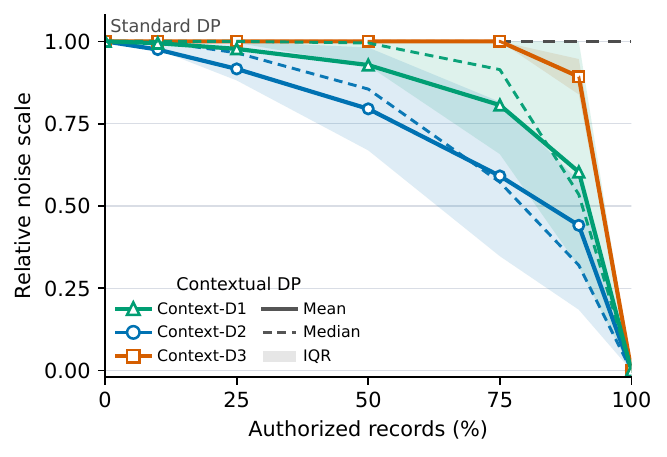}
\caption{\footnotesize Relative noise required by contextual DP as we authorize an increasing fraction of subject records.}
\label{fig:rq3}
\vspace{-3mm}
\end{figure}

\subsection{RQ4: End-to-End Evaluation}
\label{sec:eval-rq4}

\noindent\textbf{Setup.}
We evaluate \sysname{} end to end, from natural-language data and policy to the
final answer released to the user. We compare the resulting
\emph{information leakage} under four approaches using GPT-5.6-Luna:
\textit{Text-only}, which answers directly from the text without the policy;
\textit{Text+Policy}, which gives the LLM the policy and context and asks it to
enforce them; \textit{AirGapAgent}~\cite{bagdasarian2024airgapagent}, which first uses a trusted LLM to filter the
text to information it considers authorized; and \sysname{}. We
report \emph{leakage rate}, the fraction of responses that reveal information
prohibited by the ground-truth policy for that query.

\begin{table}[t]
\centering
\small
\setlength{\tabcolsep}{4pt}
\renewcommand{\arraystretch}{0.95}
\scalebox{0.7}{\begin{tabular*}{\columnwidth}{@{\extracolsep{\fill}}lccc@{}}
\toprule
System & \dataset{1} & \dataset{2} & \dataset{3} \\
\midrule
Text-only                  & 88.7\%  & 85.2\% & 100.0\% \\
Text+Policy                & 18.0\%  & 52.3\% & 44.3\% \\
AirGapAgent                & 6.7\%   & 38.9\% & 7.4\% \\
\textbf{\sysname{}}        & \textbf{0.0\%} & \textbf{2.1\%} & \textbf{1.3\%} \\
\sysname{}$^\dagger$ & 0.0\%   & 0.0\%  & 0.0\% \\
\bottomrule
\end{tabular*}}
\caption{\footnotesize End-to-end prohibited information leakage using GPT-5.6-Luna. \sysname{}$^\dagger$ uses the ground-truth policy views. Leakage in \sysname{} can additionally be protected under DP rather
than released exactly.}
\label{tab:leakage}
\end{table}

\noindent\textbf{Results.}
Tab.~\ref{tab:leakage} shows a clear reduction in leakage as enforcement becomes
more explicit. Giving the policy directly to the LLM (Text+Policy) still leaks
prohibited information in 18.0--52.3\% of queries, while AirGapAgent reduces
this to 6.7--38.9\%. In contrast, \sysname{} reduces leakage to 0.0\%, 2.1\%,
and 1.3\% on \dataset{1}, \dataset{2}, and \dataset{3}, respectively. The
leakage in \sysname{} comes entirely from policy-view compilation errors: with
ground-truth views, deterministic enforcement achieves 0\% leakage on all
datasets. Moreover, the leakage in \sysname{} can still be
protected under DP rather than released exactly.
\section{Related Work}\label{sec:relatedwork}
\noindent \textbf{CI for text.}
Prior work extracts or annotates CI norms from natural-language policies~\cite{shvartzshanider-etal-2018-recipe,shvartzshnaider2019going,shvartzshanider2023beyond,chanenson2025automating}, uses CI as LLM context for leakage detection~\cite{mireshghallah2024can,fan2024goldcoin,xiao2024contextual,bagdasarian2024airgapagent,ghalebikesabi2025privacy,li2025privacy,li-etal-2025-privaci,lan2025contextual,wang2025privacy,huang2026need}. \sysname{} instead compiles CI policies into SQL views.

\noindent \textbf{DP for text.}
DP for text~\cite{klymenko-etal-2022-differential,hu2024differentially} spans local sanitization of text~\cite{feyisetan2020privacy,xu-etal-2020-differentially,feyisetan-kasiviswanathan-2021-private,meehan2022sentence,mattern2022limits,awon2025clusant,zhang2025dyntext,chowdhury2025pr}, private language-model training~\cite{mcmahan2018learning,li2022large,yu2022differentially,anil2022large}, and private synthetic-text generation~\cite{mattern2022language,yue2023synthetic}, often requiring substantial noise that degrades utility~\cite{yue2021differentialprivacytextanalytics};.  \sysname{} instead grounds text into a relational representation and privatizes query answers. Prior SQL-DP work optimizes execution plans, query rewrites, sensitivity bounds, or DP mechanisms~\cite{johnson2018towards,johnson2020chorus,kotsogiannis2019privatesql,dong2021residual,dong2022nearly,dong2022r2t}. \sysname{} instead applies DP to SQL queries, building on prior work on private query answering~\cite{johnson2018towards,kotsogiannis2019privatesql,dong2021residual}, but uniquely optimizes privacy by searching over query formulations from natural language.

\noindent \textbf{Policy-aware DP.}
Customized DP adapts protection using explicit secrets or constraints~\cite{kifer2014pufferfish,blowfish,haney2016policy}, sensitive records or text spans~\cite{shi2022selective}, known relational privacy policies~\cite{dp4sql}. These approaches take protection semantics as given; \sysname{} instead derives querier- and context-specific authorization from natural-language CI policies and uses it to define the DP neighborhood. Prior work has connected CI and DP conceptually~\cite{rachel-cummings-dp}; \sysname{} makes this connection operational.

\section{Conclusion}\label{sec:conclusion}
We presented \sysname{}, a framework for formally enforcing CI and DP over unstructured text through a shared relational representation. \sysname{} supports DP aware query generation, executable CI policies, and their composition through \emph{contextual differential privacy}.

\section*{Ethical Considerations}

This paper studies privacy-preserving analytics over sensitive unstructured text using contextual integrity (CI), differential privacy (DP), or their composition. Because the target setting includes domains such as healthcare and consumer data, failures may expose sensitive information. We therefore consider the ethical implications of both the system and its deployment.

\noindent\textbf{Stakeholders.}
Stakeholders include individuals represented in the underlying text, data custodians, analysts issuing queries, administrators defining privacy policies, and the research community.

\noindent\textbf{Principles.}
We consider the Menlo Report principles of Beneficence, Respect for Persons, Justice, and Respect for Law and Public Interest. Our goal is to enable useful analysis while limiting inappropriate information flows and inference about individuals.

\noindent\textbf{Potential harms.}
The primary risk is unintended disclosure of sensitive information. While \sysname{} enforces CI and DP through explicit mechanisms rather than LLM judgments, errors in grounding natural-language data, queries, or policies may still affect correctness or privacy. 

The system also has potential for dual use. Privacy-preserving access to sensitive organizational text could be deployed under overly permissive policies or where affected individuals have limited control over how their data is analyzed. Formal enforcement guarantees adherence to the supplied policy, not that the policy itself is ethically appropriate.

\noindent\textbf{Mitigations.}
We mitigate these risks by making privacy enforcement explicit and auditable. CI policies are compiled into database views that can be inspected and validated before deployment, while runtime authorization is enforced deterministically. DP releases provide formally bounded disclosure, and contextual DP applies only after determining the information authorized for the requesting context.

Our experiments do not recruit or interact with human participants. We evaluate on existing Text-to-SQL benchmarks and public privacy policies, including HIPAA, CCPA, and COPPA, with additional annotations for evaluating privacy enforcement.

\noindent\textbf{Decision.}
We believe it is ethical to proceed and publish because the work aims to reduce privacy risks in natural-language analytics while making its assumptions and failure modes explicit. Publication enables scrutiny of these guarantees and provides a more auditable alternative to approaches that delegate privacy decisions directly to LLMs.

\bibliographystyle{plainurl}
\bibliography{custom}

@inproceedings{mireshghallah2024can,
  title={Can LLMs Keep a Secret? Testing Privacy Implications of Language Models via Contextual Integrity Theory},
  author={Mireshghallah, Niloofar and Kim, Hyunwoo and Zhou, Xuhui and Tsvetkov, Yulia and Sap, Maarten and Shokri, Reza and Choi, Yejin},
  booktitle={International Conference on Learning Representations},
  year={2024}
}

@inproceedings{rachel-cummings-dp,
  title={Integrating differential privacy and contextual integrity},
  author={Benthall, Sebastian and Cummings, Rachel},
  booktitle={Proceedings of the 2024 Symposium on Computer Science and Law},
  pages={9--15},
  year={2024}
}

@inproceedings{barth2006privacy,
  title={Privacy and Contextual Integrity: Framework and Applications},
  author={Barth, Adam and Datta, Anupam and Mitchell, John C and Nissenbaum, Helen},
  booktitle={2006 IEEE Symposium on Security and Privacy},
  pages={184--198},
  year={2006},
  publisher={IEEE},
  doi={10.1109/SP.2006.32}
}

@article{dwork2014algorithmic,
  title={The Algorithmic Foundations of Differential Privacy},
  author={Dwork, Cynthia and Roth, Aaron},
  journal={Foundations and Trends in Theoretical Computer Science},
  volume={9},
  number={3--4},
  pages={211--407},
  year={2014},
  doi={10.1561/0400000042}
}

@inproceedings{dong2021residual,
  title={Residual Sensitivity for Differentially Private Multi-Way Joins},
  author={Dong, Wei and Yi, Ke},
  booktitle={Proceedings of the 2021 International Conference on Management of Data},
  pages={432--444},
  year={2021},
  publisher={ACM},
  doi={10.1145/3448016.3452813}
}

@article{cai2025privpetal,
  title={{PrivPetal}: Relational Data Synthesis via Permutation Relations},
  author={Cai, Kuntai and Xiao, Xiaokui and Yang, Yin},
  journal={Proceedings of the ACM on Management of Data (SIGMOD)},
  volume={3},
  number={3},
  articleno={204},
  year={2025},
  publisher={ACM},
  doi={10.1145/3725341}
}

@inproceedings{kocsis2006bandit,
  title={Bandit Based Monte-Carlo Planning},
  author={Kocsis, Levente and Szepesv{\'a}ri, Csaba},
  booktitle={Machine Learning: ECML 2006},
  series={Lecture Notes in Computer Science},
  volume={4212},
  pages={282--293},
  year={2006},
  publisher={Springer},
  doi={10.1007/11871842_29}
}

@inproceedings{bagdasarian2024airgapagent,
  title={Airgapagent: Protecting privacy-conscious conversational agents},
  author={Bagdasarian, Eugene and Yi, Ren and Ghalebikesabi, Sahra and Kairouz, Peter and Gruteser, Marco and Oh, Sewoong and Balle, Borja and Ramage, Daniel},
  booktitle={Proceedings of the 2024 on ACM SIGSAC Conference on Computer and Communications Security},
  pages={3868--3882},
  year={2024}
}

@article{lan2025contextual,
  title={Contextual integrity in LLMs via reasoning and reinforcement learning},
  author={Lan, Guangchen and Inan, Huseyin A and Abdelnabi, Sahar and Kulkarni, Janardhan and Wutschitz, Lukas and Shokri, Reza and Brinton, Christopher G and Sim, Robert},
  journal={arXiv preprint arXiv:2506.04245},
  year={2025}
}

@inproceedings{tsai2025contextual,
  title={Contextual agent security: A policy for every purpose},
  author={Tsai, Lillian and Bagdasarian, Eugene},
  booktitle={Proceedings of the 2025 Workshop on Hot Topics in Operating Systems},
  pages={8--17},
  year={2025}
}

@article{abdelnabi1822firewalls,
  title={Firewalls to secure dynamic llm agentic networks, 2025},
  author={Abdelnabi, Sahar and Gomaa, Amr and Bagdasarian, Eugene and Kristensson, Per Ola and Shokri, Reza},
  journal={URL https://arxiv. org/abs/2502},
  year={1822}
}

@article{ghalebikesabi2025privacy,
  title={Privacy awareness for information-sharing assistants: A case-study on form-filling with contextual integrity},
  author={Ghalebikesabi, Sahra and Bagdasarian, Eugene and Yi, Ren and Yona, Itay and Shumailov, Ilia and Pappu, Aneesh and Shi, Chongyang and Weidinger, Laura and Stanforth, Robert and Berrada, Leonard and others},
  journal={Transactions on Machine Learning Research},
  year={2025}
}

@article{siyan2024papillon,
  title={Papillon: Privacy preservation from internet-based and local language model ensembles},
  author={Siyan, Li and Raghuram, Vethavikashini Chithrra and Khattab, Omar and Hirschberg, Julia and Yu, Zhou},
  journal={arXiv preprint arXiv:2410.17127},
  year={2024}
}

@article{yi2025privacy,
  title={Privacy Reasoning in Ambiguous Contexts},
  author={Yi, Ren and Suciu, Octavian and Gascon, Adria and Meiklejohn, Sarah and Bagdasarian, Eugene and Gruteser, Marco},
  journal={arXiv preprint arXiv:2506.12241},
  year={2025}
}

@inproceedings{ngong2025protecting,
  title={Protecting users from themselves: Safeguarding contextual privacy in interactions with conversational agents},
  author={Ngong, Ivoline C and Kadhe, Swanand Ravindra and Wang, Hao and Murugesan, Keerthiram and Weisz, Justin D and Dhurandhar, Amit and Ramamurthy, Karthikeyan Natesan},
  booktitle={Findings of the Association for Computational Linguistics: ACL 2025},
  pages={26196--26220},
  year={2025}
}

@inproceedings{klymenko-etal-2022-differential,
    title = "Differential Privacy in Natural Language Processing: The Story So Far",
    author = "Klymenko, Oleksandra  and
      Meisenbacher, Stephen  and
      Matthes, Florian",
    editor = "Feyisetan, Oluwaseyi  and
      Ghanavati, Sepideh  and
      Thaine, Patricia  and
      Habernal, Ivan  and
      Mireshghallah, Fatemehsadat",
    booktitle = "Proceedings of the Fourth Workshop on Privacy in Natural Language Processing",
    month = jul,
    year = "2022",
    address = "Seattle, United States",
    publisher = "Association for Computational Linguistics",
    url = "https://aclanthology.org/2022.privatenlp-1.1/",
    doi = "10.18653/v1/2022.privatenlp-1.1",
    pages = "1--11"
}

@inproceedings{hu-etal-2024-differentially,
    title = "Differentially Private Natural Language Models: Recent Advances and Future Directions",
    author = "Hu, Lijie  and
      Habernal, Ivan  and
      Shen, Lei  and
      Wang, Di",
    editor = "Graham, Yvette  and
      Purver, Matthew",
    booktitle = "Findings of the Association for Computational Linguistics: EACL 2024",
    month = mar,
    year = "2024",
    address = "St. Julian{'}s, Malta",
    publisher = "Association for Computational Linguistics",
    url = "https://aclanthology.org/2024.findings-eacl.33/",
    doi = "10.18653/v1/2024.findings-eacl.33",
    pages = "478--499"
}

@inproceedings{igamberdiev2023dp,
  title={DP-BART for privatized text rewriting under local differential privacy},
  author={Igamberdiev, Timour and Habernal, Ivan},
  booktitle={Findings of the Association for Computational Linguistics: ACL 2023},
  pages={13914--13934},
  year={2023}
}

@inproceedings{squid,
    title = "{SQU}i{D}: Synthesizing Relational Databases from Unstructured Text",
    author = "Sadia, Mushtari  and
      Yang, Zhenning  and
      Xiao, Yunming  and
      Chen, Ang  and
      Roy Chowdhury, Amrita",
    editor = "Christodoulopoulos, Christos  and
      Chakraborty, Tanmoy  and
      Rose, Carolyn  and
      Peng, Violet",
    booktitle = "Proceedings of the 2025 Conference on Empirical Methods in Natural Language Processing",
    month = nov,
    year = "2025",
    address = "Suzhou, China",
    publisher = "Association for Computational Linguistics",
    url = "https://aclanthology.org/2025.emnlp-main.1629/",
    doi = "10.18653/v1/2025.emnlp-main.1629",
    pages = "31987--32012",
    ISBN = "979-8-89176-332-6"
}

@inproceedings{johnson2020chorus,
  title={Chorus: a programming framework for building scalable differential privacy mechanisms},
  author={Johnson, Noah and Near, Joseph P and Hellerstein, Joseph M and Song, Dawn},
  booktitle={2020 IEEE European Symposium on Security and Privacy (EuroS\&P)},
  pages={535--551},
  year={2020},
  organization={IEEE}
}

@inproceedings{dong2022r2t,
  title={R2t: Instance-optimal truncation for differentially private query evaluation with foreign keys},
  author={Dong, Wei and Fang, Juanru and Yi, Ke and Tao, Yuchao and Machanavajjhala, Ashwin},
  booktitle={Proceedings of the 2022 International Conference on Management of Data},
  pages={759--772},
  year={2022}
}

@article{kotsogiannis2019privatesql,
  title={Privatesql: a differentially private sql query engine},
  author={Kotsogiannis, Ios and Tao, Yuchao and He, Xi and Fanaeepour, Maryam and Machanavajjhala, Ashwin and Hay, Michael and Miklau, Gerome},
  journal={Proceedings of the VLDB Endowment},
  volume={12},
  number={11},
  pages={1371--1384},
  year={2019},
  publisher={VLDB Endowment}
}

@article{bater2018shrinkwrap,
  title={Shrinkwrap: efficient sql query processing in differentially private data federations},
  author={Bater, Johes and He, Xi and Ehrich, William and Machanavajjhala, Ashwin and Rogers, Jennie},
  journal={Proceedings of the VLDB Endowment},
  volume={12},
  number={3},
  year={2018}
}

@article{li2025alpha,
  title={{Alpha-SQL}: Zero-Shot Text-to-SQL Using Monte Carlo Tree Search},
  author={Li, Boyan and Zhang, Jiayi and Fan, Ju and Xu, Yanwei and Chen, Chong and Tang, Nan and Luo, Yuyu},
  journal={arXiv preprint arXiv:2502.17248},
  year={2025}
}

@misc{yue2021differentialprivacytextanalytics,
      title={Differential Privacy for Text Analytics via Natural Text Sanitization}, 
      author={Xiang Yue and Minxin Du and Tianhao Wang and Yaliang Li and Huan Sun and Sherman S. M. Chow},
      year={2021},
      eprint={2106.01221},
      archivePrefix={arXiv},
      primaryClass={cs.CL},
      url={https://arxiv.org/abs/2106.01221}, 
}

@misc{meisenbacher2025leveragingsemantictriplesprivate,
      title={Leveraging Semantic Triples for Private Document Generation with Local Differential Privacy Guarantees}, 
      author={Stephen Meisenbacher and Maulik Chevli and Florian Matthes},
      year={2025},
      eprint={2508.20736},
      archivePrefix={arXiv},
      primaryClass={cs.CL},
      url={https://arxiv.org/abs/2508.20736}, 
}

@inproceedings{zhan2024injecagent,
  title={Injecagent: Benchmarking indirect prompt injections in tool-integrated large language model agents},
  author={Zhan, Qiusi and Liang, Zhixiang and Ying, Zifan and Kang, Daniel},
  booktitle={Findings of the Association for Computational Linguistics: ACL 2024},
  pages={10471--10506},
  year={2024}
}

@inproceedings{jiao-etal-2024-text2db,
    title = "{T}ext2{DB}: Integration-Aware Information Extraction with Large Language Model Agents",
    author = "Jiao, Yizhu  and
      Li, Sha  and
      Zhou, Sizhe  and
      Ji, Heng  and
      Han, Jiawei",
    editor = "Ku, Lun-Wei  and
      Martins, Andre  and
      Srikumar, Vivek",
    booktitle = "Findings of the Association for Computational Linguistics: ACL 2024",
    month = aug,
    year = "2024",
    address = "Bangkok, Thailand",
    publisher = "Association for Computational Linguistics",
    url = "https://aclanthology.org/2024.findings-acl.12/",
    doi = "10.18653/v1/2024.findings-acl.12",
    pages = "185--205"
}

@article{liang2026beyond,
  title={Beyond Tables: Doc2DB-Bench for Relationally Faithful Document-to-Database Construction},
  author={Liang, Zhuowen and Zhang, Zhengxuan and Wang, Jiayang and Chen, Jiazhuo and Tang, Nan},
  journal={arXiv preprint arXiv:2608.08459},
  year={2026}
}

@article{lin2026structure,
  title={Structure then Query: Enabling Precise Analytical Queries over Unstructured Documents},
  author={Lin, Teng and Luo, Yuyu and Tang, Nan},
  journal={arXiv preprint arXiv:2608.13384},
  year={2026}
}

@inproceedings{deng-etal-2024-text,
    title = "Text-Tuple-Table: Towards Information Integration in Text-to-Table Generation via Global Tuple Extraction",
    author = "Deng, Zheye  and
      Chan, Chunkit  and
      Wang, Weiqi  and
      Sun, Yuxi  and
      Fan, Wei  and
      Zheng, Tianshi  and
      Yim, Yauwai  and
      Song, Yangqiu",
    editor = "Al-Onaizan, Yaser  and
      Bansal, Mohit  and
      Chen, Yun-Nung",
    booktitle = "Proceedings of the 2024 Conference on Empirical Methods in Natural Language Processing",
    month = nov,
    year = "2024",
    address = "Miami, Florida, USA",
    publisher = "Association for Computational Linguistics",
    url = "https://aclanthology.org/2024.emnlp-main.523/",
    doi = "10.18653/v1/2024.emnlp-main.523",
    pages = "9300--9322"
}

@misc{arora2025languagemodelsenablesimple,
      title={Language Models Enable Simple Systems for Generating Structured Views of Heterogeneous Data Lakes}, 
      author={Simran Arora and Brandon Yang and Sabri Eyuboglu and Avanika Narayan and Andrew Hojel and Immanuel Trummer and Christopher Ré},
      year={2025},
      eprint={2304.09433},
      archivePrefix={arXiv},
      primaryClass={cs.CL},
      url={https://arxiv.org/abs/2304.09433}, 
}

@inproceedings{fan2024goldcoin,
  title={Goldcoin: Grounding large language models in privacy laws via contextual integrity theory},
  author={Fan, Wei and Li, Haoran and Deng, Zheye and Wang, Weiqi and Song, Yangqiu},
  booktitle={Proceedings of the 2024 Conference on Empirical Methods in Natural Language Processing},
  pages={3321--3343},
  year={2024}
}

@inproceedings{li2025privacy,
  title={Privacy checklist: Privacy violation detection grounding on contextual integrity theory},
  author={Li, Haoran and Fan, Wei and Chen, Yulin and Jiayang, Cheng and Chu, Tianshu and Zhou, Xuebing and Hu, Peizhao and Song, Yangqiu},
  booktitle={Proceedings of the 2025 Conference of the Nations of the Americas Chapter of the Association for Computational Linguistics: Human Language Technologies (Volume 1: Long Papers)},
  pages={1748--1766},
  year={2025}
}

@article{unstructured-data,
    note = {\url{https://mitsloan.mit.edu/ideas-made-to-matter/tapping-power-unstructured-data}},
    author = "Harbert, Tam",

}

@misc{wikipedia_hipaa,
  author       = {{Wikipedia contributors}},
  title        = {Health Insurance Portability and Accountability Act},
  howpublished = {Wikipedia, The Free Encyclopedia},
  year         = {2026},
  url          = {https://en.wikipedia.org/wiki/Health_Insurance_Portability_and_Accountability_Act},
  note         = {Accessed: 2026-08-24}
}

@misc{wikipedia_gdpr,
  author       = {{Wikipedia contributors}},
  title        = {General Data Protection Regulation},
  howpublished = {Wikipedia, The Free Encyclopedia},
  year         = {2026},
  url          = {https://en.wikipedia.org/wiki/General_Data_Protection_Regulation},
  note         = {Accessed: 2026-08-24}
}

@inproceedings{dwork2006differential,
  title={Differential privacy},
  author={Dwork, Cynthia},
  booktitle={International colloquium on automata, languages, and programming},
  pages={1--12},
  year={2006},
  organization={Springer}
}

@inproceedings{10.1145/3173574.3173842,
author = {Wijesekera, Primal and Reardon, Joel and Reyes, Irwin and Tsai, Lynn and Chen, Jung-Wei and Good, Nathan and Wagner, David and Beznosov, Konstantin and Egelman, Serge},
title = {Contextualizing Privacy Decisions for Better Prediction (and Protection)},
year = {2018},
isbn = {9781450356206},
publisher = {Association for Computing Machinery},
address = {New York, NY, USA},
url = {https://doi.org/10.1145/3173574.3173842},
doi = {10.1145/3173574.3173842},
booktitle = {Proceedings of the 2018 CHI Conference on Human Factors in Computing Systems},
pages = {1–13},
numpages = {13},
location = {Montreal QC, Canada},
series = {CHI '18}
}

@inproceedings{meisenbacher-etal-2024-comparative,
    title = "A Comparative Analysis of Word-Level Metric Differential Privacy: Benchmarking the Privacy-Utility Trade-off",
    author = "Meisenbacher, Stephen  and
      Nandakumar, Nihildev  and
      Klymenko, Alexandra  and
      Matthes, Florian",
    editor = "Calzolari, Nicoletta  and
      Kan, Min-Yen  and
      Hoste, Veronique  and
      Lenci, Alessandro  and
      Sakti, Sakriani  and
      Xue, Nianwen",
    booktitle = "Proceedings of the 2024 Joint International Conference on Computational Linguistics, Language Resources and Evaluation (LREC-COLING 2024)",
    month = may,
    year = "2024",
    address = "Torino, Italia",
    publisher = "ELRA and ICCL",
    url = "https://aclanthology.org/2024.lrec-main.16/",
    pages = "174--185"
}

@inproceedings{feyisetan2020privacy,
  title={Privacy-and utility-preserving textual analysis via calibrated multivariate perturbations},
  author={Feyisetan, Oluwaseyi and Balle, Borja and Drake, Thomas and Diethe, Tom},
  booktitle={Proceedings of the 13th international conference on web search and data mining},
  pages={178--186},
  year={2020}
}

@inproceedings{xu-etal-2020-differentially,
    title = "A Differentially Private Text Perturbation Method Using Regularized Mahalanobis Metric",
    author = "Xu, Zekun  and
      Aggarwal, Abhinav  and
      Feyisetan, Oluwaseyi  and
      Teissier, Nathanael",
    editor = "Feyisetan, Oluwaseyi  and
      Ghanavati, Sepideh  and
      Malmasi, Shervin  and
      Thaine, Patricia",
    booktitle = "Proceedings of the Second Workshop on Privacy in NLP",
    month = nov,
    year = "2020",
    address = "Online",
    publisher = "Association for Computational Linguistics",
    url = "https://aclanthology.org/2020.privatenlp-1.2/",
    doi = "10.18653/v1/2020.privatenlp-1.2",
    pages = "7--17"
}

@inproceedings{feyisetan-kasiviswanathan-2021-private,
    title = "Private Release of Text Embedding Vectors",
    author = "Feyisetan, Oluwaseyi  and
      Kasiviswanathan, Shiva",
    editor = "Pruksachatkun, Yada  and
      Ramakrishna, Anil  and
      Chang, Kai-Wei  and
      Krishna, Satyapriya  and
      Dhamala, Jwala  and
      Guha, Tanaya  and
      Ren, Xiang",
    booktitle = "Proceedings of the First Workshop on Trustworthy Natural Language Processing",
    month = jun,
    year = "2021",
    address = "Online",
    publisher = "Association for Computational Linguistics",
    url = "https://aclanthology.org/2021.trustnlp-1.3/",
    doi = "10.18653/v1/2021.trustnlp-1.3",
    pages = "15--27"
}

@misc{google_document_ai,
  author       = {{Google Cloud}},
  title        = {Document AI Overview},
  howpublished = {Google Cloud Documentation},
  year         = {2026},
  url          = {https://docs.cloud.google.com/document-ai/docs/overview},
  note         = {Accessed: 2026-08-24}
}

@misc{microsoft_content_understanding,
  author       = {{Microsoft}},
  title        = {Choose an Azure AI Targeted Language Processing Technology},
  howpublished = {Microsoft Learn},
  year         = {2026},
  url          = {https://learn.microsoft.com/en-us/azure/architecture/data-guide/ai-services/targeted-language-processing},
  note         = {Accessed: 2026-08-24}
}

@misc{microsoft_document_intelligence,
  author       = {{Microsoft}},
  title        = {What Is Azure Document Intelligence in Foundry Tools?},
  howpublished = {Microsoft Learn},
  year         = {2026},
  url          = {https://learn.microsoft.com/en-us/azure/ai-services/document-intelligence/overview?view=doc-intel-4.0.0},
  note         = {Accessed: 2026-08-24}
}

@misc{unstructured_extract,
  author       = {{Unstructured}},
  title        = {Introducing: Extract},
  howpublished = {Unstructured},
  year         = {2026},
  url          = {https://unstructured.io/blog/introducing-extract},
  note         = {Accessed: 2026-08-24}
}

@misc{talonic,
  author       = {{Talonic}},
  title        = {Talonic: Ingest Once. Query Forever.},
  howpublished = {Talonic},
  year         = {2026},
  url          = {https://talonic.com/},
  note         = {Accessed: 2026-08-24}
}

@misc{unstract_aws,
  author       = {{Unstract}},
  title        = {Unstract: Agentic Document Processing and Extraction Platform},
  howpublished = {AWS Marketplace},
  year         = {2026},
  url          = {https://aws.amazon.com/marketplace/pp/prodview-jlud5hmy2npui},
  note         = {Accessed: 2026-08-24}
}

@misc{snowflake_unstructured_data,
  author       = {{Snowflake}},
  title        = {Unstructured Data Analytics and AI Solutions},
  howpublished = {Snowflake},
  year         = {2026},
  url          = {https://www.snowflake.com/en/product/use-cases/unstructured-data-analytics/},
  note         = {Accessed: 2026-08-24}
}

@article{zhu2024large,
  title={Large language model enhanced text-to-sql generation: A survey},
  author={Zhu, Xiaohu and Li, Qian and Cui, Lizhen and Liu, Yongkang},
  journal={arXiv preprint arXiv:2410.06011},
  year={2024}
}

@article{hong2025next,
  title={Next-generation database interfaces: A survey of llm-based text-to-sql},
  author={Hong, Zijin and Yuan, Zheng and Zhang, Qinggang and Chen, Hao and Dong, Junnan and Huang, Feiran and Huang, Xiao},
  journal={IEEE Transactions on Knowledge and Data Engineering},
  year={2025},
  publisher={IEEE}
}

@article{shi2025survey,
  title={A survey on employing large language models for text-to-sql tasks},
  author={Shi, Liang and Tang, Zhengju and Zhang, Nan and Zhang, Xiaotong and Yang, Zhi},
  journal={ACM Computing Surveys},
  volume={58},
  number={2},
  pages={1--37},
  year={2025},
  publisher={ACM New York, NY}
}

@article{mohammadjafari2024natural,
  title={From natural language to sql: Review of llm-based text-to-sql systems},
  author={Mohammadjafari, Ali and Maida, Anthony S and Gottumukkala, Raju},
  journal={arXiv preprint arXiv:2410.01066},
  year={2024}
}

@misc{reuters2022ccpa,
  author       = {{Reuters}},
  title        = {Little breathing room: California privacy agency modifies proposed regulations},
  year         = {2022},
  month        = dec,
  day          = {8},
  url          = {https://www.reuters.com/legal/legalindustry/little-breathing-room-california-privacy-agency-modifies-proposed-regulations-2022-12-08/},
}

@misc{congruity360ccpa,
  author       = {{Congruity360}},
  title        = {CCPA Compliance Guide},
  url          = {https://www.congruity360.com/blog/ccpa-compliance-guide/}
}

@article{johnson2018towards,
  title={Towards practical differential privacy for SQL queries},
  author={Johnson, Noah and Near, Joseph P and Song, Dawn},
  journal={Proceedings of the VLDB Endowment},
  volume={11},
  number={5},
  pages={526--539},
  year={2018},
  publisher={VLDB Endowment}
}

@article{wilson2019differentially,
  title={Differentially private SQL with bounded user contribution},
  author={Wilson, Royce J and Zhang, Celia Yuxin and Lam, William and Desfontaines, Damien and Simmons-Marengo, Daniel and Gipson, Bryant},
  journal={arXiv preprint arXiv:1909.01917},
  year={2019}
}

@inproceedings{nissim2007smooth,
  title={Smooth sensitivity and sampling in private data analysis},
  author={Nissim, Kobbi and Raskhodnikova, Sofya and Smith, Adam},
  booktitle={Proceedings of the thirty-ninth annual ACM symposium on Theory of computing},
  pages={75--84},
  year={2007}
}

@misc{wikipedia_shapley_value,
  author       = {{Wikipedia contributors}},
  title        = {Shapley Value},
  howpublished = {\url{https://en.wikipedia.org/wiki/Shapley_value}},
  note         = {Wikipedia, The Free Encyclopedia},
  year         = {2026}
}

@inproceedings{yuan2026mctssql,
author = {Yuan, Shuozhi and Chen, Liming and Yuan, Miaomiao and Jin, Zhao},
title = {MCTS-SQL: light-weight LLMs can master the text-to-SQL through Monte Carlo tree search},
year = {2026},
isbn = {978-1-57735-906-7},
publisher = {AAAI Press},
url = {https://doi.org/10.1609/aaai.v40i41.40751},
doi = {10.1609/aaai.v40i41.40751},
booktitle = {Proceedings of the Fortieth AAAI Conference on Artificial Intelligence and Thirty-Eighth Conference on Innovative Applications of Artificial Intelligence and Sixteenth Symposium on Educational Advances in Artificial Intelligence},
articleno = {3850},
numpages = {9},
series = {AAAI'26/IAAI'26/EAAI'26}
}

@inproceedings{amb2,
  title={Evaluating Ambiguous Questions in Text2SQL},
  author={Papicchio, Simone and Cagliero, Luca and Papotti, Paolo},
  booktitle={ELLIS workshop on Representation Learning and Generative Models for Structured Data}
}

@inproceedings{amb1,
  title={PRACTIQ: A practical conversational text-to-SQL dataset with ambiguous and unanswerable queries},
  author={Dong, Mingwen and Kumar, Nischal Ashok and Hu, Yiqun and Chauhan, Anuj and Hang, Chung-Wei and Chang, Shuaichen and Pan, Lin and Lan, Wuwei and Zhu, Henghui and Jiang, Jiarong and others},
  booktitle={Proceedings of the 2025 Conference of the Nations of the Americas Chapter of the Association for Computational Linguistics: Human Language Technologies (Volume 1: Long Papers)},
  pages={255--273},
  year={2025}
}

@inproceedings{amb3,
  title={Benchmarking and improving text-to-SQL generation under ambiguity},
  author={Bhaskar, Adithya and Tomar, Tushar and Sathe, Ashutosh and Sarawagi, Sunita},
  booktitle={Proceedings of the 2023 Conference on Empirical Methods in Natural Language Processing},
  pages={7053--7074},
  year={2023}
}

@inproceedings{amb4,
  title={Ambisql: Interactive ambiguity detection and resolution for text-to-sql},
  author={Ding, Zhongjun and Lin, Yin and Zeng, Tianjing and Zhu, Rong and Ding, Bolin and Zhou, Jingren},
  booktitle={Companion of the International Conference on Management of Data},
  pages={26--29},
  year={2026}
}

@InProceedings{pmlr-v267-shvartzshnaider25a,
  title = 	 {Position: Contextual Integrity is Inadequately Applied to Language Models},
  author =       {Shvartzshnaider, Yan and Duddu, Vasisht},
  booktitle = 	 {Proceedings of the 42nd International Conference on Machine Learning},
  pages = 	 {82200--82210},
  year = 	 {2025},
  editor = 	 {Singh, Aarti and Fazel, Maryam and Hsu, Daniel and Lacoste-Julien, Simon and Berkenkamp, Felix and Maharaj, Tegan and Wagstaff, Kiri and Zhu, Jerry},
  volume = 	 {267},
  series = 	 {Proceedings of Machine Learning Research},
  month = 	 {13--19 Jul},
  publisher =    {PMLR},
  url = 	 {https://proceedings.mlr.press/v267/shvartzshnaider25a.html}
}

@article{dp4sql,
  title={DP4SQL: Differentially Private SQL with Flexible Privacy Policies},
  author={Cascio, Andrew and Tong, KinChin and Kifer, Daniel and Ding, Zeyu and Zhang, Danfeng},
  journal={arXiv preprint arXiv:2606.07883},
  year={2026}
}

@inproceedings{blowfish,
  title={Blowfish privacy: Tuning privacy-utility trade-offs using policies},
  author={He, Xi and Machanavajjhala, Ashwin and Ding, Bolin},
  booktitle={Proceedings of the 2014 ACM SIGMOD international conference on Management of data},
  pages={1447--1458},
  year={2014}
}

@article{kifer2014pufferfish,
  title={Pufferfish: A framework for mathematical privacy definitions},
  author={Kifer, Daniel and Machanavajjhala, Ashwin},
  journal={ACM Transactions on Database Systems (TODS)},
  volume={39},
  number={1},
  pages={1--36},
  year={2014},
  publisher={ACM New York, NY, USA}
}

@inproceedings{chowdhury2025pr,
  title={Preempt: Sanitizing Sensitive Prompts for LLMs},
  author={Chowdhury, Amrita Roy and Glukhov, David and Anshumaan, Divyam and Chalasani, Prasad and Papernot, Nicolas and Jha, Somesh and Bellare, Mihir},
  booktitle={Network and Distributed System Security Symposium (NDSS)},
  year={2026},
  doi={10.14722/ndss.2026.231277}
}

@inproceedings{shvartzshanider-etal-2018-recipe,
    title = "{RECIPE}: Applying Open Domain Question Answering to Privacy Policies",
    author = "Shvartzshanider, Yan  and
      Balashankar, Ananth  and
      Wies, Thomas  and
      Subramanian, Lakshminarayanan",
    editor = "Choi, Eunsol  and
      Seo, Minjoon  and
      Chen, Danqi  and
      Jia, Robin  and
      Berant, Jonathan",
    booktitle = "Proceedings of the Workshop on Machine Reading for Question Answering",
    month = jul,
    year = "2018",
    address = "Melbourne, Australia",
    publisher = "Association for Computational Linguistics",
    url = "https://aclanthology.org/W18-2608/",
    doi = "10.18653/v1/W18-2608",
    pages = "71--77"
}

@inproceedings{shvartzshnaider2019going,
  title={Going against the (appropriate) flow: A contextual integrity approach to privacy policy analysis},
  author={Shvartzshnaider, Yan and Apthorpe, Noah and Feamster, Nick and Nissenbaum, Helen},
  booktitle={Proceedings of the AAAI Conference on Human Computation and Crowdsourcing},
  volume={7},
  pages={162--170},
  year={2019}
}

@inproceedings{shvartzshanider2023beyond,
    title = "Beyond The Text: Analysis of Privacy Statements through Syntactic and Semantic Role Labeling",
    author = "Shvartzshanider, Yan  and
      Balashankar, Ananth  and
      Wies, Thomas  and
      Subramanian, Lakshminarayanan",
    editor = "Preoțiuc-Pietro, Daniel  and
      Goanta, Catalina  and
      Chalkidis, Ilias  and
      Barrett, Leslie  and
      Spanakis, Gerasimos  and
      Aletras, Nikolaos",
    booktitle = "Proceedings of the Natural Legal Language Processing Workshop 2023",
    month = dec,
    year = "2023",
    address = "Singapore",
    publisher = "Association for Computational Linguistics",
    url = "https://aclanthology.org/2023.nllp-1.10/",
    pages = "85--98"
}

@article{chanenson2025automating,
  title={Automating governing knowledge commons and contextual integrity (GKC-CI) privacy policy annotations with large language models},
  author={Chanenson, Jake and Pickering, Madison and Apthorpe, Noah},
  journal={arXiv preprint arXiv:2311.02192},
  year={2023}
}

@inproceedings{xiao2024contextual,
    title = "Large Language Models Can Be Contextual Privacy Protection Learners",
    author = "Xiao, Yijia  and
      Jin, Yiqiao  and
      Bai, Yushi  and
      Wu, Yue  and
      Yang, Xianjun  and
      Luo, Xiao  and
      Yu, Wenchao  and
      Zhao, Xujiang  and
      Liu, Yanchi  and
      Gu, Quanquan  and
      Chen, Haifeng  and
      Wang, Wei  and
      Cheng, Wei",
    editor = "Al-Onaizan, Yaser  and
      Bansal, Mohit  and
      Chen, Yun-Nung",
    booktitle = "Proceedings of the 2024 Conference on Empirical Methods in Natural Language Processing",
    month = nov,
    year = "2024",
    address = "Miami, Florida, USA",
    publisher = "Association for Computational Linguistics",
    url = "https://aclanthology.org/2024.emnlp-main.785/",
    doi = "10.18653/v1/2024.emnlp-main.785",
    pages = "14179--14201"
}

@inproceedings{wang2025privacy,
    title = "Privacy in Action: Towards Realistic Privacy Mitigation and Evaluation for {LLM}-Powered Agents",
    author = "Wang, Shouju  and
      Yu, Fenglin  and
      Liu, Xirui  and
      Qin, Xiaoting  and
      Zhang, Jue  and
      Lin, Qingwei  and
      Zhang, Dongmei  and
      Rajmohan, Saravan",
    editor = "Christodoulopoulos, Christos  and
      Chakraborty, Tanmoy  and
      Rose, Carolyn  and
      Peng, Violet",
    booktitle = "Findings of the Association for Computational Linguistics: EMNLP 2025",
    month = nov,
    year = "2025",
    address = "Suzhou, China",
    publisher = "Association for Computational Linguistics",
    url = "https://aclanthology.org/2025.findings-emnlp.925/",
    doi = "10.18653/v1/2025.findings-emnlp.925",
    pages = "17055--17074",
    ISBN = "979-8-89176-335-7"
}

@misc{huang2026need,
      title={Need to Know: Contextual-Integrity-Grounded Query Rewriting for Privacy-Conscious LLM Delegation}, 
      author={Xinyue Huang and Xiaochun Cao and Wenyuan Yang},
      year={2026},
      eprint={2606.04067},
      archivePrefix={arXiv},
      primaryClass={cs.CR},
      url={https://arxiv.org/abs/2606.04067}, 
}

@inproceedings{hu2024differentially,
    title = "Differentially Private Natural Language Models: Recent Advances and Future Directions",
    author = "Hu, Lijie  and
      Habernal, Ivan  and
      Shen, Lei  and
      Wang, Di",
    editor = "Graham, Yvette  and
      Purver, Matthew",
    booktitle = "Findings of the Association for Computational Linguistics: EACL 2024",
    month = mar,
    year = "2024",
    address = "St. Julian{'}s, Malta",
    publisher = "Association for Computational Linguistics",
    url = "https://aclanthology.org/2024.findings-eacl.33/",
    doi = "10.18653/v1/2024.findings-eacl.33",
    pages = "478--499"
}

@inproceedings{meehan2022sentence,
    title = "Sentence-level Privacy for Document Embeddings",
    author = "Meehan, Casey  and
      Mrini, Khalil  and
      Chaudhuri, Kamalika",
    editor = "Muresan, Smaranda  and
      Nakov, Preslav  and
      Villavicencio, Aline",
    booktitle = "Proceedings of the 60th Annual Meeting of the Association for Computational Linguistics (Volume 1: Long Papers)",
    month = may,
    year = "2022",
    address = "Dublin, Ireland",
    publisher = "Association for Computational Linguistics",
    url = "https://aclanthology.org/2022.acl-long.238/",
    doi = "10.18653/v1/2022.acl-long.238",
    pages = "3367--3380"
}

@inproceedings{mattern2022limits,
    title = "The Limits of Word Level Differential Privacy",
    author = "Mattern, Justus  and
      Weggenmann, Benjamin  and
      Kerschbaum, Florian",
    editor = "Carpuat, Marine  and
      de Marneffe, Marie-Catherine  and
      Meza Ruiz, Ivan Vladimir",
    booktitle = "Findings of the Association for Computational Linguistics: NAACL 2022",
    month = jul,
    year = "2022",
    address = "Seattle, United States",
    publisher = "Association for Computational Linguistics",
    url = "https://aclanthology.org/2022.findings-naacl.65/",
    doi = "10.18653/v1/2022.findings-naacl.65",
    pages = "867--881"
}

@inproceedings{awon2025clusant,
    title = "{C}lu{S}an{T}: Differentially Private and Semantically Coherent Text Sanitization",
    author = "Awon, Ahmed Musa  and
      Lu, Yun  and
      Potka, Shera  and
      Thomo, Alex",
    editor = "Chiruzzo, Luis  and
      Ritter, Alan  and
      Wang, Lu",
    booktitle = "Proceedings of the 2025 Conference of the Nations of the Americas Chapter of the Association for Computational Linguistics: Human Language Technologies (Volume 1: Long Papers)",
    month = apr,
    year = "2025",
    address = "Albuquerque, New Mexico",
    publisher = "Association for Computational Linguistics",
    url = "https://aclanthology.org/2025.naacl-long.187/",
    doi = "10.18653/v1/2025.naacl-long.187",
    pages = "3676--3693",
    ISBN = "979-8-89176-189-6"
}

@inproceedings{zhang2025dyntext,
    title = "{DYNTEXT}: Semantic-Aware Dynamic Text Sanitization for Privacy-Preserving {LLM} Inference",
    author = "Zhang, Juhua  and
      Tian, Zhiliang  and
      Zhu, Minghang  and
      Song, Yiping  and
      Sheng, Taishu  and
      Yang, Siyi  and
      Du, Qiunan  and
      Liu, Xinwang  and
      Huang, Minlie  and
      Li, Dongsheng",
    editor = "Che, Wanxiang  and
      Nabende, Joyce  and
      Shutova, Ekaterina  and
      Pilehvar, Mohammad Taher",
    booktitle = "Findings of the Association for Computational Linguistics: ACL 2025",
    month = jul,
    year = "2025",
    address = "Vienna, Austria",
    publisher = "Association for Computational Linguistics",
    url = "https://aclanthology.org/2025.findings-acl.1038/",
    doi = "10.18653/v1/2025.findings-acl.1038",
    pages = "20243--20255",
    ISBN = "979-8-89176-256-5"
}

@inproceedings{mcmahan2018learning,title	= {Learning Differentially Private Recurrent Language Models},author	= {Brendan McMahan and Daniel Ramage and Kunal Talwar and Li Zhang},year	= {2018},URL	= {https://openreview.net/pdf?id=BJ0hF1Z0b},booktitle	= {International Conference on Learning Representations (ICLR)}}

@inproceedings{li2022large,
  title     = {{Large Language Models Can Be Strong Differentially Private Learners}},
  author    = {Li, Xuechen and Tramer, Florian and Liang, Percy and Hashimoto, Tatsunori},
  booktitle = {International Conference on Learning Representations},
  year      = {2022},
  url       = {https://mlanthology.org/iclr/2022/li2022iclr-large/}
}

@inproceedings{yu2022differentially,
author = {Yu, Da and Naik, Saurabh and Backurs, Arturs and Gopi, Sivakanth and Inan, Huseyin and Kamath, Gautam and Kulkarni, Janardhan (Jana) and Lee, Yin Tat and Manoel, Andre and Wutschitz, Lukas and Yekhanin, Sergey and Zhang, Huishuai},
title = {Differentially private fine-tuning of language models},
booktitle = {ICLR 2022},
year = {2022},
month = {April},
url = {https://www.microsoft.com/en-us/research/publication/differentially-private-fine-tuning-of-language-models/},
}

@inproceedings{anil2022large,
    title = "Large-Scale Differentially Private {BERT}",
    author = "Anil, Rohan  and
      Ghazi, Badih  and
      Gupta, Vineet  and
      Kumar, Ravi  and
      Manurangsi, Pasin",
    editor = "Goldberg, Yoav  and
      Kozareva, Zornitsa  and
      Zhang, Yue",
    booktitle = "Findings of the Association for Computational Linguistics: EMNLP 2022",
    month = dec,
    year = "2022",
    address = "Abu Dhabi, United Arab Emirates",
    publisher = "Association for Computational Linguistics",
    url = "https://aclanthology.org/2022.findings-emnlp.484/",
    doi = "10.18653/v1/2022.findings-emnlp.484",
    pages = "6481--6491"
}

@inproceedings{mattern2022language,
    title = "Differentially Private Language Models for Secure Data Sharing",
    author = {Mattern, Justus  and
      Jin, Zhijing  and
      Weggenmann, Benjamin  and
      Sch{\"o}lkopf, Bernhard  and
      Sachan, Mrinmaya},
    editor = "Goldberg, Yoav  and
      Kozareva, Zornitsa  and
      Zhang, Yue",
    booktitle = "Proceedings of the 2022 Conference on Empirical Methods in Natural Language Processing",
    month = dec,
    year = "2022",
    address = "Abu Dhabi, United Arab Emirates",
    publisher = "Association for Computational Linguistics",
    url = "https://aclanthology.org/2022.emnlp-main.323/",
    doi = "10.18653/v1/2022.emnlp-main.323",
    pages = "4860--4873"
}

@inproceedings{yue2023synthetic,
    title = "Synthetic Text Generation with Differential Privacy: A Simple and Practical Recipe",
    author = "Yue, Xiang  and
      Inan, Huseyin  and
      Li, Xuechen  and
      Kumar, Girish  and
      McAnallen, Julia  and
      Shajari, Hoda  and
      Sun, Huan  and
      Levitan, David  and
      Sim, Robert",
    editor = "Rogers, Anna  and
      Boyd-Graber, Jordan  and
      Okazaki, Naoaki",
    booktitle = "Proceedings of the 61st Annual Meeting of the Association for Computational Linguistics (Volume 1: Long Papers)",
    month = jul,
    year = "2023",
    address = "Toronto, Canada",
    publisher = "Association for Computational Linguistics",
    url = "https://aclanthology.org/2023.acl-long.74/",
    doi = "10.18653/v1/2023.acl-long.74",
    pages = "1321--1342"
}

@article{haney2016policy,
  title={Design of policy-aware differentially private algorithms},
  author={Haney, Samuel and Machanavajjhala, Ashwin and Ding, Bolin},
  journal={arXiv preprint arXiv:1404.3722},
  year={2014}
}

@inproceedings{shi2022selective,
    title = "Selective Differential Privacy for Language Modeling",
    author = "Shi, Weiyan  and
      Cui, Aiqi  and
      Li, Evan  and
      Jia, Ruoxi  and
      Yu, Zhou",
    editor = "Carpuat, Marine  and
      de Marneffe, Marie-Catherine  and
      Meza Ruiz, Ivan Vladimir",
    booktitle = "Proceedings of the 2022 Conference of the North American Chapter of the Association for Computational Linguistics: Human Language Technologies",
    month = jul,
    year = "2022",
    address = "Seattle, United States",
    publisher = "Association for Computational Linguistics",
    url = "https://aclanthology.org/2022.naacl-main.205/",
    doi = "10.18653/v1/2022.naacl-main.205",
    pages = "2848--2859"
}

@inproceedings{li-etal-2025-privaci,
    title = "{P}riva{CI}-Bench: Evaluating Privacy with Contextual Integrity and Legal Compliance",
    author = "Li, Haoran  and
      Hu, Wenbin  and
      Jing, Huihao  and
      Chen, Yulin  and
      Hu, Qi  and
      Han, Sirui  and
      Chu, Tianshu  and
      Hu, Peizhao  and
      Song, Yangqiu",
    editor = "Che, Wanxiang  and
      Nabende, Joyce  and
      Shutova, Ekaterina  and
      Pilehvar, Mohammad Taher",
    booktitle = "Proceedings of the 63rd Annual Meeting of the Association for Computational Linguistics (Volume 1: Long Papers)",
    month = jul,
    year = "2025",
    address = "Vienna, Austria",
    publisher = "Association for Computational Linguistics",
    url = "https://aclanthology.org/2025.acl-long.518/",
    doi = "10.18653/v1/2025.acl-long.518",
    pages = "10544--10559",
    ISBN = "979-8-89176-251-0"
}

@inproceedings{dong2022nearly,
  title={A nearly instance-optimal differentially private mechanism for conjunctive queries},
  author={Dong, Wei and Yi, Ke},
  booktitle={Proceedings of the 41st ACM SIGMOD-SIGACT-SIGAI Symposium on Principles of Database Systems},
  pages={213--225},
  year={2022}
}

@misc{sft,
      title={Unveiling the Secret Recipe: A Guide For Supervised Fine-Tuning Small LLMs}, 
      author={Aldo Pareja and Nikhil Shivakumar Nayak and Hao Wang and Krishnateja Killamsetty and Shivchander Sudalairaj and Wenlong Zhao and Seungwook Han and Abhishek Bhandwaldar and Guangxuan Xu and Kai Xu and Ligong Han and Luke Inglis and Akash Srivastava},
      year={2024},
      eprint={2412.13337},
      archivePrefix={arXiv},
      primaryClass={cs.LG},
      url={https://arxiv.org/abs/2412.13337}, 
}

@article{dpo,
  title={Direct preference optimization: Your language model is secretly a reward model},
  author={Rafailov, Rafael and Sharma, Archit and Mitchell, Eric and Manning, Christopher D and Ermon, Stefano and Finn, Chelsea},
  journal={Advances in neural information processing systems},
  volume={36},
  pages={53728--53741},
  year={2023}
}
\appendix
\newpage
\newpage
\section{Appendix}

\begin{table}[t]
\centering
\small
\setlength{\tabcolsep}{4pt}
\renewcommand{\arraystretch}{1.02}
\caption{Notation used throughout the paper.}
\label{tab:notation}
\begin{tabularx}{\columnwidth}{@{}lX@{}}
\toprule
\textbf{Notation} & \textbf{Meaning} \\
\midrule
$\NLQ,\Sch,\Pol$ & Natural language query; schema; privacy policy. \\
$\PrivDB,\SynDB$ & Private; DP synthetic database. \\
$\CandSet,\OptQ$ & Candidate queries; selected query. \\
$\Gamma$ & Schema integrity constraints. \\
\midrule
$I\sim I'$ & Neighboring database instances. \\
$\GS{q},\LS{q},\RS{q}$ & Global, local, residual sensitivity. \\
$q_E,\boundary{E}$ & Residual query; boundary attributes. \\
$\multiplicity{i}{b}$ & Boundary multiplicity at $b$. \\
$\LS{i},\witness{i}$ & Local sensitivity; maximizing boundary value. \\
$\mufree{1},\mufreemin,\Munocci$ & Private boundary multiplicities. \\
$\PrivRels,\DistSet{k}$ & Private relations; distance vectors. \\
\midrule
$\Vsnd{i},\Vrec{i},\Vsub{i}$ & Sender, recipient, subject views. \\
$\Vth{i},\Vdat{i}$ & Condition and data views. \\
$o=(R,\mathit{tid},A)$ & Source cell provenance. \\
$\Req=(s_\rho,r_\rho,\pi_\rho)$ & Runtime sender, recipient, purpose. \\
$\Pos_\rho,\Viol_\rho$ & Permitted and prohibited information. \\
$\Vex$ & Information authorized for exact release. \\
$\public$ & Authorized database cells. \\
\bottomrule
\end{tabularx}
\end{table}

\subsection{Integrating Contextual Integrity and Differential Privacy (Cntd.)}
\label{app:cidp}

We discuss our inference rules in detail below:

\paragraph{Boundary multiplicities.}
Recall from Sec.~\ref{sec:background} that RS measures how strongly a tuple in $R_i$ can
be amplified through the residual join. For $R_i$, this amplification is
determined by the largest boundary multiplicity,
$\witness{i}=\LS{i}$. Thus, contextual information can tighten the local
sensitivity only if it can remove the boundary value
realizing this witness from the neighborhood.

Under contextual DP, values authorized by $\public$ are fixed across
contextual neighbors. Among the occupied boundary values that remain
private, let $\mufree{1}$ and $\mufreemin$ denote the largest and smallest
multiplicities, respectively. Under deletion, $\mufree{1}$ is the largest
private contribution that can still be removed. Under change neighborhood, a
tuple may move from one boundary value to another, so $\mufreemin$ captures
the smallest contribution it can move to; changing from the original witness
can therefore change the answer by at most
$\witness{i}-\mufreemin$. Hence,
$\mufreemin\leq\mufree{1}\leq\witness{i}$.

For insertions, $\Munocci$ denotes the largest multiplicity at a boundary
value not currently occupied by $R_i$, since a new tuple may introduce such
a value. If referential integrity prevents introducing new boundary values,
then $\Munocci=0$. The rules below characterize which of these multiplicities
remain reachable under each neighboring edit.

\noindent\textbf{Deletion.}
When a selection of tuples are contextually-public, i.e., fixed across contextual
neighbors, they cannot be deleted. Any valid deletion must come from
the remaining private tuples. The worst case is thus the private tuple with
the largest boundary multiplicity, giving $\mufree{1}$ in
\textsc{Sel-Del}. Intuitively, contextual information removes public tuples
from the set of possible deletion witnesses.

\noindent\textbf{Change.}
Under change neighborhood, tuple existence is fixed, but an existing tuple may
change the boundary value through which it joins with the residual query.
If the complete boundary projection is public, these join relevant values
are fixed, so changing the tuple cannot alter its join multiplicity and the
sensitivity contribution is zero (\textsc{Prj-Chg}). If only part of the
boundary information is public, a change can still have a large effect in
two ways: it can move a tuple away from the original witness, changing its
contribution by at most $\witness{i}-\mufreemin$, or it can move to a fully
private boundary value with multiplicity at most $\mufree{1}$. Taking the
larger of these possibilities gives
$\max\{\witness{i}-\mufreemin,\mufree{1}\}$ in
\textsc{PrjSel-Chg} and in the model allowing both deletion and change
(\textsc{Sel}).

\noindent\textbf{Insertion.}
Insertion is different because a new tuple is not restricted to boundary
values already present in the public data. It may introduce a new boundary
value and still realize the ordinary worst case, so \textsc{Ins} retains
$\LS{i}$. A tighter bound becomes possible when $\boundary{i}$ is unique:
with a unique boundary, an insertion cannot reuse an occupied boundary value,
so its effect is bounded by $\Munocci$, the largest multiplicity at an
unoccupied boundary value. Deletion or change may still affect an existing
private tuple, with contribution bounded by $\mufree{1}$. Since the rule
must cover all allowed edits, \textsc{Sel-Unique} gives
$\max\{\mufree{1},\Munocci\}$. Referential integrity
can further rule out new boundary values, forcing $\Munocci=0$. Finally, if
the entire relation is public, every tuple and value is fixed across
contextual neighbors, so \textsc{Pub} gives zero sensitivity.
Unlike ordinary residual sensitivity, contextual residual sensitivity is query- and context-dependent: the same query on the same database can have different sensitivity for different queriers or purposes because CI authorizes different information flows.

\subsection{Proofs}
\label{app:proofs}
\begin{theorem}[End-to-end differential privacy]
\label{thm:end-to-end-dp}
If the synthetic release is $(\varepsilon_{\mathrm{syn}},\delta_{\mathrm{syn}})$-DP and the final query release is $(\varepsilon_{\mathrm{rel}},\delta_{\mathrm{rel}})$-DP, then \sysname{} is
$(\varepsilon_{\mathrm{syn}}+\varepsilon_{\mathrm{rel}},
\delta_{\mathrm{syn}}+\delta_{\mathrm{rel}})$-DP.
\end{theorem}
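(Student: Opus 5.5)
The argument is a standard adaptive sequential composition, combined with post-processing. We model the whole DP pipeline of \sysname{} as a two-stage mechanism $\mathcal{M}(\PrivDB)=(\SynDB,\hat a)$.

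In the first stage, $\SynDB\leftarrow\SynMech(\PrivDB;\varepsilon_{\mathrm{syn}},\delta_{\mathrm{syn}})$.

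In the second stage, the search output $\OptQ=g(\SynDB,\NLQ,\Sch,\text{LLM randomness})$ is computed. We then release $\hat a=\OptQ(\PrivDB)+\eta$, where $\eta\sim Lap(\RS{\OptQ}(\PrivDB)/\epsilon)$.

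The first step is to observe that the whole MCTS procedure touches only $\SynDB$ and public inputs: warm-start execution clusters, sensitivity-guided pruning, Shapley credits, and final selection. The LLM randomness is independent of $\PrivDB$. Hence $\OptQ$ is a randomized function of $\SynDB$ alone. By the post-processing property, the pair $(\SynDB,\OptQ)$ is still $(\varepsilon_{\mathrm{syn}},\delta_{\mathrm{syn}})$-DP.

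The second step uses the hypothesis that, for every fixed query $q$ in the DP-compatible class, the release $\mathcal{M}_q(\PrivDB)=q(\PrivDB)+Lap(\RS{q}(\PrivDB)/\epsilon)$ is $(\varepsilon_{\mathrm{rel}},\delta_{\mathrm{rel}})$-DP. This follows from residual sensitivity being a $\beta$-smooth upper bound~\cite{dong2021residual,nissim2007smooth}. The hypothesis must hold uniformly in $q$, because $q$ is chosen adaptively.

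The third step, and the only delicate one, is the adaptive composition. Fix neighbors $D\sim D'$ and an output set $O$. Let $\mu_D$ be the law of $z=(\SynDB,\OptQ)$. For each $z$, let $O_z=\{a:(z,a)\in O\}$, so that
\[
\Pr[\mathcal{M}(D)\in O]=\int \Pr[\mathcal{M}_{q(z)}(D)\in O_z]\,d\mu_D(z).
\]
Bound the inner probability by $\min\{1,e^{\varepsilon_{\mathrm{rel}}}\Pr[\mathcal{M}_{q(z)}(D')\in O_z]+\delta_{\mathrm{rel}}\}$. The $\min$ with $1$ is what makes the next move work. Write $f(z)=\min\{1,e^{\varepsilon_{\mathrm{rel}}}\Pr[\mathcal{M}_{q(z)}(D')\in O_z]\}$; this is a function with values in $[0,1]$. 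The stage-one guarantee, applied in its integral form $\int f\,d\mu_D\le e^{\varepsilon_{\mathrm{syn}}}\int f\,d\mu_{D'}+\delta_{\mathrm{syn}}$ (obtained from the set form via the layer-cake representation of $f$), then gives
\[
\Pr[\mathcal{M}(D)\in O]\le e^{\varepsilon_{\mathrm{syn}}+\varepsilon_{\mathrm{rel}}}\Pr[\mathcal{M}(D')\in O]+\delta_{\mathrm{syn}}+\delta_{\mathrm{rel}}.
\]
Alternatively, this step can simply cite the basic adaptive composition theorem of~\cite{dwork2014algorithmic}.

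I expect the main obstacle to be making precise that no other part of the search touches $\PrivDB$. This includes checking that the containment test and the RS values used in pruning are all computed on $\SynDB$. Only the final $\RS{\OptQ}(\PrivDB)$ touches the private data, and it is already accounted for inside the release mechanism's $(\varepsilon_{\mathrm{rel}},\delta_{\mathrm{rel}})$ budget.
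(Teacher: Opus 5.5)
Your proposal is correct and takes the same route as the paper. The search touches only $\SynDB$ and is post-processing, $\OptQ$ is fixed once $\SynDB$ is conditioned on, and the bound follows by adaptive composition~\cite{dwork2014algorithmic}; your integral and layer-cake derivation, with the truncation at $1$, is a valid unpacking of the composition step that the paper simply cites.
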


\begin{proof}
\sysname{} accesses $\PrivDB$ only twice. First, it releases $\SynDB$ with
$(\varepsilon_{\mathrm{syn}},\delta_{\mathrm{syn}})$-DP. The entire search,
including sensitivity evaluation and selection of $\OptQ$, operates only on
$\SynDB$ and is therefore post-processing. Second, \sysname{} evaluates
$\OptQ$ on $\PrivDB$ using an
$(\varepsilon_{\mathrm{rel}},\delta_{\mathrm{rel}})$-DP release mechanism.
Although $\OptQ$ depends on $\SynDB$, conditioned on $\SynDB$ it is fixed.
The result therefore follows by adaptive composition~\cite{dwork2014algorithmic}.
\end{proof}
\begin{theorem}[Equivalence of CI Policy and Compiled Views]
\label{thm:ci-equivalence}
Fix a database $D$ and query
$\rho=(s_\rho,r_\rho,\pi_\rho,q_\rho)$.
Let $\Pol^{+}$ denote the implication-completed CI policy, and assume
every norm in $\Pol^{+}$ is faithfully compiled into its SQL views.

Let
\[
\llbracket \Pol^{+} \rrbracket_{D,\rho}
=
\left\{
\begin{array}{l}
(v,o)\in q_\rho(D)\text{ such that}\\
\text{$(v,o)$ is permitted by at least one}\\
\text{applicable positive norm of $\Pol^{+}$ and}\\
\text{prohibited by no applicable negative}\\
\text{norm of $\Pol^{+}$}
\end{array}
\right\}
\]
denote the information authorized by the CI policy for $\rho$.
Then the compiled SQL views are equivalent to the CI policy:
\[
\boxed{
\Vex
=
\Pos_\rho\setminus\Viol_\rho
=
\llbracket \Pol^{+} \rrbracket_{D,\rho}.
}
\]

Equivalently, for every origin-tagged queried value
$(v,o)\in q_\rho(D)$,
\[
(v,o)\in\Vex
\quad\Longleftrightarrow\quad
(v,o)\in\llbracket \Pol^{+}\rrbracket_{D,\rho}.
\]
Thus, CI enforcement is both sound and complete: it releases no queried
value prohibited by the policy and withholds no queried value authorized
by the policy. Consequently, releasing $\Vex$ satisfies contextual
integrity with respect to $\Pol$.
\end{theorem}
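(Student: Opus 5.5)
We prove the boxed equality by showing containment in both directions for each origin-tagged value $(v,o)\in q_\rho(D)$. The key object is a precise reading of ``faithfully compiled''. For every norm $N_i\in\Pol^{+}$, instantiating its views $\Vset_i$ with the context $(s_\rho,r_\rho,\pi_\rho)$ and evaluating them over $D$ returns exactly the set of origin-tagged values that $N_i$ applies to and covers. Concretely, the sender, recipient and subject bindings satisfy $\Vsnd{i},\Vrec{i},\Vsub{i}$. The purpose matches $\pi_i$. The condition view $\Vth{i}$ holds for that subject. And $(v,o)$ lies in $\Vdat{i}$ with the provenance retained by the data view. We state this as a per-norm lemma: the instantiated views of $N_i$ denote $S_i(D,\rho)$, the set of values on which $N_i$ fires in the logical (Barth et al.) semantics. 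This lemma is just the faithfulness hypothesis made explicit, together with the observation that conjunction, existential conditions and recursion in $\theta_i$ are realized by the corresponding joins, semi-joins and recursive CTEs.

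Next, we handle the relevance filter of step~(1) of runtime enforcement. Enforcement only instantiates norms whose $\Vdat{i}$ governs an attribute returned by $q_\rho$. We must check that this loses nothing. If a norm's data view governs no attribute of $q_\rho$, then by provenance tagging $S_i(D,\rho)\cap q_\rho(D)=\emptyset$. Such a norm is therefore not ``applicable'' to any $(v,o)\in q_\rho(D)$ in the definition of $\llbracket\Pol^{+}\rrbracket_{D,\rho}$. Hence
\[
\Pos_\rho\cap q_\rho(D)=\bigcup_{\lambda_i=+}S_i(D,\rho)\cap q_\rho(D),
\]
and likewise $\Viol_\rho$ is the corresponding union over the norms with $\lambda_i=-$.

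With these two facts the equivalence is set algebra. We have $(v,o)\in\Pos_\rho\setminus\Viol_\rho$ iff some positive applicable norm fires on $(v,o)$ and no negative applicable norm fires on it. That is exactly membership in $\llbracket\Pol^{+}\rrbracket_{D,\rho}$. Soundness and completeness are the two directions of this iff.

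Two things remain: that canonicalization does not change the semantics, and the final CI claim. Merging redundant views is semantics-preserving because merged views had identical outputs, so each union is unchanged. The coverage and satisfiability checks only reject policies and never alter them. Implication completion is exactly what defines $\Pol^{+}$, and the theorem is stated relative to $\Pol^{+}$, so nothing more is needed there. The last sentence, that releasing $\Vex$ satisfies CI with respect to $\Pol$, needs one further argument: implication completion only adds prohibitions. Hence $\llbracket\Pol^{+}\rrbracket\subseteq\llbracket\Pol\rrbracket$, and moreover every value that $\Pol$ prohibits, including values inferable through schema constraints such as $\Gamma$, is also prohibited in $\Pol^{+}$. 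So no released value yields a flow prohibited by $\Pol$.

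The main obstacle is stating the faithfulness assumption precisely enough that the per-norm lemma is not circular. I would define it as a denotational equality between each compiled view and the first-order formula of the norm, evaluated with the context substituted. I would also argue that provenance tags make the correspondence value-level rather than attribute-level, so the set operations on $\Pos_\rho$ and $\Viol_\rho$ are well defined.
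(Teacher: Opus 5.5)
Your proposal is correct and follows the paper's proof essentially step for step. Faithful compilation makes the natural join of the instantiated role, condition and data views hold exactly when $N_i$ applies to $(v,o)$, and the relevance filter discards only norms that govern no queried value. Unions over the positive and negative norms followed by the set difference then give $\Vex=\llbracket\Pol^{+}\rrbracket_{D,\rho}$, and the final CI claim follows because $\Pol^{+}$ only adds prohibitions to $\Pol$.

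Your extra paragraph on canonicalization is not needed, since the faithfulness hypothesis on the compiled views of $\Pol^{+}$ already covers it and the paper's proof does not treat it separately. Its justification, that merged views ``had identical outputs,'' also only holds on the instance where the redundancy check was run, not on an arbitrary fixed $D$.
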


\begin{proof}
Fix an origin-tagged queried value $(v,o)\in q_\rho(D)$, where
$o=(R,\mathit{tid},A)$. We show that $(v,o)$ belongs to $\Vex$ if and
only if its flow is authorized by $\Pol^{+}$.

First, runtime relevance filtering does not remove any norm that can
govern $(v,o)$. A norm $N_i$ is discarded only if its data view
$\Vdat{i}$ governs none of the attributes projected by $q_\rho$.
By faithful compilation, $\Vdat{i}$ contains exactly the database cells
representing information of type $\tau_i$. Hence, if $N_i$ governs
$(v,o)$, then its origin attribute $A$ is represented by $\Vdat{i}$ and
$N_i$ is retained. Conversely, a discarded norm governs no queried
value and therefore cannot affect whether $(v,o)$ belongs to
$\Pos_\rho$ or $\Viol_\rho$.

Now consider any retained norm $N_i$. At runtime, \sysname{} evaluates
\[
\pi_{q,v,o}
\left(
\sigma_{p_1=s_\rho}(\Vsnd{i})
\Join
\sigma_{p_2=r_\rho}(\Vrec{i})
\Join
\Vsub{i}
\Join
\Vth{i}
\Join
\Vdat{i}
\right).
\]
A tuple $(q,v,o)$ appears in this result if and only if there exists a
joint binding of the shared variables $(p_1,p_2,q)$, with
$p_1=s_\rho$ and $p_2=r_\rho$, under which all compiled views hold.
The natural join therefore implements conjunction of the corresponding
conditions.

By faithful compilation, the role views hold exactly when
$s_\rho$, $r_\rho$, and $q$ instantiate the sender, recipient, and
subject roles of $N_i$. Likewise, $\Vth{i}$ holds exactly when the
transmission condition $\theta_i$ is satisfied and the declared purpose
$\pi_\rho$ matches $\pi_i$, while $\Vdat{i}$ contains exactly the
origin-tagged values of information type $\tau_i$ about the bound
subject. Therefore,
\[
\begin{aligned}
&(q,v,o)\text{ is returned by the compiled views of }N_i\\
&\quad\Longleftrightarrow\quad
N_i\text{ applies to the flow of }(v,o)\text{ under }\rho.
\end{aligned}
\]

Taking the union over all applicable positive norms therefore gives
exactly
\[
\Pos_\rho
=
\left\{
(v,o)\in q_\rho(D)
:
\begin{array}[t]{l}
\text{some positive norm of $\Pol^{+}$}\\
\text{permits $(v,o)$}
\end{array}
\right\},
\]
and, by the same argument,
\[
\Viol_\rho
=
\left\{
(v,o)\in q_\rho(D)
:
\begin{array}[t]{l}
\text{some negative norm of $\Pol^{+}$}\\
\text{prohibits $(v,o)$}
\end{array}
\right\}.
\]

It follows directly that
\[
\begin{aligned}
(v,o)\in\Vex
&\Longleftrightarrow
(v,o)\in\Pos_\rho\setminus\Viol_\rho \\
&\Longleftrightarrow
\text{$(v,o)$ is permitted by at least one}\\
&\hspace{1.7cm}\text{applicable positive norm and}\\
&\hspace{1.7cm}\text{prohibited by no applicable}\\
&\hspace{1.7cm}\text{negative norm}\\
&\Longleftrightarrow
(v,o)\in\llbracket\Pol^{+}\rrbracket_{D,\rho}.
\end{aligned}
\]
Since this holds for every $(v,o)\in q_\rho(D)$,
\[
\Vex=\llbracket\Pol^{+}\rrbracket_{D,\rho}.
\]

Finally, $\Pol^{+}$ preserves the permissions of $\Pol$ and adds only
prohibitions induced by information implications. Hence every flow
authorized under $\Pol^{+}$ is compliant with $\Pol$, and releasing
$\Vex$ satisfies contextual integrity with respect to $\Pol$.
\end{proof}

\bigskip
\noindent\textbf{Contextual residual sensitivity.}
We now prove Theorem~\ref{thm:crs} of Sec.~\ref{sec:ci-dp}: contextual
residual sensitivity ($\CRS{q}$) is a smooth upper bound on local
sensitivity under contextual DP, and it never exceeds ordinary residual
sensitivity ($\RS{q}$). Throughout, fix a query $q$ and a context $c$ with
authorized cells $\public$.
\smallskip
\noindent\textbf{Setup.}
For each private relation $R_i$, write $\ell_i(I)$ for the bound given by
the inference rule of Sec.~\ref{sec:ci-dp} that applies to $R_i$ under
$\public$, i.e., $\Gamma,\public\vdash_I R_i:\tau\Rightarrow\ell_i(I)$.
Every contextual neighbor must preserve $\public$
(Definition~\ref{def:contextual-dp}), so a relation entirely governed by
\textsc{Pub} admits no contextual edit at all: inserting, deleting, or
changing any of its tuples would change $\public$. We call the remaining
relations \emph{contextually editable},
\[
\PrivRelsC=\{i\in\PrivRels:\ell_i\text{ is not derived by \textsc{Pub}}\},
\]
and restrict distance vectors to them,
$\DistSet{k}^{c}=\{\mathbf{s}\in\DistSet{k}:s_j=0\text{ for all }
j\notin\PrivRelsC\}$. The \emph{contextual distance} $d_c(I,I')$ is the
fewest edits needed to reach $I'$ from $I$ while preserving $\public$ at
every intermediate step. Since a contextual edit is in particular an
ordinary edit, $d(I,I')\le d_c(I,I')$, and any instance within contextual
distance $k$ of $I$ can only differ from $I$ on relations in $\PrivRelsC$.
The \emph{contextual local sensitivity} of $q$ at $I$ is local sensitivity
restricted to contextual neighbors,
\[
\CLS{q,\public}(I)=\max_{I':I\sim_c I'}\big|\,|q(I)|-|q(I')|\,\big|.
\]
As with ordinary DP, we cannot calibrate noise to $\CLS{q,\public}$ directly, since
it can itself change sharply between neighbors. Following Nissim et
al.~\cite{nissim2007smooth}, a function $\hat{S}(\cdot)$ is a
\emph{$\beta$-smooth upper bound of local sensitivity under contextual DP}
if (i) $\hat{S}(I)\ge\CLS{q,\public}(I)$ for every $I$, and
(ii) $\hat{S}(I)\le e^{\beta}\hat{S}(I')$ for every $I\sim_c I'$; noise
calibrated to any such $\hat{S}$ using their Cauchy or Laplace mechanisms
achieves $\varepsilon$- or $(\varepsilon,\delta)$-contextual DP.

\begin{definition}[Contextual residual sensitivity]
\label{def:crs}
For $k\ge 0$, the contextual distance-$k$ envelope $\CLShat_q^{(k)}$ is
defined exactly as the envelope of Eq.~\eqref{eq:ls-distance}, but with
distance vectors and boundary maxima restricted to $\PrivRelsC$:
\[
\CLShat_q^{(k)}(I)
=
\begin{cases}
\displaystyle
\max_{i\in\PrivRelsC}\ \ell_i(I),
& k=0,\\[8pt]
\displaystyle
\max_{i\in\PrivRelsC}\
\max_{\mathbf{s}\in\DistSet{k}^{c}}\
\sum_{A\subseteq E_i\cap\PrivRelsC}
T_{E_i\setminus A}(I)\prod_{j\in A}s_j,
& k\ge 1.
\end{cases}
\]
The \emph{contextual residual sensitivity} of $q$ on $I$ is then
\[
\CRS{q}(I)
=
\max_{k\ge 0}\
e^{-\beta k}
\min\big\{\GS{q},\ \CLShat_q^{(k)}(I)\big\}.
\]
\end{definition}
As with $\RS{q}$ (Eq.~\eqref{eq:residual-sensitivity}), an unbounded
$\GS{q}$ is treated as $\infty$. If $\PrivRelsC=\emptyset$, every relation
is fully authorized, so there is no contextual neighbor left to protect
against: $\CLS{q,\public}\equiv 0$ and $\CRS{q,\public}\equiv 0$, and the exact answer may
be released.

We prove Theorem~\ref{thm:crs} through three lemmas, each a contextual
counterpart of a result of Dong and Yi~\cite{dong2021residual}: the
inference rules are \emph{sound} (Lemma~\ref{lem:rule-sound}, analogous to
their Theorem 4.5), the rules are \emph{tight} against $T_{E_i}$
(Lemma~\ref{lem:rule-tight}), and the envelope $\CLShat_q^{(k)}$ is
\emph{smooth} across contextual neighbors (Lemma~\ref{lem:crs-smooth},
analogous to their Lemma 4.11).

\begin{lemma}[Rule soundness]
\label{lem:rule-sound}
For every instance $I$,
$\CLS{q,\public}(I)\le\max_{i\in\PrivRelsC}\ell_i(I)=\CLShat_q^{(0)}(I)$.
\end{lemma}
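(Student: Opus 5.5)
The plan is to fix an instance $I$ and an arbitrary contextual neighbor $I'$ with $I\sim_c I'$, and to show directly that $\big|\,|q(I)|-|q(I')|\,\big|\le \ell_i(I)$ for some $i\in\PrivRelsC$. Taking the maximum over $I'$ then yields the claim.

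\textbf{Reduction to one relation and one edit.} Since $I\sim I'$, the two instances differ by a single elementary edit $\tau$ (insertion, deletion, or change) to exactly one private relation $R_i$. Because $\mathcal{A}_c(I)=\mathcal{A}_c(I')$, this edit cannot touch any cell in $\public$. If $R_i$ were governed by \textsc{Pub}, every edit to it would alter $\public$. Hence $i\in\PrivRelsC$ automatically, and relations outside $\PrivRelsC$ contribute nothing.

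\textbf{Join decomposition.} For a counting query $q=\bowtie_j R_j$, the output change caused by editing one tuple $t\in R_i$ equals the number of tuples of the residual query $q_{E_i}$ (with $E_i=[n]\setminus\{i\}$) matching $t$'s boundary value. This is the standard fact from Dong and Yi underlying $\LS{i}=\witness{i}$. Concretely:
\begin{itemize}
\item a deletion of $t$ changes the count by $\multiplicity{i}{t[\boundary{i}]}$;
\item an insertion with boundary value $b$ changes it by $\multiplicity{i}{b}$;
\item a change from $b$ to $b'$ changes it by $|\multiplicity{i}{b}-\multiplicity{i}{b'}|$.
\end{itemize}

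\textbf{Case analysis by rule.} I would then go through the rule that derived $\ell_i$, in each case bounding the multiplicity the contextual edit can reach.
\begin{itemize}
\item \textsc{Base} and \textsc{Ins}: the bound $\LS{i}$ holds even for ordinary neighbors, so it holds here.
\item \textsc{Sel-Del}: the deleted tuple lies outside $\sigma_\varphi(R_i)$, so its boundary value is private, and its multiplicity is at most $\mufree{1}$.
\item \textsc{Prj-Chg}: the boundary projection is fixed, so $b=b'$ and the change is $0$.
\item \textsc{PrjSel-Chg} and \textsc{Sel}: the moved tuple is private. Its old multiplicity is at most $\witness{i}$ and its new one lies in $[\mufreemin,\,\mufree{1}]$ or is otherwise bounded by $\mufree{1}$. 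This gives $\max\{\witness{i}-\mufreemin,\mufree{1}\}$; deletions are covered by the $\mufree{1}$ term.
\item \textsc{Sel-Unique}: uniqueness forces an insertion onto an unoccupied value, giving at most $\Munocci$, while deletions and changes give at most $\mufree{1}$. \textsc{FK} only further specialises $\Munocci=0$ via referential integrity.
\end{itemize}
Combining the cases gives $\big|\,|q(I)|-|q(I')|\,\big|\le\ell_i(I)\le\max_{i\in\PrivRelsC}\ell_i(I)$.

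\textbf{Main obstacle.} I expect the change cases (\textsc{PrjSel-Chg}, \textsc{Sel}) to be hardest. A change can move a tuple onto a value whose multiplicity is not among the occupied private values, so I must argue carefully that the difference $|\multiplicity{i}{b}-\multiplicity{i}{b'}|$ is dominated by one of the two terms. My first attempt is to split by whether $b'$ is an occupied private value, where the difference is at most $\witness{i}-\mufreemin$ or at most $\mufree{1}$, or a fresh value, where I would use the nonnegativity of multiplicities to bound the difference by $\max(\multiplicity{i}{b},\multiplicity{i}{b'})$, relying on how $\mufree{1}$ is defined to cover reachable targets. A secondary subtlety is the self-join case, where an edit to $R_i$ also perturbs $q_{E_i}$. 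I would handle it as Dong and Yi do, by treating each relation occurrence as distinct, so the single-tuple decomposition still holds.
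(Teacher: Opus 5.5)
Your overall structure matches the paper's proof. That means: reduce to a single $\public$-preserving edit to one $R_i$ (so $i\in\PrivRelsC$), use the Dong--Yi fact that the count changes by $\multiplicity{i}{b}$ or $|\multiplicity{i}{b}-\multiplicity{i}{b'}|$, then bound reachable multiplicities rule by rule. Your \textsc{Base}, \textsc{Ins}, \textsc{Sel-Del} and \textsc{Prj-Chg} cases are fine.

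The gap is in \textsc{PrjSel-Chg} and \textsc{Sel}. You correctly flag this as the hard case, but you put the constraint on the wrong endpoint. You assert that the old multiplicity is at most $\witness{i}$ and the new one lies in $[\mufreemin,\mufree{1}]$. However, $\public$-preservation does not confine the target $b'$ to privately occupied values. It can be, for instance, an unoccupied value, whose multiplicity can be as large as $\witness{i}$.

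Your fallback for such targets cannot close this. Bounding $|\multiplicity{i}{b}-\multiplicity{i}{b'}|$ by $\max\{\multiplicity{i}{b},\multiplicity{i}{b'}\}$ and relying on $\mufree{1}$ to ``cover reachable targets'' is too lossy. Suppose the privately occupied values have multiplicities $3$ and $5$, and $\witness{i}=10$ is attained at an unoccupied value. The rule claims $\max\{10-3,5\}=7$, but your estimate gives only $10$, and $\mufree{1}=5$ does not dominate the target.

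The missing observation is that the restriction falls on the source. The edited tuple itself must be unauthorized, since editing an authorized tuple would alter $\public$. So its current boundary value $b$ is privately occupied, giving $\mufreemin\le\multiplicity{i}{b}\le\mufree{1}$. Meanwhile $b'$ is arbitrary, with $0\le\multiplicity{i}{b'}\le\witness{i}$.

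Now split on the sign. If $\multiplicity{i}{b'}\ge\multiplicity{i}{b}$, the change is at most $\witness{i}-\mufreemin$. Otherwise it is at most $\multiplicity{i}{b}\le\mufree{1}$. Together this gives exactly $\ell_i$, and it is how the paper argues. The appendix's informal ``changing from the original witness'' reading points in your direction, but the formal argument needs this orientation.

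The same source/target split tidies your \textsc{Sel-Unique} case. There, a change onto a fresh value can cost up to $\Munocci$, not $\mufree{1}$ as you state. The rule's bound still holds: $b$ is privately occupied and $b'$ is unoccupied, so $|\multiplicity{i}{b}-\multiplicity{i}{b'}|\le\max\{\mufree{1},\Munocci\}$.
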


\begin{proof}
Let $I\sim_c I'$, so $I'$ is obtained from $I$ by a single edit $\tau$ to
some private relation $R_i$, and this edit preserves $\public$. Were
$\ell_i$ derived by \textsc{Pub}, every cell of $R_i$ would be authorized
and any edit would change $\public$; hence $i\in\PrivRelsC$.

Dong and Yi show that a single edit to $R_i$ changes the query answer by
exactly the multiplicity of the affected boundary value: if the edit
inserts or deletes a tuple with boundary value $b$, then
$||q(I)|-|q(I')||=\multiplicity{i}{b}$; if it instead changes an existing
tuple's boundary value from $b$ to $b'$ (non-boundary attributes do not
affect the count), then
$||q(I)|-|q(I')||=|\multiplicity{i}{b}-\multiplicity{i}{b'}|$. It remains
to bound the multiplicities each rule allows.

\textbf{Base, Ins.} No tuple of $R_i$ is fixed by $\public$, so $b$ ranges
over the whole domain of $\boundary{i}$, and the change is at most
$\max_b\multiplicity{i}{b}=T_{E_i}(I)=\LS{i}=\ell_i$.

\textbf{Sel-Del.} Deleting an authorized tuple would remove a cell from
$\public$, so the deleted tuple must be unauthorized, meaning $b$ is a
boundary value occupied by a private tuple. The change is thus at most
$\mufree{1}=\ell_i$.

\textbf{Prj-Chg.} Every boundary cell of $R_i$ is authorized and hence
fixed across contextual neighbors, so a change can only touch non-boundary
attributes: $b'=b$, the answer is unchanged, matching $\ell_i=0$.

\textbf{PrjSel-Chg, Sel.} The changed or deleted tuple is unauthorized, so
its boundary value $b$ is privately occupied,
$\mufreemin\le\multiplicity{i}{b}\le\mufree{1}$; a change may move it to
any boundary value $b'$, with $\multiplicity{i}{b'}\le\witness{i}$.
Whichever of $b,b'$ has larger multiplicity, the change is at most
$\max\{\witness{i}-\mufreemin,\mufree{1}\}=\ell_i$.

\textbf{Sel-Unique.} Because $\boundary{i}$ is unique, an inserted tuple
(or the new value of a changed tuple) must occupy a boundary value not
already present in $R_i$, whose multiplicity is at most $\Munocci$; a
deleted or changed tuple is instead unauthorized and privately occupied,
so its multiplicity is at most $\mufree{1}$. Either way, the change is at
most $\max\{\mufree{1},\Munocci\}=\ell_i$.

Taking the maximum over $i\in\PrivRelsC$ gives
$\CLS{q,\public}(I)\le\max_{i\in\PrivRelsC}\ell_i(I)=\CLShat_q^{(0)}(I)$.
\end{proof}

\begin{lemma}[Rule tightness]
\label{lem:rule-tight}
For every rule and every instance $I$,
$\ell_i(I)\le T_{E_i}(I)=\LS{i}$.
\end{lemma}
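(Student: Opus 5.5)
The plan is a case analysis over the rules of Sec.~\ref{sec:ci-dp}. Each $\ell_i$ is built only from $\witness{i}$, $\mufree{1}$, $\mufreemin$ and $\Munocci$, so it suffices to bound each of these multiplicities by $\witness{i}=\max_{b}\multiplicity{i}{b}$ (Eq.~\ref{eq:te}). The whole argument rests on one observation: $\mufree{1}$, $\mufreemin$ and $\Munocci$ are all values $\multiplicity{i}{b}$ at particular boundary values $b\in\operatorname{dom}(\boundary{E_i})$. Each is therefore dominated by the maximum over the whole domain, and all multiplicities are nonnegative counts.

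First I would record the chain $0\le\mufreemin\le\mufree{1}\le\witness{i}$ stated in App.~\ref{app:cidp}. Here $\mufree{1}$ is a maximum of $\multiplicity{i}{b}$ over the occupied private boundary values, which form a subset of $\operatorname{dom}(\boundary{E_i})$. By convention $\mufree{1}=\mufreemin=0$ if that set is empty. Similarly, $\Munocci$ is a maximum over the unoccupied values, another subset of the domain, so $0\le\Munocci\le\witness{i}$.

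Then I would go through the rules:
\begin{itemize}
\item \textsc{Base} and \textsc{Ins} give $\ell_i=\LS{i}=\witness{i}$, so the bound holds with equality.
\item \textsc{Prj-Chg} and \textsc{Pub} give $\ell_i=0$, which is at most $\witness{i}$ since multiplicities are nonnegative.
\item \textsc{Sel-Del} gives $\ell_i=\mufree{1}\le\witness{i}$.
\item \textsc{Sel-Unique} gives $\ell_i=\max\{\mufree{1},\Munocci\}$, and both terms are at most $\witness{i}$.
\item \textsc{PrjSel-Chg} and \textsc{Sel} give $\ell_i=\max\{\witness{i}-\mufreemin,\mufree{1}\}$. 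Since $\mufreemin\ge0$, we have $\witness{i}-\mufreemin\le\witness{i}$, and $\mufree{1}\le\witness{i}$.
\item \textsc{FK} only forces $\Munocci=0$, which keeps \textsc{Sel-Unique} within the bound.
\end{itemize}

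The main obstacle is the \textsc{Sel}/\textsc{PrjSel-Chg} case, specifically the convention that $\mufreemin\ge0$ when no occupied private boundary value exists. If $\mufreemin$ were instead defined as $+\infty$ or left undefined there, the expression would break. I would handle this by fixing the empty-set convention explicitly: both extremes are $0$, and then, since no private tuple exists, the edit set is empty and any value in $[0,\witness{i}]$ is sound.
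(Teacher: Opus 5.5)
Your proof is correct and matches the paper's argument: a rule-by-rule check using that $\mufree{1}$, $\mufreemin$, and $\Munocci$ are values or maxima of $\multiplicity{i}{b}$ over subsets of the boundary domain, while $T_{E_i}(I)$ maximizes over the whole domain, plus $\mufreemin\ge 0$ for \textsc{PrjSel-Chg} and \textsc{Sel}. Your explicit empty-set convention and your remark on \textsc{FK} are small clarifications that the paper leaves implicit.
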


\begin{proof}
\textsc{Base} and \textsc{Ins} return $\LS{i}$ exactly. Every other bound
is a maximum of $\multiplicity{i}{b}$ over some subset of
$\operatorname{dom}(\boundary{i})$, whereas $\LS{i}=T_{E_i}(I)$ maximizes
over the entire domain (Eq.~\eqref{eq:te}), and a maximum over a subset
can only be smaller. Concretely, $\mufree{1}\le\LS{i}$ and
$\Munocci\le\LS{i}$ cover \textsc{Sel-Del} and \textsc{Sel-Unique}; since
$\mufreemin\ge 0$, $\witness{i}-\mufreemin\le\witness{i}=\LS{i}$ covers
\textsc{PrjSel-Chg} and \textsc{Sel}; and $0\le\LS{i}$ covers
\textsc{Prj-Chg} and \textsc{Pub}.
\end{proof}

\begin{lemma}[Smoothness]
\label{lem:crs-smooth}
For every $k\ge 0$ and every pair $I\sim_c I'$,
$\CLShat_q^{(k)}(I)\le\CLShat_q^{(k+1)}(I')$.
\end{lemma}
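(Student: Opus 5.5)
We prove Lemma~\ref{lem:crs-smooth} by transferring Dong and Yi's smoothness argument (their Lemma 4.11) to the contextual setting. The proof splits into the base step $k=0$ and the general step $k\ge 1$.

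\textbf{Set-up.} Fix $I\sim_c I'$. The instance $I'$ differs from $I$ by a single $\public$-preserving edit to some $R_j$, and Lemma~\ref{lem:rule-sound} already shows that $j\in\PrivRelsC$. The key ingredient, taken from Dong and Yi, is a one-step bound on residual multiplicities. For any residual set $F$, a single edit to $R_j$ satisfies
\[
T_F(I)\le T_F(I')+[j\in F]\,T_{F\setminus\{j\}}(I').
\]
The reason is that the residual query $q_F$ changes only if $R_j$ participates. In that case, the tuples gained or lost at any boundary value are bounded by the residual join with $R_j$ removed.

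\textbf{Base step.} For $k=0$ we need $\max_{i\in\PrivRelsC}\ell_i(I)\le\CLShat^{(1)}_q(I')$. First, Lemma~\ref{lem:rule-tight} gives $\ell_i(I)\le T_{E_i}(I)$. Next, we apply the one-step bound with $F=E_i$. If $j\ne i$, it yields $T_{E_i}(I)\le T_{E_i}(I')+T_{E_i\setminus\{j\}}(I')$. This right-hand side is exactly the $k=1$ term of the envelope at the distance vector $\mathbf{s}=e_j$, which lies in $\DistSet{1}^{c}$ because $j\in\PrivRelsC$. If $j=i$, then $i\notin E_i$, so $T_{E_i}$ is unchanged. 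In both cases $\ell_i(I)\le\CLShat^{(1)}_q(I')$.

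\textbf{General step.} For $k\ge1$, fix the maximizing $i$ and $\mathbf{s}\in\DistSet{k}^c$ at $I$. We then expand each term $T_{E_i\setminus A}(I)$ using the one-step bound. Every term containing $j$ picks up an extra factor $T_{E_i\setminus(A\cup\{j\})}(I')$. We regroup the sum under the vector $\mathbf{s}'=\mathbf{s}+e_j\in\DistSet{k+1}^c$, whose coordinates are all nonnegative integers. The terms then fit into
\[
\sum_{A}T_{E_i\setminus A}(I')\prod_{l\in A}s'_l,
\]
using the monotonicity $\prod_{l\in A}s_l\le\prod_{l\in A}s'_l$ together with the identity $s'_j=s_j+1$. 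This is exactly the algebra in Dong and Yi. The only new points are that the index sets must stay inside $\PrivRelsC$, and that the edit itself must lie in $\PrivRelsC$; both are guaranteed by the set-up.

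\textbf{Main obstacle.} The hardest part is the base case. There the envelope at $I$ uses the rule values $\ell_i$, but at $I'$ it uses residual terms. The mismatch is bridged only through the tightness inequality $\ell_i\le T_{E_i}$ from Lemma~\ref{lem:rule-tight}. A second point to check is the public relations. Edits to \textsc{Pub} relations never occur, so their multiplicities are frozen, and omitting them from $A$ can only shrink the sum consistently at both $I$ and $I'$. If the rule values $\ell_i(I)$ turned out not to be comparable this way, the fallback is to bound $\ell_i(I)$ directly by $T_{E_i}(I)$ and accept this as the $k=0$ envelope. This fallback still suffices for the smoothness needed in Theorem~\ref{thm:crs}.
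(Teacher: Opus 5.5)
Your proposal is correct and follows the paper's proof essentially step for step. The $k=0$ case goes through the tightness bound $\ell_i(I)\le T_{E_i}(I)$ and Dong and Yi's one-step bound, realized at the unit vector $\mathbf{e}_j\in\DistSet{1}^{c}$ using $j\in\PrivRelsC$ from rule soundness; the $k\ge 1$ case passes to $\mathbf{s}+\mathbf{e}_j$, and your explicit regrouping just spells out the algebra the paper cites from Dong and Yi. (Small fixes: the one-step bound contributes an extra additive term $T_{E_i\setminus(A\cup\{j\})}(I')$, not an extra factor, and your closing ``fallback'' is unnecessary because your base step already bounds $\ell_i(I)$ by $T_{E_i}(I)$.)
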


\begin{proof} We use two properties of $T_E$ established by Dong and Yi~\cite{dong2021residual}. First, $T_E(I)$ depends only on the relations indexed by $E$ and is therefore invariant under edits to relations outside $E$. Second, for any $j\in E$, a single edit to $R_j$ changes $T_E(I)$ by at most $T_{E\setminus\{j\}}(I)$. These properties allow the residual-sensitivity envelope to account for neighboring databases by incrementing the corresponding coordinates of its distance vector. The same argument applies when the ground set is restricted to $\PrivRelsC$: only coordinates corresponding to editable relations can be incremented, and Lemma~\ref{lem:rule-sound} establishes that every such relation belongs to $\PrivRelsC$. Thus, the residual-sensitivity construction carries over with both its ground set and distance vector restricted to $\PrivRelsC$.

Let $I\sim_c I'$ differ in relation $R_j$, so $j\in\PrivRelsC$, and write
$T^{c}_{E,\mathbf{s}}(I)=\sum_{A\subseteq E\cap\PrivRelsC}
T_{E-A}(I)\prod_{i\in A}s_i$ for the weighted sum defining
$\CLShat_q^{(k)}$.

\emph{Case $k=0$.} Let $i\in\PrivRelsC$ attain
$\CLShat_q^{(0)}(I)=\ell_i(I)$, which is at most $T_{E_i}(I)$ by
Lemma~\ref{lem:rule-tight}. If $j=i$, fact (a) gives
$T_{E_i}(I)=T_{E_i}(I')$ directly, since $i\notin E_i$. Otherwise
$j\in E_i\cap\PrivRelsC$, and fact (b) gives
$T_{E_i}(I)\le T_{E_i}(I')+T_{E_i-\{j\}}(I')$. In both cases the
right-hand side is a sum of terms of $T^{c}_{E_i,\mathbf{e}_j}(I')$ for the
unit vector $\mathbf{e}_j\in\DistSet{1}^{c}$, so
$\CLShat_q^{(0)}(I)\le\CLShat_q^{(1)}(I')$.

\emph{Case $k\ge 1$.} Let $(i,\mathbf{s})$ attain $\CLShat_q^{(k)}(I)$, and
set $\mathbf{s}'=\mathbf{s}+\mathbf{e}_j\in\DistSet{k+1}^{c}$. By the
smoothness argument above, applied to the ground set
$E_i\cap\PrivRelsC$, incrementing coordinate $j$ can only increase the sum:
$T^{c}_{E_i,\mathbf{s}}(I)\le T^{c}_{E_i,\mathbf{s}'}(I')$. Hence
$\CLShat_q^{(k)}(I)\le T^{c}_{E_i,\mathbf{s}'}(I')\le\CLShat_q^{(k+1)}(I')$.
\end{proof}

\begin{theorem}[Contextual residual sensitivity]
\label{thm:crs-app}
For every multiway join counting query $q$, instance $I$, context $c$, and
smoothing parameter $\beta>0$:
\begin{enumerate}[label=(\roman*),leftmargin=*]
\item $\CRS{q}(I)\le\RS{q}(I)$; and
\item $\CRS{q}(\cdot)$ is a $\beta$-smooth upper bound of local sensitivity
under contextual DP.
\end{enumerate}
Consequently, releasing $|q(I)|$ with general Cauchy (resp.\ Laplace) noise
calibrated to $\CRS{q}(I)$, with $\beta$ set exactly as for residual
sensitivity, satisfies $\varepsilon$- (resp.\ $(\varepsilon,\delta)$-)
contextual differential privacy, and the injected noise never exceeds that
of standard residual sensitivity.
\end{theorem}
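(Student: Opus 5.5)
Part (i) is a termwise comparison of the two envelopes, and part (ii) chains Lemmas~\ref{lem:rule-sound}--\ref{lem:crs-smooth} through the standard smooth-sensitivity argument of Nissim et al.

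\textbf{Part (i).} We compare $\CLShat_q^{(k)}(I)$ with ${\mathrm{LS}}_q^{(k)}(I)$ for each fixed $k$.

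For $k=0$, Lemma~\ref{lem:rule-tight} gives $\ell_i(I)\le T_{E_i}(I)$. Since $\PrivRelsC\subseteq\PrivRels$, we get $\CLShat_q^{(0)}(I)\le\max_{i\in\PrivRels}T_{E_i}(I)={\mathrm{LS}}_q^{(0)}(I)$.

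For $k\ge1$, we use three observations:
\begin{itemize}
\item $\DistSet{k}^{c}\subseteq\DistSet{k}$;
\item the outer maximum ranges over $\PrivRelsC\subseteq\PrivRels$;
\item every summand $T_{E_i\setminus A}(I)\prod_{j\in A}s_j$ is nonnegative, and the sum over $A\subseteq E_i\cap\PrivRelsC$ is a subsum of the sum over $A\subseteq E_i\cap\PrivRels$.
\end{itemize}
Together these give $T^{c}_{E_i,\mathbf{s}}(I)\le T_{E_i,\mathbf{s}}(I)$, hence $\CLShat_q^{(k)}(I)\le{\mathrm{LS}}_q^{(k)}(I)$. Since $\min\{\GS{q},\cdot\}$ and $e^{-\beta k}\cdot$ are monotone, taking $\max_k$ yields $\CRS{q}(I)\le\RS{q}(I)$.

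\textbf{Part (ii), upper bound.} Take $k=0$ in the definition of $\CRS{q}$ and apply Lemma~\ref{lem:rule-sound}:
\[
\CRS{q}(I)\ge\min\{\GS{q},\CLShat_q^{(0)}(I)\}\ge\min\{\GS{q},\CLS{q,\public}(I)\}=\CLS{q,\public}(I).
\]
The last equality holds because contextual neighbors are ordinary neighbors, so $\CLS{q,\public}\le\LS{q}\le\GS{q}$.

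\textbf{Part (ii), smoothness.} Fix $I\sim_c I'$. For every $k$, Lemma~\ref{lem:crs-smooth} gives
\[
e^{-\beta k}\min\{\GS{q},\CLShat_q^{(k)}(I)\}\le e^{\beta}\,e^{-\beta(k+1)}\min\{\GS{q},\CLShat_q^{(k+1)}(I')\}\le e^{\beta}\CRS{q}(I').
\]
Maximizing over $k$ gives $\CRS{q}(I)\le e^{\beta}\CRS{q}(I')$.

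\textbf{Symmetry of $\sim_c$.} This second inequality needs $\sim_c$ to be symmetric. It is: $\sim$ is symmetric, and the condition $\public=\mathcal{A}_c(D')$ is symmetric in $D$ and $D'$.

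\textbf{Privacy consequence.} We follow Nissim et al.\ with the neighbor relation replaced by $\sim_c$. Their proof uses only two facts: that $|q(I)-q(I')|\le\hat S(I)$ for neighbors, and the $e^\beta$ smoothness. Both hold here for $\sim_c$. Therefore Cauchy or Laplace noise with the same $\beta$ gives $\varepsilon$- or $(\varepsilon,\delta)$-contextual DP. By part (i), the noise scale is no larger than under $\RS{q}$.

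\textbf{Main obstacle.} The step needing the most care is the smoothness step. Its difficulty lies inside Lemma~\ref{lem:crs-smooth}, which we take as given. Here the remaining care is the $k=0$ boundary case: $\CLShat_q^{(0)}$ is defined through the rule bounds $\ell_i$ rather than $T_{E_i}$, and the lemma bridges this gap via Lemma~\ref{lem:rule-tight}. We also need the contextual neighbor to edit only relations in $\PrivRelsC$, which Lemma~\ref{lem:rule-sound} provides.
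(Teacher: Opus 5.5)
Your proposal is correct and follows the paper's proof essentially step for step: for (i), the termwise envelope comparison $\CLShat_q^{(k)}\le{\mathrm{LS}}_q^{(k)}$ via Lemma~\ref{lem:rule-tight} and dropping nonnegative summands; for (ii), Lemma~\ref{lem:rule-sound} at $k=0$, then Lemma~\ref{lem:crs-smooth} with the $e^{\beta}$ rescaling. The only differences are small: you bound every $k$-term and then take the supremum, instead of fixing a maximizer $k^\star$ and citing Dong and Yi for its existence, which is slightly cleaner; and you state explicitly the symmetry of $\sim_c$ that the two-sided dilation step of Nissim et al.\ needs.
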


\begin{proof}
\emph{(i) $\CRS{q}$ never exceeds $\RS{q}$.} We show
$\CLShat_q^{(k)}(I)\le\LS{q}^{(k)}(I)$ for every $k$; since $\CRS{q}(I)$
and $\RS{q}(I)$ apply the same monotone transformation
$x\mapsto e^{-\beta k}\min\{\GS{q},x\}$ to these envelopes and maximize
over the same range of $k$ (Eq.~\eqref{eq:residual-sensitivity}), this
gives $\CRS{q}(I)\le\RS{q}(I)$.

For $k=0$, Lemma~\ref{lem:rule-tight} gives
$\CLShat_q^{(0)}(I)=\max_{i\in\PrivRelsC}\ell_i(I)
\le\max_{i\in\PrivRels}T_{E_i}(I)=\LS{q}(I)$.
For $k\ge1$, the pair $(i,\mathbf{s})$ attaining $\CLShat_q^{(k)}(I)$
satisfies $\mathbf{s}\in\DistSet{k}^{c}\subseteq\DistSet{k}$, and its
defining sum ranges only over $A\subseteq E_i\cap\PrivRelsC$ --- a subset
of the nonnegative terms summed by $\LS{q}^{(k)}(I)$, which ranges over
$A\subseteq E_i\cap\PrivRels$ (Eq.~\eqref{eq:ls-distance}). Dropping terms
from a sum of nonnegative quantities cannot increase it, so
$\CLShat_q^{(k)}(I)\le\LS{q}^{(k)}(I)$.

\emph{(ii) $\CRS{q}$ is a smooth upper bound under contextual DP.} We
check the two defining conditions. For condition (i),
\[
\CRS{q,\public}(I)\ge\min\big\{\GS{q},\CLShat_q^{(0)}(I)\big\}\ge\CLS{q,\public}(I),
\]
since $\CLShat_q^{(0)}(I)\ge\CLS{q,\public}(I)$ by Lemma~\ref{lem:rule-sound}, and
$\GS{q}\ge\LS{q}(I)\ge\CLS{q,\public}(I)$ because every contextual neighbor is also
a standard neighbor.

For condition (ii), fix $I\sim_c I'$ and let $k^{\star}$ attain the
maximum defining $\CRS{q}(I)$; such a maximizer exists at some finite
$k^{\star}$, exactly as for standard residual sensitivity (Dong and Yi,
Lemma 4.12), with the same argument applying verbatim over $\PrivRelsC$.
By Lemma~\ref{lem:crs-smooth} and then rescaling by $e^{\beta}$,
\begin{align*}
\CRS{q}(I)
&=e^{-\beta k^{\star}}
\min\big\{\GS{q},\CLShat_q^{(k^{\star})}(I)\big\}\\
&\le e^{-\beta k^{\star}}
\min\big\{\GS{q},\CLShat_q^{(k^{\star}+1)}(I')\big\}\\
&=e^{\beta}\cdot e^{-\beta(k^{\star}+1)}
\min\big\{\GS{q},\CLShat_q^{(k^{\star}+1)}(I')\big\}\\
&\le e^{\beta}\CRS{q}(I').
\end{align*}

Since $\CRS{q}$ satisfies both conditions with respect to $\sim_c$, the
Cauchy and Laplace mechanisms of~\cite{nissim2007smooth}, calibrated to
$\CRS{q}(I)$ with the same $\beta$ used for residual sensitivity, achieve
$\varepsilon$- and $(\varepsilon,\delta)$-contextual DP, respectively.
Part (i) then implies this noise is never larger than what standard
residual sensitivity would require.
\end{proof}

\bigskip
\noindent\textbf{End-to-end privacy.}
The following proof establishes Corollary~\ref{cor:end-to-end} of
Sec.~\ref{sec:ci-dp}.

\begin{proof}
Write the end-to-end mechanism as
\[
\mathcal{M}(\PrivDB)=\big(\SynDB,\ \mathcal{M}_{\mathrm{rel}}(\PrivDB;\OptQ)\big),
\qquad
\SynDB=\SynMech(\PrivDB),\quad\OptQ=f(\SynDB),
\]
where $f$ is the query-selection procedure: a function of $\SynDB$ and
public inputs (the natural-language question and schema) alone, which
never reads $\PrivDB$.

By hypothesis, $\SynDB$ is $(\varepsilon_{\mathrm{syn}},
\delta_{\mathrm{syn}})$-DP. Since $\OptQ=f(\SynDB)$ depends only on
$\SynDB$, the pair $(\SynDB,\OptQ)$ is a post-processing of $\SynDB$, and
post-processing preserves DP under any neighboring relation
(Prop.~2.1 of~\cite{dwork2014algorithmic}), in particular the contextual
relation $\sim_c$. So $(\SynDB,\OptQ)$ remains
$(\varepsilon_{\mathrm{syn}},\delta_{\mathrm{syn}})$-DP.

By hypothesis, $\mathcal{M}_{\mathrm{rel}}(\,\cdot\,;q)$ is
$(\varepsilon_{\mathrm{rel}},\delta_{\mathrm{rel}})$-contextual DP for
every query $q$ in the DP-compatible class. Since $\OptQ$ always lies in
this class, this guarantee holds in particular for $q=\OptQ$, regardless
of which query is realized.

Together, these two facts say that $\mathcal{M}$ is an adaptive
composition of two mechanisms: $\mathcal{M}_1(D)=(\SynDB,\OptQ)$, which is
$(\varepsilon_{\mathrm{syn}},\delta_{\mathrm{syn}})$-DP, followed by
$\mathcal{M}_2(D;z)=\mathcal{M}_{\mathrm{rel}}(D;z)$, which is
$(\varepsilon_{\mathrm{rel}},\delta_{\mathrm{rel}})$-contextual DP for
every fixed $z$. Contextual DP is exactly $(\varepsilon,\delta)$-DP with
respect to the fixed neighboring relation $\sim_c$
(Definition~\ref{def:contextual-dp}), so the ordinary sequential
composition theorem for $(\varepsilon,\delta)$-DP (Thm.~3.16
of~\cite{dwork2014algorithmic}) applies with $\sim_c$ in place of $\sim$,
giving that $\mathcal{M}$ is
$(\varepsilon_{\mathrm{syn}}+\varepsilon_{\mathrm{rel}},
\delta_{\mathrm{syn}}+\delta_{\mathrm{rel}})$-contextual DP.

Theorem~\ref{thm:crs} shows that calibrating $\mathcal{M}_{\mathrm{rel}}$
to $\CRS{q}$ satisfies this hypothesis for every $q$ in the DP-compatible
class, so \sysname{} instantiated this way achieves the guarantee above. Finally, any additional output that depends only on $\widetilde D$ and
public inputs, such as the SQL text of $q^\star$, incurs no additional
privacy cost by post-processing. The authorization boundary
$\mathcal{A}_c(D)$ is instead computed from the private database $D$ under
the fixed context $c$. However, by definition of contextual adjacency,
$\mathcal{A}_c(D)=\mathcal{A}_c(D')$ for every $D\sim_c D'$.
Thus, $\mathcal{A}_c(D)$ is invariant across contextual neighbors and its
use in defining $\sim_c$ or calibrating the release incurs no additional
contextual-DP loss.
\end{proof}
\subsection{Procedures}\label{app:alg}
\begin{mybox2}{\algotitle{alg:dp-mcts}{\sysname: DP-Aware Text2SQL}}
\footnotesize
\setlength{\parindent}{0pt}

\textbf{Input:} question $\NLQ$, schema $\Sch$, private database $\PrivDB$,
rollout budget $N_{\mathrm{rollout}}$, warm-start size $K$

\textbf{Output:} selected query $\OptQ$

\vspace{1mm}
\textit{Initialization}

\begin{enumerate}[nosep, label=\arabic*.]

\item $\SynDB \gets
\SynMech(\PrivDB;\varepsilon_{\mathrm{syn}},\delta_{\mathrm{syn}})$
\hfill
\textcolor{blue}{$\rhd$ One-time private proxy}

\item Initialize root $v_0$ with
$\{\ActSetCount$,$\ActAddRelation$,$\ActStop\}$
\hfill
\textcolor{blue}{$\rhd$ DP-compatible tree}

\item Set $N(v)\gets0$, $N(v,a)\gets0$, $Q(v,a)\gets0$
\hfill
\textcolor{blue}{$\rhd$ Node visits, edge visits, cumulative reward}

\item Set $\{C_y\}\gets\emptyset$, $\{\Incumbent{y}\}\gets\emptyset$
\hfill
\textcolor{blue}{$\rhd$ Answer clusters + best RS per cluster}

\item Generate $K$ zero-shot queries and insert their action paths
\hfill
\textcolor{blue}{$\rhd$ Warm start}

\item \textbf{For} each warm-start query $q$

\item \hspace{0.25cm}
$y\gets\operatorname{bucket}(q(\SynDB))$;
$C_y\gets C_y\cup\{q\}$
\hfill
\textcolor{blue}{$\rhd$ $y$: answer; $C_y$: queries with answer $y$}

\item \hspace{0.25cm}
Compute $\ResRS{q}$ and set
$\Incumbent{y}\gets
\min_{q'\in C_y}\ResRS{q'}$
\hfill
\textcolor{blue}{$\rhd$ Lowest RS found for answer $y$}

\item \hspace{0.25cm}
Evaluate and backpropagate $q$ along its path
\hfill
\textcolor{blue}{$\rhd$ Initialize $Q$ and $N$}

\item \textbf{EndFor}

\end{enumerate}

\vspace{1mm}
\textit{Search and Evaluation}

\begin{enumerate}[nosep, label=\arabic*.]
\setcounter{enumi}{10}

\item \textbf{For} rollout $i\in[N_{\mathrm{rollout}}]$

\item \hspace{0.25cm}
$\textsc{Rollout}(v_0,i,\SynDB)$
\hfill
\textcolor{blue}{$\rhd$ Explore + evaluate + update}

\item \textbf{EndFor}

\end{enumerate}

\vspace{1mm}
\textit{Final Selection}

\begin{enumerate}[nosep, label=\arabic*.]
\setcounter{enumi}{13}

\item $y^\star\gets\arg\max_y |C_y|$
\hfill
\textcolor{blue}{$\rhd$ Most supported answer}

\item $\OptQ\gets\arg\min_{q\in C_{y^\star}}\ResRS{q}$
\hfill
\textcolor{blue}{$\rhd$ Lowest-RS formulation}

\item \textbf{Return} $\OptQ$

\end{enumerate}

\end{mybox2}
\newpage
\begin{mybox2}{\algotitle{alg:dp-rollout}{\sysname: Rollout}}
\footnotesize
\setlength{\parindent}{0pt}

\textbf{Input:} root $v_0$, rollout $i$, synthetic database $\SynDB$

\vspace{1mm}
\textit{Search}

\begin{enumerate}[nosep, label=\arabic*.]

\item Select from $v_0$ using the UCT rule until reaching $v$ with
      $|\mathrm{children}(v)| < k\sqrt{N(v)}$
\hfill
\textcolor{blue}{$\rhd$ More visited nodes are allowed more branches}

\item Generate DP-compatible child actions
\hfill
\textcolor{blue}{$\rhd$ LLM expansion}

\item Exclude relation keys already used by siblings
\hfill
\textcolor{blue}{$\rhd$ Remove duplicate branches}

\item \textbf{For} each
$a=\ActAddRelation(r,J,F)$ producing prefix $u'$

\item \hspace{0.25cm}
      $\widehat q_{u'}\gets\ActStop(u')$
      \hfill
      \textcolor{blue}{$\rhd$ Complete current prefix}

\item \hspace{0.25cm}
      \textbf{If}
      $i\geq\PruneFrac N_{\mathrm{rollout}}$,
      containment holds, and
      $\ResRS{\widehat q_{u'}}>
      >\max_y\Incumbent{y}$

\item \hspace{0.5cm}
      Prune $u'$
      \hfill
      \textcolor{blue}{$\rhd$ Sensitivity-guided pruning}

\item \hspace{0.25cm} \textbf{EndIf}

\item \textbf{EndFor}

\item Continue until $\ActStop()$ produces terminal query $q$
and trajectory $\tau$

\end{enumerate}

\vspace{1mm}
\textit{Evaluation}

\begin{enumerate}[nosep, label=\arabic*.]
\setcounter{enumi}{10}

\item \textbf{If} $q$ is invalid, nonexecutable, or not DP-compatible

\item \hspace{0.25cm} Backpropagate $0$ and \textbf{Return}

\item \textbf{EndIf}

\item $y\gets\operatorname{bucket}(q(\SynDB))$;
      $C_y\gets C_y\cup\{q\}$
\hfill
\textcolor{blue}{$\rhd$ Execution cluster}

 \item Compute
$\displaystyle
\RewCorr{q}\gets
\frac{|C_{y(q)}|-1}{M-1}$
\hfill
\textcolor{blue}{$\rhd$ Reward execution agreement}

\item Compute $\ResRS{q}$ and cache
$(k^\star,\mathbf{s}^\star,\{T_E\})$

\item Compute $W(U)$ from cached
$(k^\star,\mathbf{s}^\star,\{T_E\})$
\hfill
\textcolor{blue}{$\rhd$ Relation-subset sensitivity}

\item \textbf{For} each relation $r\in V(q)$

\item \hspace{0.25cm}
$\displaystyle
\phi_r(q)\gets
\sum_{U\subseteq V\setminus\{r\}}
\frac{|U|!(|V|-|U|-1)!}{|V|!}
\big[W(U\cup\{r\})-W(U)\big]$
\hfill
\textcolor{blue}{$\rhd$ Average marginal contribution}

\item \textbf{EndFor}

\end{enumerate}

\vspace{1mm}
\textit{Backpropagation}

\begin{enumerate}[nosep, label=\arabic*.]
\setcounter{enumi}{20}

\item \textbf{For} each $(v_{t-1},a_t)\in\tau$

\item \hspace{0.25cm}
$G_t\gets
-\lambda_s\nu(\phi_{r_t})$
if $a_t=\ActAddRelation(r_t,J_t,F_t)$;
otherwise $G_t\gets0$
\hfill
\textcolor{blue}{$\rhd$ Sensitivity credit}

\item \hspace{0.25cm}
$Q(v_{t-1},a_t)
\gets
Q(v_{t-1},a_t)+\RewCorr{q}+G_t$

\item \hspace{0.25cm}
$N(v_{t-1},a_t)\gets N(v_{t-1},a_t)+1$

\item \hspace{0.25cm}
$N(v_{t-1})\gets N(v_{t-1})+1$

\item \textbf{EndFor}

\item Update $\Incumbent{y}$
\hfill
\textcolor{blue}{$\rhd$ Best formulation so far}

\end{enumerate}

\end{mybox2}
\subsection{Benchmarks}\label{app:benchmark}
\noindent\textbf{Differential Privacy-Aware Text2SQL Benchmark (\dpdataset{}).} Existing Text2SQL benchmarks do not identify the lowest-sensitivity formulation among correct alternatives, despite sensitivity variation being common: 54.5\% of DP-supported BIRD questions and 66.6\% of Ambrosia questions produce candidates with different sensitivities (App.~\ref{}). We construct \dpdataset{} by identifying alternative join paths from the foreign-key graphs of BIRD, Spider, Spider2-Lite, AmbiQT and EHRSQL, and use Claude 5 Sonnet to generate alternate SQL formulations and retain correct samples. We compute their RS, and define the minimum-sensitivity formulation as $q^\star$. The benchmark contains 3,673 training questions from BIRD (3,063), AmbiQT (486), and EHRSQL (124), and 886 disjoint test questions from BIRD (415), Spider (362), AmbiQT (64), Spider2-Lite (25), and EHRSQL (20). Alternative formulations differ in Residual Sensitivity by a median of $10.5\times$, demonstrating substantial opportunity for reducing DP noise through SQL formulation.
\begin{table}[]
\centering
\caption{\textbf{Ablation of \sysname{}'s search components} on XiYanSQL-32B.
Each row disables one component of the Full configuration in isolation, with AlphaSQL included as a reference.
$\Delta$ is the change in Joint relative to Full.}
\label{tab:ablation}
\setlength{\tabcolsep}{3.5pt}
\begin{tabular}{lrrrrr}
\toprule
Method & Acc. & Is-Min & Joint & $\Delta$ & Lat. \\
\midrule
Full (\sysname{})      & 81.8 & 76.6 & 75.3 & 0.0     & 159.2 \\
w/o warm-start         & 79.7 & 68.3 & 65.2 & $-10.1$ & 133.0 \\
w/o pruning            & 81.8 & 76.3 & 74.5 & $-0.8$  & 235.4 \\
w/o sensitivity prior  & 81.2 & 74.7 & 73.8 & $-1.5$  & 152.9 \\
w/o sensitivity reward & 81.0 & 66.1 & 64.7 & $-10.6$ & 178.7 \\
direct RS reward       & 81.0 & 66.4 & 64.6 & $-10.7$ & 172.1 \\
\bottomrule
\end{tabular}
\end{table}
\noindent\textbf{Contextual Integrity-Aware Text2SQL Benchmark
(\dataset{1}, \dataset{2}, \dataset{3}).}
No existing Text-to-SQL benchmark captures contextual-integrity policies. We therefore construct one by pairing three real policies with domain-matched Text-to-SQL datasets: HIPAA with 56 EHRSQL queries (healthcare privacy), CCPA with 75 BIRD Car Retail queries (consumer privacy), and COPPA with 21 BIRD Computer Student Records queries (children's online privacy), for 152 queries total. We construct the benchmark in three steps.

\noindent \textit{Policy extraction.}
We first use an LLM to extract CI norms specifying roles, information, purposes, conditions, and permit/prohibit decisions. Since prior work extensively studies this task~\cite{mireshghallah2024can,ghalebikesabi2025privacy,tsai2025contextual,yi2025privacy}, we treat extraction as preprocessing.

\noindent \textit{Policy grounding.}
We manually annotate ground-truth mappings from policy concepts to their database representations: roles to identity attributes, information types to data attributes, purposes to database values, and conditions to SQL predicates. Concepts absent from the database are marked unmapped. Table~\ref{tab:policy-schema-annotation} summarizes these annotations.

\begin{table}[t]
\centering
\small
\begin{tabular}{lrrrr}
\toprule
Policy & Roles & Data & Purposes & Conditions \\
\midrule
CCPA  & 21 & 31 & 31 & 5  \\
COPPA & 12 & 48 & 28 & 15 \\
HIPAA & 44 & 7  & 6  & 11 \\
\bottomrule
\end{tabular}
\caption{Ground-truth policy-to-schema mappings.}
\label{tab:policy-schema-annotation}
\end{table}

\noindent\textit{Query grounding.}
We sample concrete sender and querier identities and purposes from the database and policy to form runtime contexts. For each query-context pair, we use the mappings to identify applicable norms, evaluate their conditions for the queried subjects, and apply their permit/prohibit decisions to obtain the ground-truth authorized output. If no norm applies, release is denied.

This process produces 495,072 query-context pairs, from which we sample 150 per policy for a 450-instance benchmark. Following prior work~\cite{squid}, we also generate textual narrations of each database. The resulting healthcare, consumer-privacy, and children's-privacy datasets are denoted \dataset{1}, \dataset{2}, and \dataset{3}, respectively. We use \dpdataset{} for RQ1 and \dataset{1}--\dataset{3} for RQ2--RQ4.

\subsection{Evaluation (Cntd.)}\label{app:eval}

\subsubsection{RQ1 (Cntd.)}
\noindent\textbf{Noise reduction.}
Fig.~\ref{fig:noise-reduction} (App.~\ref{app:eval}) reports the geometric mean reduction in DP noise achieved by \sysname{}. Across models, \sysname{} achieves $4.1$--$9.6\times$ lower noise than zero-shot and sensitivity prompting, and $13.3$--$33.5\times$ lower noise than AlphaSQL. 
\\\noindent\textbf{Ablation.}
We ablate \sysname{}'s four search components on XiYanSQL-32B: warm start, pruning, sensitivity prior, and Shapley-credit reward. We also replace Shapley credit with a coarse residual-sensitivity bonus $\frac{1}{1+\log(\mathrm{RS})}$. Table~\ref{tab:ablation} shows that warm start and Shapley credit have the largest impact, reducing Joint by 10.1 and 10.6 points when removed. Pruning reduces latency by 32.4\%. Replacing Shapley credit with the coarse RS bonus reduces Joint by 10.7 points, performing even worse than removing the sensitivity reward entirely.

\noindent\textbf{Cost.}
Table~\ref{tab:inference_cost} reports the token cost of \sysname{} vs AlphaSQL. \sysname{} is more efficient than AlphaSQL, using $2.6$--$5.1\times$ fewer tokens across models.
\begin{figure}[]
\centering
\begin{tikzpicture}
\begin{axis}[
    width=0.42\columnwidth,
    height=3.2cm,
    scale only axis,
    enlargelimits=false,
    axis on top,
    xtick={0,1,2},
    xticklabels={
        Zero-shot,
        {Sens.\\Prompt},
        AlphaSQL
    },
    ytick={0,1,2},
    yticklabels={
        {Phi-4},
        {GPT-5.6\\Luna},
        {XiYanSQL\\32B}
    },
    x tick label style={
        font=\scriptsize,
        align=center
    },
    y tick label style={
        font=\scriptsize,
        align=center
    },
    tick style={draw=none},
    point meta min=0,
    point meta max=35,
    colormap={softblue}{
        color(0cm)=(white);
        color(0.5cm)=(blue!12);
        color(1cm)=(blue!45)
    },
    colorbar,
    colorbar style={
        ylabel={Noise reduction},
        ylabel style={font=\scriptsize},
        yticklabel style={font=\tiny},
        width=0.08cm,
    },
]

\addplot[
    matrix plot*,
    mesh/cols=3,
    point meta=explicit,
] coordinates {
    (0,0) [9.42]
    (1,0) [9.57]
    (2,0) [30.34]

    (0,1) [4.26]
    (1,1) [4.13]
    (2,1) [13.33]

    (0,2) [7.18]
    (1,2) [4.63]
    (2,2) [33.47]
};

\node[font=\scriptsize\bfseries] at (axis cs:0,0) {9.4$\times$};
\node[font=\scriptsize\bfseries] at (axis cs:1,0) {9.6$\times$};
\node[font=\scriptsize\bfseries] at (axis cs:2,0) {30.3$\times$};

\node[font=\scriptsize\bfseries] at (axis cs:0,1) {4.3$\times$};
\node[font=\scriptsize\bfseries] at (axis cs:1,1) {4.1$\times$};
\node[font=\scriptsize\bfseries] at (axis cs:2,1) {13.3$\times$};

\node[font=\scriptsize\bfseries] at (axis cs:0,2) {7.2$\times$};
\node[font=\scriptsize\bfseries] at (axis cs:1,2) {4.6$\times$};
\node[font=\scriptsize\bfseries] at (axis cs:2,2) {33.5$\times$};

\end{axis}
\end{tikzpicture}
\caption{Geometric mean reduction in DP noise relative to each baseline.}
\label{fig:noise-reduction}
\end{figure}
\begin{table}[]
\centering
\caption{\textbf{Inference cost of search-based Text2SQL methods.}
We report the average number of model calls and total tokens consumed per
question.}
\label{tab:inference_cost}
\begin{tabular}{llrr}
\toprule
Model & Method & Calls / Q. & Tokens / Q. \\
\midrule
GPT-5.6-Luna
    & AlphaSQL & 24.9 & 40.1K \\
    & \sysname{} & \textbf{6.6} & \textbf{11.2K} \\
\midrule
XiYanSQL-32B
    & AlphaSQL & 24.9 & 19.6K \\
    & \sysname{} & \textbf{6.6} & \textbf{3.8K} \\
\midrule
Phi-4
    & AlphaSQL & 24.9 & 11.4K \\
    & \sysname{} & \textbf{6.6} & \textbf{4.4K} \\
\bottomrule
\end{tabular}
\end{table}

\end{document}